\xpatchcmd{\@ssect@ltx}{\@xsect}{\protected@edef\@currentlabelname{#8}\@xsect}{}{}
\xpatchcmd{\@sect@ltx}{\@xsect}{\protected@edef\@currentlabelname{#8}\@xsect}{}{}
\newcommand{\SU}{\operatorname{SU}}
\newcommand\End{\operatorname{L}}
\newcommand{\e}{\operatorname{e}}
\renewcommand{\i}{\mathrm{i}}
\renewcommand{\P}{\mathbf{P}}
\newcommand\TopRule{\Xhline{0.08em}}
\newcommand\MidRule{\Xhline{0.03em}}
\newcommand\BotRule{\Xhline{0.08em}}
\renewcommand\onecolumngrid{%
  \do@columngrid{one}{\@ne}%
  \def\set@footnotewidth{\onecolumngrid}%
  \def\footnoterule{\kern-6pt\hrule width 1.5in\kern6pt}%
}
\newlength\dlf 
\newtheorem{theorem}{Theorem}
\newtheorem{corollary}{Corollary}
\newtheorem{lemma}{Lemma}
\newtheorem*{lemma*}{Lemma}
\newtheorem{proposition}{Proposition}
\theoremstyle{remark}
\newtheorem{remark}{Remark}
\newcommand{\bes} {\begin{subequations}}
\newcommand{\ees} {\end{subequations}}
\newcommand{\bea} {\begin{eqnarray}}
\newcommand{\eea} {\end{eqnarray}}
\newcommand{\be} {\begin{equation}}
\newcommand{\ee} {\end{equation}}
\def\>{\rangle}
\def\<{\langle}
\def\Tr{\operatorname{Tr}}
\newcommand{\ignore}[1]{}
\newcommand{\complex}{\mathbb{C}}
\newcommand{\identity}{\mathbb{I}}
\crefname{section}{Sec.}{Secs.}
\crefname{claim}{Claim}{Claims}
\begin{document}
\title{Rotationally-Invariant  Circuits: Universality with  the exchange interaction and two ancilla qubits}
\author{Iman Marvian}
\affiliation{Departments of Physics, Duke University, Durham, NC 27708, USA}
\affiliation{Department of Electrical and Computer Engineering, Duke University, Durham, NC 27708, USA}
\author{Hanqing Liu}
\affiliation{Departments of Physics, Duke University, Durham, NC 27708, USA}
\author{Austin Hulse}
\affiliation{Departments of Physics, Duke University, Durham, NC 27708, USA}

\begin{abstract}
Universality of local unitary transformations is one of the cornerstones of quantum computing with many   applications and implications that go beyond this field. However,  it has been recently shown that this universality does not hold in the presence of continuous symmetries: generic symmetric unitaries on a composite system cannot be implemented, even approximately,   
using local symmetric unitaries on the subsystems [I. Marvian, Nature Physics (2022)].  In this work, we study qubit circuits formed from k-local rotationally-invariant  unitaries and  fully characterize the constraints imposed by locality on the realizable unitaries.
 We also present an interpretation of these  constraints in terms of the  average energy of states with a fixed angular momentum.    
Interestingly,  despite  these constraints, we show that,  using a pair of ancilla qubits, any rotationally-invariant unitary  can be realized with the Heisenberg exchange interaction, which is 2-local and rotationally-invariant.    We also show that a single ancilla is not enough to achieve universality. 
Finally, we discuss  applications of these  results for quantum computing with semiconductor quantum dots,   quantum reference frames, and resource theories.

  \end{abstract}

\maketitle

\noindent\emph{Introduction---}    Symmetric quantum circuits 
capture three basic and ubiquitous properties of quantum systems, namely unitarity, locality and symmetry. Hence, in addition to their applications in quantum computing (e.g., in   \cite{DiVincenzo:2000kx, Bacon:2000qf}) they also arise in other areas of quantum physics, from quantum thermodynamics \cite{FundLimitsNature, brandao2013resource, janzing2000thermodynamic,  lostaglio2015quantumPRX, halpern2016microcanonical, guryanova2016thermodynamics, chitambar2019quantum, lostaglio2017thermodynamic, halpern2016microcanonical, guryanova2016thermodynamics} and  reference frames \cite{QRF_BRS_07}, to  quantum chaos \cite{khemani2018operator} and classification of phases of matter  \cite{chen2010local, chen2011classification}.   
 In the absence of symmetries, it is well-known that  2-local unitaries are universal, that is,  any unitary transformation on a system of qubits  can be generated  by a sequence of unitaries acting on pairs of qubits \cite{divincenzo1995two,  lloyd1995almost, deutsch1995universality}.    
   However, perhaps surprisingly, it turns out that   this universality does not hold in the presence of symmetries: One of us has recently shown  that in the case of continuous symmetries, such as SU(2),  generic symmetric unitaries cannot be implemented, even approximately,  using local symmetric unitaries \cite{marvian2022restrictions, alhambra2022forbidden}. 

 In this Letter we study this phenomenon for qubit systems with rotational SU(2) symmetry. First, we fully characterize the constraints imposed by locality on  the realizable unitaries and explain how the violation of these constraints  can be experimentally observed. As argued in \cite{marvian2022restrictions}, this leads to a  method for probing the locality of interactions in nature.  
We also introduce a physical interpretation of these constraints in terms of the average energy of states with a fixed angular momentum.  
 
 Secondly, we find a way to circumvent  the restrictions imposed by locality: We show that any rotationally-invariant  unitary can be implemented using the Heisenberg exchange interaction, provided that the qubits in the system can interact with  a pair of ancillary qubits. It should be noted that the seminal works of Bacon and DiVincenzo, et. al  \cite{Bacon:2000qf, DiVincenzo:2000kx}
   have already established the   
 \emph{encoded} universality \cite{Bacon:Sydney, Kempe:01} of the exchange interaction for quantum computing (See also  
  \cite{kempe2001theory, bacon2001coherence, levy2002universal, rudolph2005relational}).   That is, they show that for certain encodings of quantum information in spin-half systems universal quantum computation can be achieved  using the exchange interaction  alone  (e.g., in \cite{DiVincenzo:2000kx} each qubit is encoded in 3 spins). We, on the other hand, establish a stronger notion of universality, namely we show that  \emph{all} rotationally-invariant unitaries can be implemented using the exchange interaction and a pair of ancilla qubits.

This result has direct applications for the spin qubit approach to quantum technology, where the quantum information is encoded and manipulated in spin-half  particles, such as single electrons confined to semiconductor quantum dots \cite{loss1998quantum}.  With recent advances of silicon quantum dots, voltage controllable  exchange interaction   
can be readily realized  in such systems (See, e.g., 
\cite{malinowski2019fast}).    As an example of applications, we present a scheme for implementing the family of unitaries generated by the multi-qubit swap Hamiltonian. This family has found extensive applications, e.g., for performing density matrix exponentiation \cite{lloyd2014quantum,  marvian2016universal, kimmel2017hamiltonian, pichler2016measurement, kjaergaard2022demonstration}, a key subroutine used in various quantum algorithms and protocols \cite{lloyd2014quantum, pichler2016measurement, marvian2016universal,kimmel2017hamiltonian}.  
  Although these unitaries are rotationally-invariant, 
  they cannot be implemented using local gates that respect this symmetry. We show how this can be circumvented using ancilla qubits.   Generally, implementing a desired rotationally-invariant unitary with local gates that also respect this symmetry  suppresses certain types of errors and results in better fault tolerance \cite{Bacon:2000qf, DiVincenzo:2000kx}.  
      At the end of the paper, we discuss  other applications of this result in the areas of quantum thermodynamics and quantum reference frames.\\

\noindent \emph{Preliminaries---}  Consider a system with $n$ spin-half subsystems (qubits) with the total Hilbert space $(\mathbb{C}^2)^{\otimes n}$. We say an operator $A$ on this system is rotationally-invariant, or \emph{symmetric},  if $ U^{\otimes n} A {U^\dag}^{\otimes n}=A$ for all single-qubit unitaries $U$, or, equivalently,  if  $[A, J_v]=0$ for $v= x, y, z,$  
where $J_v=\frac{1}{2}\sum_{i=1}^n \sigma^{(v)}_i$ is the angular momentum operator in $v$ direction, and $\sigma^{(v)}_i$ is a Pauli operator on qubit $i$, tensor product with the identity operators on the rest of qubits.   
 Consider quantum circuits formed from $k$-local rotationally-invariant unitaries, i.e., unitaries that act non-trivially on, \emph{at most}, $k$ qubits.  Let $\mathcal{V}_{k}$ be the group generated by composing finitely many such unitaries.  In particular, $\mathcal{V}_{n}$ is the group of all rotationally-invariant unitaries.  It can be shown that
 $\mathcal{V}_{k}$ is a compact connected Lie subgroup of $\mathcal{V}_{n}$ \cite{marvian2022restrictions}. Our 
goal here is to characterize $\mathcal{V}_{k}$, or equivalently, to determine the constraints imposed by locality on realizable unitaries. Group $\mathcal{V}_{k}$ can be equivalently defined as the family of unitaries generated by  rotationally-invariant Hamiltonians that can be written as a sum of  $k$-local terms \cite{marvian2022restrictions}.  
 Therefore,  characterizing $\mathcal{V}_{k}$ also determines general constraints on the time evolutions generated by such Hamiltonians. Special cases of interest are Hamiltonians that can be realized by tunable exchange interactions, i.e.,  $H(t)=\sum_{r<s} h_{rs}(t)\ R_{rs}$, where  $R_{rs}=({\vec{\sigma}_r\cdot \vec{\sigma}_{s}})/2=\sum_{v=x,y,z}\sigma^{(v)}_r\sigma^{(v)}_s/2$  is the exchange interaction between qubits $r$ and $s$, and $h_{rs}$ is an arbitrary real function of time.  Any unitary generated by such  Hamiltonians is in $\mathcal{V}_{2}$. Conversely, since any rotationally-invariant operator on a pair of qubits is a linear combination of the identity operator and  $\vec{\sigma}\cdot \vec{\sigma}$,   any unitary in  $\mathcal{V}_{2}$ can be realized by the exchange interaction,  up to a global phase.

To study this problem we use a Lie-algebraic approach, which has been previously used to study universality in the absence of symmetries \cite{divincenzo1995two, lloyd1995almost, brylinski2002universal, childs2010characterization,  zanardi2004universal, giorda2003universal,  Bacon:2000qf, lidar1998decoherence}. Suppose one can implement Hamiltonians $\sum_j a_j(t) A_j$,
where $\{a_j\}$ are arbitrary real functions of time and $\{A_j\}$ are Hermitian operators. Using this family of Hamiltonians one can implement unitaries $\exp({-i B t })$ for all time $t$, if and only if the skew-Hermitian operator $i B$ is in the real Lie algebra generated $\{i A_j\}$, that is, it can be written as a linear combination of  $\{i A_j\}$ and their (nested) commutators with real coefficients  \cite{d2007introduction, jurdjevic1972control}.\\

\noindent\emph{Characterizing symmetric  unitaries---} Under the action of rotations, the Hilbert space of $n$ qubits  decomposes to sectors  with different total angular momenta,  which  label inequivalent irreducible representations (irreps) of SU(2).  The total squared angular momentum $J^2=J^2_x+J^2_y+J_z^2$, also known as the Casimir operator, has eigenvalues $j(j+1)$ with $j=j_{\min}, j_{\min}+1,\cdots,  j_{\max}$, where $j_{\max}=n/2$, and $j_{\min}=0,1/2$,
 for even and odd $n$, respectively.  
As reviewed in  Appendix \ref{App:A},  angular momentum $j$ 
corresponds to an irrep with dimension $2j+1$, which appears with the multiplicity \be\label{eq:hook_length}
m(n,j)={\binom{n}{\frac{n}{2}-j}}\times \frac{2j+1}{\frac{n}{2}+j+1}\ .
\ee
Then,  the total Hilbert space decomposes as 
$(\mathbb{C}^2)^{\otimes n} \cong \bigoplus_{j=j_\text{min}}^{j_\text{max}} \mathcal{H}_j\cong \bigoplus_{j=j_\text{min}}^{j_\text{max}} \mathbb{C}^{2j+1} \otimes  \mathbb{C}^{m(n,j)} $. Here, $\mathcal{H}_j$ is the eigen-subspace of $J^2$ with eigenvalue $j(j+1)$, also known as the subspace of states with angular  momentum $j$,  and $\mathbb{C}^{m(n,j)}$ is the  multiplicity subsystem, where $\SU(2)$ acts trivially \cite{QRF_BRS_07, zanardi2001virtual}.     
Using this decomposition together with  Schur's  lemmas one can characterize  rotationally-invariant unitaries \cite{QRF_BRS_07}: they are  block-diagonal with respect to $\{\mathcal{H}_j\}$  and act trivially on the irreps of SU(2). That is,  unitary $V\in\mathcal{V}_n$ if, and only if, it can be decomposed  as $V\cong\bigoplus_j (\mathbb{I}_{2j+1} \otimes v_j)$, where $ \mathbb{I}_{2j+1}$ is the identity operator on $(2j+1)$-dimensional irrep of SU(2),  and $v_j$ is an arbitrary unitary on  $\mathbb{C}^{m(n,j)}$.   \\

\noindent\emph{Constraints imposed by locality---}  
 In the case of qubit systems with SU(2) symmetry the restrictions imposed by locality  are limited to constraints on the relative phases between the subspaces with different irreps of symmetry  (Interestingly, in the case of qudits with SU($d$) symmetry for $d\ge 3$, there are stronger constraints \cite{Marvian2021qudit}).  In terms of realizable Hamiltonians, as stated in theorem \ref{newThm} below, this amounts to constraints on the inner products of the Hamiltonian with the projectors to subspaces $\{\mathcal{H}_j\}$, denoted by  $\{\Pi_j: j=j_\text{min},\cdots, j_\text{max}\}$. 
The space spanned by these projectors, denoted by $\mathcal{C}$,  is  the space of operators that are invariant  under all rotations and permutations, which has dimension $\lfloor n/2\rfloor+1$ (See Appendix \ref{App}).   To express the constraints imposed by locality it is useful to introduce another basis for $\mathcal{C}$, namely operators $\{C_l\}$  labeled by  even integers $l$, where $C_0$ is the identity operator and for $l=2, 4,\cdots,2\lfloor{{n}/{2}}\rfloor$, 
\be\label{defCl}
C_l\equiv \frac{1}{(l/2)!} \hspace{1mm}\sum_{i_1\neq \cdots \neq i_{l}} \hspace{-3mm} R_{i_1,i_2}\cdots\cdot R_{i_{l-1},i_{l}}  =\sum_{j=j_\text{min}}^{j_\text{max}} c_{l}(j) \ \Pi_j\ ,
\ee
where in the first summation $i_1, i_2, \cdots, i_{l}$ are $l$ distinct integers between 1 to $n$.  In the second summation 
\be\label{clj1}
c_{l}(j) =  \frac{l!}{2^{l/2}(l/2)!}{{n}\choose{l}}  \sum_{r=0}^{l/2} (-4)^{r}  {{l/2}\choose{r}}   \frac{m(n-2r,j)}{m(n,j)}
 \ee 
is the eigenvalue of $C_l$ in the subspace $\mathcal{H}_j$ (we use the convention  $m(a,b)=0$ for $b>a/2$).   
These Hermitian operators satisfy the following crucial properties, which are shown in  Appendix \ref{App}: First, although it is not clear from the above formula, $c_{l}(j)$  is an integer-valued polynomial of degree $ l/2$ of $j(j+1)$. This property becomes  relevant later in Eqs.(\ref{avg},\ref{ary4}).   For instance, for $l=2$ we obtain $c_{2}(j)=2j(j+1)-3n/2$ and $C_2=2\sum_{i_1<i_2} R_{i_1, i_2}$, which     is the Hamiltonian with equal exchange interactions between all pairs of qubits.   Table \ref{default} shows integers  $\{c_l(j)\}$ for $n=10$ qubits.  The second  property, which explains the  labeling of these operators with integers $l=0, 2, \cdots $,  is  that $C_l$ is a sum of $l$-local operators and is orthogonal to $k$-local operators with $k<l$.  
Indeed, Hermitian operators $\{C_l\}$ form an orthogonal basis for $\mathcal{C}$, i.e., $\Tr(C_lC_{l'})=\delta_{l,l'} \Tr(C_l^2)$, which means  
  $\sum_j \Tr(\Pi_j) c_l(j) c_{l'}(j)=\delta_{l,l'}\Tr(C^2_l)$.  Using this basis, the restrictions imposed by locality find  a simple form.

 \begin{theorem}\label{newThm}
For a system with $n$ qubits, the family of unitary evolutions $\exp({-i H t}): t\in\mathbb{R}$ generated by a rotationally-invariant Hamiltonian $H$   
 can be implemented using $k$-local rotationally-invariant unitaries with $k\ge 2$, 
if, and only if,  for all even integers $l=2\lfloor{{k}/{2}}\rfloor+2,\cdots,  2\lfloor{{n}/{2}}\rfloor$, 
it holds that
\begin{align}\label{const3}
\Tr(H C_l)&=\sum_{j=j_\text{min}}^{j_\text{max}} c_l(j)\Tr(H\Pi_j)=0\ .
\end{align}
\end{theorem}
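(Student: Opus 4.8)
\noindent\emph{Proof sketch (plan).---}
The plan is to recast the statement as a question about Lie algebras and then split the relevant algebra into a ``traceless'' part and a ``central'' part that lives inside $\mathcal{C}$. By the equivalent description of $\mathcal{V}_k$ recalled in the Preliminaries and the control-theoretic fact quoted there \cite{d2007introduction, jurdjevic1972control}, the family $\{\exp(-iHt): t\in\mathbb{R}\}$ is implementable with $k$-local rotationally-invariant unitaries if and only if $iH$ lies in the real Lie algebra $\mathfrak{v}_k$ generated by $\{iA : A \text{ is a } k\text{-local rotationally-invariant Hermitian operator}\}$. So the task is to characterize $\mathfrak{v}_k$ as a subspace of $\mathfrak{v}_n$. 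Using the isotypic decomposition $(\mathbb{C}^2)^{\otimes n}\cong\bigoplus_j \mathbb{C}^{2j+1}\otimes\mathbb{C}^{m(n,j)}$ and Schur's lemma, $\mathfrak{v}_n\cong\bigoplus_j \mathfrak{u}(m(n,j))$ acting on the multiplicity spaces; orthogonally (in the Hilbert--Schmidt inner product) write $\mathfrak{v}_n=\mathfrak{s}\oplus\mathfrak{z}$ with $\mathfrak{s}=\bigoplus_j\mathfrak{su}(m(n,j))$ and $\mathfrak{z}=i\,\mathcal{C}=\bigoplus_j\mathbb{R}\,i\Pi_j$ (the center).

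The argument then rests on two claims. \textbf{Claim A} is that $\mathfrak{s}\subseteq\mathfrak{v}_2\subseteq\mathfrak{v}_k$: the exchange interaction alone generates \emph{every} traceless operator on \emph{every} multiplicity space. I expect this to be the main obstacle. By Schur--Weyl duality, the \emph{associative} $*$-algebra generated by the exchange operators $\{R_{rs}\}$ (equivalently by the qubit permutations, since $R_{rs}$ differs from the SWAP by a multiple of the identity) is the full commutant $\bigoplus_j\operatorname{L}(\mathbb{C}^{m(n,j)})$ of $\mathrm{SU}(2)$; the technical point, carried out in the Appendix, is to upgrade this to the statement that the \emph{Lie} algebra generated by $\{iR_{rs}\}$ already contains $\mathfrak{su}(m(n,j))$ on each block (ruling out the orthogonal and symplectic possibilities). \textbf{Claim B} is that $\mathfrak{v}_k\cap\mathfrak{z}=i\,\mathrm{span}\{C_0,C_2,\dots,C_{2\lfloor k/2\rfloor}\}$. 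For ``$\supseteq$'': in Eq.~\eqref{defCl} each term $R_{i_1i_2}\cdots R_{i_{l-1}i_l}$ is supported on the $l$ distinct qubits $i_1,\dots,i_l$, is rotationally-invariant, and is Hermitian (its factors act on disjoint pairs and so commute), hence for $l\le 2\lfloor k/2\rfloor\le k$ it is a $k$-local rotationally-invariant Hermitian operator; therefore $iC_l\in\mathfrak{v}_k$ by linearity, and $iC_l\in\mathfrak{z}$ since $C_l\in\mathcal{C}$. For ``$\subseteq$'': let $P$ be the orthogonal projection of $\mathfrak{v}_n$ onto $\mathfrak{z}$. A commutator of block-diagonal operators is traceless on every block, so $P$ annihilates every nested commutator of generators, while $P(iA)=i\,\Pi_{\mathcal{C}}(A)$ for a generator $A$, where $\Pi_{\mathcal{C}}$ is the Hilbert--Schmidt projection onto $\mathcal{C}$. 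Since $\{C_l\}$ is an orthogonal basis of $\mathcal{C}$ and $C_l$ is orthogonal to all $k$-local operators for $l>2\lfloor k/2\rfloor$, we get $\Pi_{\mathcal{C}}(A)\in\mathrm{span}\{C_0,\dots,C_{2\lfloor k/2\rfloor}\}$; hence $P(\mathfrak{v}_k)$ lies in that span, and because $P$ restricts to the identity on $\mathfrak{z}$ we conclude $\mathfrak{v}_k\cap\mathfrak{z}=P(\mathfrak{v}_k\cap\mathfrak{z})\subseteq P(\mathfrak{v}_k)$ lies in it too.

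Putting the pieces together, Claims A and B give $\mathfrak{v}_k=\mathfrak{s}\oplus i\,\mathrm{span}\{C_0,\dots,C_{2\lfloor k/2\rfloor}\}$. Decompose $iH=(iH)_\mathfrak{s}+iH_\mathcal{C}$, where $H_\mathcal{C}=\Pi_{\mathcal{C}}(H)=\sum_j\frac{\Tr(H\Pi_j)}{\Tr(\Pi_j)}\Pi_j\in\mathcal{C}$. Since $(iH)_\mathfrak{s}\in\mathfrak{s}\subseteq\mathfrak{v}_k$ always and the decomposition $\mathfrak{s}\oplus\mathfrak{z}$ is direct, $iH\in\mathfrak{v}_k$ if and only if $iH_\mathcal{C}\in\mathfrak{v}_k\cap\mathfrak{z}$, i.e. $H_\mathcal{C}\in\mathrm{span}\{C_0,\dots,C_{2\lfloor k/2\rfloor}\}$. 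As $\{C_l\}_{l=0}^{2\lfloor n/2\rfloor}$ is an orthogonal basis of $\mathcal{C}$, this holds if and only if $\Tr(H_\mathcal{C}\,C_l)=0$ for all even $l=2\lfloor k/2\rfloor+2,\dots,2\lfloor n/2\rfloor$. Finally $\Tr(H\,C_l)=\Tr(H_\mathcal{C}\,C_l)$ because $H-H_\mathcal{C}$ is Hilbert--Schmidt orthogonal to $C_l\in\mathcal{C}$, and $\Tr(H\,C_l)=\sum_j c_l(j)\Tr(H\Pi_j)$ by the second equality of Eq.~\eqref{defCl}; this is precisely condition \eqref{const3}. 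The non-trivial input is Claim A, the exchange-interaction universality on the multiplicity spaces; Claim B and the remaining bookkeeping are elementary once the operators $C_l$ and their orthogonality/locality properties are available.
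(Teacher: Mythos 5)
Your proposal is correct and takes essentially the same route as the paper: the orthogonal splitting $\mathfrak{v}_n=\mathfrak{sv}_n\oplus i\,\mathcal{C}$, with your Claim~A being the paper's Theorem~\ref{Thm1} (that the exchange interaction already generates all of $\bigoplus_j\mathfrak{su}(m(n,j))$, proved there by induction on $n$ or via Marin's theorem) and your Claim~B being Proposition~\ref{prop:center} identifying the center of $\mathfrak{v}_k$ with $\mathrm{span}\{iC_l: l\le 2\lfloor k/2\rfloor\}$ through the same orthogonality/locality bookkeeping for the $C_l$. The one caution concerns Claim~A, which you rightly flag as the main obstacle: beyond ruling out orthogonal and symplectic subalgebras \emph{within} each multiplicity block, one must also rule out diagonal correlations \emph{between} different $j$-blocks (a priori the generated Lie algebra could be a proper subalgebra surjecting onto each $\mathfrak{su}(m(n,j))$ without being the full direct sum); the paper's Lemma~\ref{lemma-main} is devoted precisely to establishing this independence.
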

Hence, the constraints imposed by locality can be  expressed in terms of the vector defined by  $\Tr(H\Pi_j): j=j_\text{min}, \cdots, j_\text{max}$, which in \cite{marvian2022restrictions} is called 
the \emph{charge vector} of $H$. This theorem in particular implies that  unitaries generated by Hamiltonian 
$H$ can be realized (up to a global phase) via the exchange interaction if,  and only if, Eq.(\ref{const3}) holds for even integers $l=4,\cdots  , 2\lfloor{{n}/{2}}\rfloor$. For instance, for a system with $n=4$ qubits,  this amounts to the condition $15 \Tr(H\Pi_{0})-5 \Tr(H\Pi_{1})+3 \Tr(H\Pi_{2})=0$, which, in particular, excludes $H=R_{12}R_{34}$ (See Appendix \ref{App}).  
 \begin{table}[htp]
\begin{center}
\begin{tabular}{|c|c|c|c|c|c|c|}
\cline{2-7}
\multicolumn{1}{c|}{}& \multicolumn{6}{|c|}{Angular Momentum} 
\\ \hline
   &$j=0$\ & \ $j=1$\ &\  $j=2$\ &\  $j=3$\ &\  $j=4$\ &\  $j=5$\ \  \\
\hline
$l=0$ \text{body}  &1 & 1 & 1& 1& 1& 1\\
\hline
$l=2$ \text{body}  &-15 & -11 & -3& 9& 25& 45\\
\hline
$l=4$ \text{body}   &150 & 70 & -42& -90& 70& 630\\
\hline
$l=6$ \text{body}   &-1050 & -210 & 462 & -90 & -1050 & 3150\\ \hline
$l=8$ \text{body}&4725 & -315& -1323& 2565& -3675 & 4725\\ \hline
$l=10$ \text{body}&-10395 &  3465 & -2079& 1485 & -1155& 945\\
\hline
\end{tabular}
\end{center}
\caption{Integers $\{c_l(j)\}$, which are the eigenvalues of operators $\{C_l\}$, for   a system with $n=10$ qubits. Up to a normalization,  $c_l(j)$ is the average energy of states with angular momentum $j$, under $l$-body interactions (See Eq.\ref{avg}). $c_l(j)$ is an integer-valued polynomial of degree $l/2$ of $j(j+1)$.  }
\label{default}
\end{table}%
 
 As we explain in Appendix \ref{App:proofThem}, the necessity of these conditions follows from the fact that for the above values of $l$,  $\{C_l\}$ are orthogonal to $k$-local operators and commute with rotationally-invariant operators.    To show the converse statement, first note that any Hamiltonian $H$ decomposes as    
\be\label{dec-22}
H=H_0+\sum_j \frac{\Tr(H \Pi_j)}{\Tr(\Pi_j)}\ \Pi_j=H_0+\sum_l \frac{\Tr(H C_l)}{\Tr(C^2_l)}\ C_l\ ,
\ee
where $H_0$ is orthogonal to $\mathcal{C}$, i.e.,  $\Tr(H_0\Pi_j)=\Tr(H_0 C_l)=0$, for all $l=0, 2, \cdots , 2\lfloor{{n}/{2}}\rfloor$ and all $j=j_\text{min},\cdots, j_\text{max}$.  Using  rather elementary  techniques, in Appendix \ref{App:proofThem}   we show that any Hamiltonian $H_0$ satisfying these constraints can be realized using the exchange  interaction,  which is 2-local and rotationally-invariant (We note that this  can also be shown using  more advanced results in the mathematical literature, namely the result of Marin that finds 
      the decomposition into simple factors of the Lie algebra generated by transpositions  \cite{marin2007algebre}). Furthermore,  since for $l\le k$  operator $C_l$ can be written as a sum of $k$-local symmetric Hermitian operators,   Hamiltonian $C_l$ can be realized by such Hamiltonians. Combining these facts we find  that any Hamiltonian satisfying the conditions in theorem \ref{newThm} can be implemented using $k$-local  symmetric Hamiltonians. Note that this result implies that 
for any symmetric unitary $V$, there exists $\theta_j\in[-\pi,\pi)$, such that the unitary $V[\sum_j \e^{\i \theta_j} \Pi_j]$ can be realized using the exchange interaction.

Eq.(\ref{const3}) imposes $ \lfloor{{n}/{2}}\rfloor-\lfloor{{k}/{2}}\rfloor$ independent constraints on the manifold of realizable Hamiltonians.    
 It follows that the difference between the dimensions of the group of all symmetric unitaries and the subgroup generated by $k$-local  symmetric unitaries is $\lfloor{{n}/{2}}\rfloor- \lfloor{{k}/{2}}\rfloor$,  matching  the general lower bound  in \cite{marvian2022restrictions}. In fact, in Appendix \ref{App:dim} we show that  for $k\ge 2$, 
  \be\label{dim}
  \dim(\mathcal{V}_{k}) =\frac{1}{n+1}\binom{2n}{n} -\lfloor{\frac{n}{2}}\rfloor+ \lfloor{\frac{k}{2}}\rfloor\ \ ,
  \ee
  where $\frac{1}{n+1}\binom{2n}{n}$ is the $n$th Catalan number.     Thus, unless $ \lfloor{{k}/{2}}\rfloor=\lfloor{{n}/{2}}\rfloor$,    $k$-local symmetric unitaries are not universal.   Interestingly,   universality can be achieved with $(n-1)$-local symmetric unitaries if $n$ (the number of qubits) is odd, whereas such unitaries are not universal for even $n$. In Appendix \ref{App:TimeReversal} we discuss more about this even/odd effect and its connection with the time-reversal symmetry.  Finally, note that since  $\mathcal{V}_k$ is compact \cite{marvian2022restrictions},  if a symmetric unitary $V$  does not belong to $\mathcal{V}_{k}$, then there is a neighborhood of unitaries  around $V$, none of which  can be implemented using $k$-local symmetric  unitaries.   \\

  In the rest of the paper we present three independent results about the constraints in Eq.(\ref{const3}): First,
we discuss a physical interpretation of them, then we explain  how their violations can be experimentally observed, and finally we show how they can be circumvented using ancillary qubits.\\

\noindent\emph{Average energy for a fixed angular momentum---} To understand the  constraints in Eq.(\ref{const3}) better, we consider the average energy of states with angular momentum $j$, i.e., the expectation value of Hamiltonian for the maximally-mixed state over $\mathcal{H}_j$. In Appendix  \ref{App:avg} we show that if a general (possibly symmetry-breaking) Hamiltonian $H$ can be written as a sum of $k$-local terms then this average energy is 
 \be\label{avg}
 E_j\equiv \frac{\Tr(\Pi_j H)}{\Tr(\Pi_j)}=\sum_{l=0}^{2\lfloor{{k}/{2}}\rfloor} q_l \times c_l(j) \ ,
 \ee 
where  $q_l=\Tr(H C_l)/\Tr(C_l^2)$, and the summation is over even integers.   The right-hand side is a polynomial of degree $\lfloor{{k}/{2}}\rfloor$ of $j(j+1)$, and it can be an arbitrary function of $j=j_\text{min}, \cdots, n/2$ if, and only if, $\lfloor{{k}/{2}}\rfloor=\lfloor{{n}/{2}}\rfloor$.   For instance, for $k=2$, $E_j=q_0+ q_2[2j(j+1)-{3n}/{2}]$. Therefore,  in general (regardless of symmetry) the locality of interactions imposes strong constraints on the form of dependence of the average energies $\{E_j\}$ to the  angular momentum $j$ (it should be quadratic in this example). 
 Now,  in the presence of rotational symmetry, due to the conservation of angular momentum,  these constraints leave an observable effect  on the realized unitaries (see the next section). Furthermore, the constraints hold more generally,  including for non-local Hamiltonians that can be realized by local ones:  for Hamiltonian $H$, if for all $t\in\mathbb{R}$,  $\exp(-i H t)\in \mathcal{V}_k$ (which means 
it can be implemented by $k$-local symmetric  unitaries) then the average energies $\{E_j\}$ satisfy Eq.(\ref{avg}) for some  $\{q_l\in\mathbb{R}\}$. This can be shown directly, using the properties of  operators $\{C_l\}$, or via Eq.(\ref{const3}) together with the orthogonality relation $\sum_j \Tr(\Pi_j) c_l(j) c_{l'}(j)=\delta_{l,l'} \Tr(C_l^2)$.  Conversely, applying this relation together with  Eq.(\ref{avg}) one obtains Eq.(\ref{const3}).  We present further details in Appendix \ref{App:avg}.  Incidentally, the fact that the average energies in \cref{avg}, and the constraints in \cref{const3}, depend only on the $l$-body properties of $H$ for even $l$ can be understood in terms of the time-reversal symmetry of the squared angular momentum $J^2$ (see \cref{App:TimeReversal}).\\

\noindent \emph{$l$-body phases--}  In \cite{marvian2022restrictions} one of us introduced the idea of $l$-body phases for U(1)-invariant unitaries, which express the  constraints  imposed by locality in terms of experimentally observable quantities. Using the properties of operators  $\{C_l\}$,  this idea can be extended to  SU(2). Consider an $n$-qubit symmetric unitary $V$, which can be the unitary generated by a symmetric Hamiltonian $H(t)$, from time $t=0$ to $T$ under the Schr\"{o}dinger equation.   Any such unitary decomposes as $V=\bigoplus_j V_j$, where
$V_j$ is the component of $V$ inside $\mathcal{H}_j$, the subspace with angular momentum $j$. For even integers $l=0,\cdots , 2\lfloor{{n}/{2}}\rfloor$, define   the $l$-body phase $\Phi_l\in(-\pi,\pi]$  of $V$ as  
\be\label{ary4}
 \Phi_l\equiv    \sum_{j=j_\text{min}}^{j_\text{max}} c_l(j)\ \theta_j=-\int_0^T dt\   \Tr(H(t) C_l)\ \ \ \text{: mod} \ 2\pi\ ,
\ee
where  $\theta_j=\text{arg}(\text{det}(V_j))$ is the phase of the determinant of $V_j$.   Note that while $\theta_j$ is only defined modulo $2\pi$, because coefficients $c_l(j)$ are integer, 
$\Phi_l\in(-\pi,\pi]$ is well-defined.  As we show in Appendix \ref{App:l-body},  the above  equality is satisfied for any symmetric Hamiltonian $H(t)$ that realizes unitary $V$.  The notion of $l$-body phases provides a useful characterization  of  the constraints imposed by the locality of interactions. 
Following \cite{marvian2022restrictions},  in Appendix \ref{App:l-body} we show that: $\textbf{(i)}$ for $l\ge 2$, all $l$-body phases $\{\Phi_l\}$ can be measured experimentally, whereas the phases $\{\theta_j\}$ are not  observable, because they transform non-trivially under $V\rightarrow e^{i\alpha} V$.   Similarly,  $\Phi_0=\sum_j\theta_j=\text{arg}(\text{det}(V))$ is not  observable.  $\textbf{(ii)}$ If $V$ is realizable by $k$-local symmetric unitaries then $\Phi_l=0$ for $l> k$.    
$\textbf{(iii)}$  Conversely, for a general symmetric unitary  $V$, if all $l$-body phases vanish for $l>k$, then  $V$ is realizable using $k$-local symmetric unitaries, up to a unitary in a fixed finite subgroup of  symmetric  unitaries. Finally, as mentioned in \cite{marvian2022restrictions},   from a geometrical point of view, the transformation $\{\theta_j\} \rightarrow \{\Phi_l\}$ in Eq.(\ref{ary4})  describes  a change of the coordinate system on the  $(\lfloor{{n}/{2}}\rfloor+1)$-torus corresponding to phases $\theta_j=\text{arg}(\text{det}(V_j))$, for  $j=j_\text{min},\cdots, j_\text{max} $.  

The fact that $l$-body phases $\Phi_l$ are physically observable  for $l\ge 2$, hints to an interesting implication of our results: by measuring these phases it is possible to detect the locality of interactions; $\Phi_l\neq 0$ indicates the presence  an interaction that couples, at least, $l$ spin-half systems together. \\

\noindent\emph{Universality with ancilla qubits--}  
Next, we show how the constraints imposed by locality can be circumvented without breaking the symmetry.  Let
 $F$ be an arbitrary 
symmetric Hamiltonian that acts trivially on a pair of qubits  in the system, $r$ and $s$. 
 Consider ancilla qubits $a$ and $b$ and define Hamiltonian
\bes\label{Eq:F}
\begin{align}
\widetilde{F}&\equiv F R_{rs}\otimes I_{ab}-F\otimes R_{ab}\\ &=(F R_{rs}- \frac{F}{2})\otimes P_{ab}^{+}+(F R_{rs}+\frac{3F}{2} )\otimes P_{ab}^{-}\ ,
\end{align}
\ees
where $I_{ab}$ is the identity operator on the 4D Hilbert space of the ancillae, and  $P_{ab}^{+}$ and $P_{ab}^{-}$ are, respectively,  the projectors to their triplet and singlet (i.e., the symmetric and totally anti-symmetric) subspaces.  Using theorem \ref{newThm}, we find that symmetric Hamiltonian $\widetilde{F}$ can be implemented using the  exchange interaction: the  terms  
$F R_{rs}\otimes I_{ab}$ and $F \otimes R_{ab}$ are equal  up to a permutation that exchanges qubits $r s$ and $a b$.  Since the total angular momentum remains conserved under permutations, the inner products of the projector to any angular momentum sector with these two terms are equal. Hence Hamiltonian $\widetilde{F}$  satisfies 
the condition in theorem \ref{newThm} for all $l\ge 0$,  and therefore can be implemented using the exchange  interaction. 

Now suppose both ancilla qubits are prepared in state $|0\rangle$, and are coupled  to the $n$-qubit system in arbitrary initial state $|\psi\rangle$ via Hamiltonian $\widetilde{F}$. After time $t$, the joint state evolves to 
\be
\e^{-i \widetilde{F} t} (|\psi\rangle\otimes |00\rangle_{ab})=\big(\e^{-i [F R_{rs}- F/2] t}|\psi\rangle\big)\otimes |00\rangle_{ab}\ .
\ee
Therefore, at the end of the process the ancillae return to their initial state, whereas 
the system evolves according to the  Hamiltonian $F R_{rs}- F/2$. Combining this with Hamiltonian $F/2$, one can implement Hamiltonian $F R_{rs}$.  For example, using this procedure starting from the Hamiltonian $F=R_{12}$, one obtains Hamiltonian $R_{12} R_{34}$ using only the exchange interaction, whereas this is impossible without ancillae. 
 Repeating this procedure recursively, one obtains $R_{1,2}R_{3,4}\cdots R_{l-1,l}$ and hence Hamiltonians $C_l$ for all $l\ge 2$. Finally, applying Eq.(\ref{dec-22}),  any symmetric Hamiltonian $H$ decomposes as a term $H_0$ that is realizable via the exchange  interaction plus a linear combination of $\{C_l\}$, which can be realized using this procedure.   
 
 In conclusion,  any rotationally-invariant Hamiltonian is realizable using the exchange  interaction, provided that the system interacts with a pair of ancillae in state $|00\rangle$. Furthermore, rotational symmetry implies that, rather than this state, ancillae can be prepared in any state with support restricted to the triplet subspace, such as the maximally-mixed state over this subspace.  Alternatively, they can be prepared in the singlet $(|01\rangle-|01\rangle)/\sqrt{2}$, in which case,  under   $\widetilde{F}$  the main system evolves according to the Hamiltonian $FR_{rs}+3 F/2$.  Then,  following a similar construction,  all symmetric Hamiltonians can be obtained from the  exchange  interaction. In conclusion, to circumvent the constraints imposed by locality the state of ancilla qubits does not need to break the symmetry.   It is worth noting that this technique can be 
generalized beyond SU(2),  to achieve universality anytime  the restrictions on realizable symmetric unitaries are limited to constraints on the relative phases between sectors with inequivalent irreps of symmetry.

In Appendix \ref{App:univ}, we present  this argument more formally. Moreover, for a general class of Hamiltonians, we explicitly construct the corresponding  Hamiltonian that 
couples the system to  ancillae, in terms of  
  exchange interactions $\{R_{rs}\}$ and their nested commutators (see the example below). Finally, in Appendix \ref{App:single}, using the notion of $l$-body phases, we prove that universality cannot be achieved with a single ancilla qubit.  

\vspace{3mm}
  
\noindent \emph{Multi-qubit swap Hamiltonian--} Consider the family of unitaries generated by the multi-qubit swap Hamiltonian:   
For a system with $n=2r$ qubits partitioned to two subsystems $A$ and $B$ each with $r$ qubits, let  $S_{\text{AB}}$ be the  swap operator that exchanges their states. The family of unitaries $\exp(i \phi S_{\text{AB}})$ for $\phi\in[0,2\pi)$ appears 
 as a subroutine  in various quantum protocols (See, e.g.,\cite{lloyd2014quantum,  marvian2016universal,  kimmel2017hamiltonian,  pichler2016measurement}). 
 Applying Theorem \ref{newThm}, we find that, despite its rotational symmetry,   
for generic values of $\phi$ the unitary $\exp(i \phi S_{\text{AB}})$ is not realizable with $k$-local symmetric unitaries with $k<n$   
 (See Appendix \ref{App:Example}). 
For instance, in the case of $n=4$, the swap operator that exchanges 
qubit 1 with 2  and qubit 3 with 4 is  
$S_{AB}=\mathbb{I}/4+(R_{12}+R_{34})/2+R_{12}R_{34}$. Since $\Tr(S_{AB}C_4)>0$  this Hamiltonian cannot  be realized by 3-local symmetric Hamiltonians.  On the other hand, using a pair of ancilla qubits  this Hamiltonian can be realized via the exchange interaction. This can be shown using the above general argument, or more explicitly, using the interesting identity 
 \begin{align}
 4(R_{12} R_{34} - R_{34} R_{56}) = &-\big[ [[[R_{12}, R_{23}], R_{34}], R_{45}], R_{51}\big] \nonumber\\ &-\big[ [[[R_{23}, R_{34}], R_{45}], R_{56}], R_{62}\big]\nonumber\\  &\hspace{-12mm}-\small{\big[ [R_{13}, R_{32}], R_{24} \big] 
  +\big[ [R_{35}, R_{54}], R_{46} \big]} \ .\nonumber 
  \end{align}  
Assuming qubits 5 and 6 are ancillae prepared  in state $|00\rangle$,  by applying the above Hamiltonian one realizes Hamiltonian  
$R_{12} R_{34} - R_{34}/2$ on qubits 1-4. Finally,  combining this with  exchange interactions $R_{34}$ and $R_{12}$, one obtains  Hamiltonian $S_{AB}$, up to a constant shift $\mathbb{I}/4$. 
See Appendix \ref{App:exp} for further discussion and generalization of this technique.\\

\noindent\emph{Discussion}-- Besides quantum computing, this result has also direct  applications in the related areas of  quantum reference frames \cite{QRF_BRS_07, marvian2008building, yang2017units}, covariant quantum error correcting codes \cite{faist2020continuous, hayden2021error, yang2020covariant, kong2021near},  
  the resource theory of asymmetry \cite{gour2008resource, Marvian_thesis, marvian2013theory},   and the resource theory of quantum thermodynamics    
with SU(2) conserved charges \cite{lostaglio2017thermodynamic, halpern2016microcanonical, guryanova2016thermodynamics}. In all these areas one is often interested in implementing a general symmetric  unitary. For instance,  it is often assumed that, without access to the standard Cartesian reference frame relative to which states, operations  and observables are defined, 
it is still possible to implement a general   rotationally-invariant unitary on a system with an arbitrarily large number of qubits \cite{QRF_BRS_07}.  Similarly,  the resource theory of asymmetry for SU(2) symmetry is defined based on the assumption 
that such unitaries can be implemented  with negligible costs (A similar assumption is made in quantum thermodynamics with SU(2) conserved charges). However, the no-go theorem found in \cite{marvian2022restrictions} raises a question about the justifiability of these assumptions. It suggests that to implement a general symmetric unitary 
with local operations, one may need to break the symmetry.  
Ref. \cite{marvian2022restrictions} 
 shows that in the case of
U(1) symmetry, this no-go theorem can be circumvented using an ancilla qubit.  
The present paper extended this to  the case of SU(2) symmetry with spin-half systems. These results   vindicate 
 the fundamental assumptions of the resource theory of asymmetry and  reference frames, at least, in the special cases of U(1) and SU(2) symmetry.   \\
 
\noindent\emph{Acknowledgments}--This work was supported by  NSF Phy-2046195, NSF QLCI grant OMA-2120757 and ARL-ARO QCISS grant number 313-1049. HL is supported by the U.S. Department of Energy, Office of Science, Nuclear Physics program under Award Number DE-FG02-05ER41368.

\bibliography{Ref_2021_v3, Ref_2020}

\newpage

\newpage

\onecolumngrid

\newpage

\maketitle
\vspace{-5in}
\begin{center}

\Large{Supplementary Material: Rotationally-Invariant Quantum Circuits  }

$$.  $$
\end{center}
\appendix

\newcommand\appitem[2]{\hyperref[{#1}]{\textbf{\cref*{#1}: \nameref*{#1}}} \dotfill \pageref{#1}\\ \begin{minipage}[t]{0.8\textwidth} #2\end{minipage}}\begin{itemize}
\item \appitem{App:A}{We briefly review the decomposition of qubit systems into irreps with fixed total angular momenta, as well as Schur-Weyl duality, which characterizes rotationally-and permutationally-invariant operators.}
\item \appitem{App}{We prove properties of the $C_l$ operators and determine their eigenvalues $c_l(j)$ as polynomials of $j (j + 1)$.}
\item \appitem{App:proofThem}{We prove \cref{newThm}, which completely characterizes the Hamiltonians realizable by $k$-local rotationally-invariant interactions.}
\item \appitem{App:dim}{Using Lie-algebraic techniques, we determine the dimension of the Lie group $\mathcal{V}_k$.}
\item \appitem{App:avg}{Using the basis $\set{C_l}$, we determine the average energy in a fixed angular momentum sector for Hamiltonian $H$, and we discuss further consequences when $H$ can be generated by $k$-local rotationally-invariant interactions.}
\item \appitem{App:TimeReversal}{We discuss the restrictions on $k$-local rotationally-invariant interactions in relation to time-reversal symmetry.}

\item \appitem{App:l-body}{It is shown that $l$-body phases are global-phase independent (and hence measurable), and we show that, up to a finite group, any rotationally invariant unitary can be written in terms of its $l$-body phases and a unitary which can be realized by exchange interactions.}

\item \appitem{App:Example}{Using \cref{newThm}, we show that a Hamiltonian which swaps the two parts of a system decomposed into halves cannot be realized with $k$-local rotationally-invariant interactions when $k < n$.}

\item \appitem{App:univ}{We prove that two ancilla qubits enable the implementation of arbitrary rotationally-invariant unitaries using the exchange interaction. We also provide an explicit construction for a certain class of Hamiltonians.}
\item \appitem{App:single}{It is shown that a single ancilla qubit cannot be used to achieve universality for rotationally-invariant unitaries.}
\end{itemize}

\newpage

\section{Preliminaries: Rotational and permutational symmetries for qubit systems}\label{App:A}

\subsection{Irreducible representations of SU(2) on $n$ qubits}

First, we recall a few useful facts about the representation theory of group $\SU(2)$. We are interested in the representation of $\SU(2)$ on the Hilbert space of $n$ qubits, where each unitary $U\in\SU(2)$ is represented by $U^{\otimes n}$. These unitary transformations describe, for instance, the global rotations of $n$ spin-half systems in 3D space. 

We consider an orthonormal basis in which this representation is manifestly decomposed into irreps. This basis can be defined in terms of the eigenvectors of two commuting observables, namely the operator $J_z\equiv \frac{1}{2}\sum_{j=1}^n \sigma^{(z)}_j$   and the total squared angular momentum operator $J^2=J^2_x+J^2_y+J_z^2$, also known as the Casimir operator. The eigenvalues of $J_z$ are $-n/2,\cdots,n/2$. The eigenvalues of $J^2$ are in the form $j(j+1)$, where $j$ takes values 
\begin{align}
  \text{Even } n:& \quad j_{\min} =0, 1,\cdots, \frac{n}{2}=j_{\max} \\
  \text{Odd } n:& \quad j_{\min}=\frac{1}{2}, \frac{3}{2},\cdots, \frac{n}{2}=j_{\max}\ . 
\end{align}
Each pair of eigenvalues $j(j+1)$ of $J^2$ and $m_z$ of $J_z$ has multiplicity $m(n, j)$. In particular, note that this multiplicity is independent of $m_z$.

Therefore, to decompose the representation of $\SU(2)$ into irreps we define the basis 
\begin{align}\label{tensor}
  |j, m_z, r\rangle \cong |j, m_z\rangle\otimes |j, r\rangle : \quad m_z=-j,\cdots, j-1, +j\ , \quad r=1,\cdots, m(n, j) ,
\end{align}
where $r=1,\cdots, m(n, j)$ is an index for the multiplicity of eigenvalues $j(j+1)$ of $J^2$ and $m_z$ of $J_z$. This basis is usually referred to as the Schur basis. \Cref{tensor} implies that the subspace of states with the total angular momentum $j$, denoted by $ \mathcal{H}^{(n)}_j$, has a tensor product decomposition as $\mathcal{H}^{(n)}_j\cong \mathbb{C}^{2j+1}\otimes \mathbb{C}^{m(n, j)}$, where $ \mathbb{C}^{2j+1}$ corresponds to the irrep of $\SU(2)$ with the total angular momentum $j$. To summarize, under the action of $U^{\otimes n}: U\in\SU(2)$, the Hilbert space of $n$ qubits decomposes as 
\begin{align}\label{Schur}
  (\mathbb{C}^2)^{\otimes n}\cong \bigoplus_{j=j_{\min}}^{j_{\max}} \mathcal{H}^{(n)}_j&=\bigoplus_{j=j_{\min}}^{j_{\max}} \mathbb{C}^{2j+1}\otimes \mathbb{C}^{m(n, j)}\equiv \bigoplus_{j=j_{\min}}^{j_{\max}} \mathbb{C}^{2j+1}\otimes \mathcal{M}_j\ ,
\end{align}
where $\mathcal{M}_j= \mathbb{C}^{m(n, j)}$ is called the multiplicity subsystem. Unitaries $U^{\otimes n}$ for $U\in\SU(2)$ act trivially on $\mathcal{M}_j$ and act irreducibly on $\mathbb{C}^{2j+1}$. In the rest of the paper, we often  suppress the superscript $n$, and write  $\mathcal{H}^{(n)}_j$ as $\mathcal{H}_j$. 

Consider an arbitrary state of $n$ qubits in the subspace with angular momentum $j$. Then, the reduced state of any $n-1$ qubits, can have components only in the subspaces with angular momenta $j\pm 1/2$. In other words,
\be
\mathcal{H}^{(n)}_j \subset \big(\mathcal{H}^{(n-1)}_{j-\frac{1}{2}}\oplus \mathcal{H}^{(n-1)}_{j+\frac{1}{2}}\big) \otimes \mathbb{C}^2\ ,
\ee
and $\mathcal{H}^{(n-1)}_{j\pm 1/2}$ are subspaces of $(\mathbb{C}^2)^{\otimes (n-1)}$ with angular momenta $j\pm 1/2$. Furthermore, the multiplicity of angular momentum $j$ in the system with $n$ qubits is equal to the sum of the multiplicities of angular momenta $j\pm 1/2$ in the system with $n-1$ qubits, i.e.,
\be\label{recursive}
m(n, j)=m(n-1, {j+1/2})+m({n-1}, {j-1/2})\ .
\ee
Therefore, the multiplicity subsystem can be decomposed as 
\begin{align}
  \mathbb{C}^{m(n, j)}\cong \mathbb{C}^{m(n-1, {j+\frac{1}{2}})}\oplus \mathbb{C}^{m(n-1, {j-\frac{1}{2}})} = \mathcal{M}_{j,+}\oplus \mathcal{M}_{j,-}\ ,
\end{align}
where
\be
\mathcal{M}_{j,\pm} \equiv \mathbb{C}^{m(n-1, {j\pm\frac{1}{2}})}\ .
\ee

The solution to the recursion relation \cref{recursive} is
\be\label{eq:hook_length}
m(n,j)={\binom{n}{\frac{n}{2}-j}}\times \frac{2j+1}{\frac{n}{2}+j+1}.
\ee
This multiplicity can also be calculated using the hook-length formula \cite{fulton2013representation} (see also Eq. (3.21) in \cite{QRF_BRS_07}). 

\subsection{Characterizing operators invariant under all rotations and permutations via Schur-Weyl duality}\label{app:duality}

Schur's lemmas imply that any rotationally-invariant operator  $A$ on $n$ qubits, i.e., an operator that commutes with $U^{\otimes n}$ for all $U \in \SU(2)$, has a decomposition as 
\be
A \cong \bigoplus_j (\identity_{2j+1} \otimes a_j)\ ,
\ee
 where $\identity_{2j+1}$ is the identity operator on $(2j+1)$-dimensional irrep of SU(2) and each $a_j$ is an operator on $m(n,j)$-dimensional subsystem $\mathcal{M}_j$, corresponding to the multiplicity of angular momentum $j$ of $n$ qubits. In particular, if $A$ is unitary, then $\{a_j\}$ are all unitaries.

  An example of rotationally-invariant unitaries are permutations on $n$ qubits, which correspond to  a representation of $\mathbb{S}_n$, the permutation group of $n$ objects, also known as the symmetric group. This means that in the Schur basis defined in the previous section, any  permutation $\sigma\in\mathbb{S}_n $ is represented as 
\be
\P(\sigma) \cong \bigoplus_j \big(\identity_{2j+1} \otimes \mathbf{p}_j(\sigma)\big)\ ,
\ee
where unitaries $\mathbf{p}_j$ define a unitary representation of $\mathbb{S}_n$ on  $m(n,j)$-dimensional subsystem $\mathcal{M}_j$. 

Indeed, according to Schur-Weyl duality there is a stronger relation between these representations of 
$\mathbb{S}_n$ and SU(2): The unitary representations $\{\mathbf{p}_j\}$ of $\mathbb{S}_n$ are all irreducible and inequivalent with each other. This, in particular, means any permutationally-invariant operator $B$ on $n$ qubits has a decomposition as 
\be
B \cong \bigoplus_j (b_j \otimes \mathbb{I}_{m(n,j)})\ ,
\ee
that is, they are block-diagonal with respect to subspaces with different angular momenta $j$, and they act as the identity operator on the irrep of $\mathbb{S}_n$, but they can be arbitrary on the irreps of SU(2).

Combining the above results about rotationally and permutationally-invariant operators, we conclude that operators that are invariant under both of these symmetries should be block-diagonal with respect to sectors $\{\mathcal{H}^{(n)}_j\}$ with different angular momenta. Furthermore, inside each subspace $\mathcal{H}^{(n)}_j\cong \mathbb{C}^{2j+1}\otimes \mathcal{M}_j$, they  should  be proportional to the identity operator  on both subsystems $ \mathbb{C}^{2j+1}$  and $\mathcal{M}_j$, which correspond to an irrep of SU(2) and $\mathbb{S}_n$, respectively.  It follows that  inside $\mathcal{H}^{(n)}_j$ they should be proportional to the identity operator.  In conclusion, any operator that is invariant under both of these symmetries can be written as a linear combination of projectors to subspaces $\{\mathcal{H}^{(n)}_j\}$, which are denoted by 
$\{\Pi_j\}$. In summary,  we find 
\begin{align}\label{AppdefC}
\mathcal{C}&\equiv \text{span}_\mathbb{C}\{\Pi_j: j=j_\text{min},\cdots , j_\text{max}\} =\Big\{A\in \End((\mathbb{C}^2)^{\otimes n}): [A, U^{\otimes n}]=[A, \P(\sigma)]=0\ ,  \forall\sigma\in\mathbb{S}_n , \forall U\in \text{SU}(2) \Big\}\ ,
\end{align}
where $\End((\mathbb{C}^2)^{\otimes n})$
 denotes the space of linear operators on $(\mathbb{C}^2)^{\otimes n}$.  Note that the dimension of $\mathcal{C}$ is $\lfloor n/2\rfloor+1$ and operators $\{\Pi_j\}$ form an orthogonal basis for it, that is  $\Tr(\Pi_j \Pi_{j'})=\delta_{j,j'} \Tr(\Pi_j)$.


\newpage

\section{Properties of  basis $\{C_l\}$ and polynomials $\{c_l(j)\}$ }\label{App}

In this section we  study properties of basis $\{C_l\}$ and their eigenvalues, namely  polynomials $\{c_l(j)\}$.  Recall the definition $C_0=\mathbb{I}$ and 
\be\label{art}
C_l\equiv \frac{1}{(l/2)!}  \sum_{i_1\neq i_2\neq\cdots \neq i_{l}} R_{i_1,i_2}\cdots\cdot R_{i_{l-1},i_{l}} \ \   \ \   \ \   \ \  :  l=2, 4 \cdots, 2\lfloor \frac{n}{2}\rfloor\ ,
\ee
where $i_1, i_2, \cdots, i_{l}$ are $l$ distinct integers between 1 to $n$, and  
\be
R_{rs}=\frac{1}{2}(\sigma^{(x)}_r \sigma^{(x)}_s+\sigma^{(y)}_r \sigma^{(y)}_s+\sigma^{(z)}_r \sigma^{(z)}_s)\ .
\ee
 Note that each operator $R_{i_1,i_2}\cdots\cdot R_{i_{l-1},i_{l}}$ appears exactly $2^{l/2}(l/2)!$ times in this above summation.   By definition, operator $C_l$ is   permutationally invariant. Furthermore, since each term $R_{rs}$ is rotationally invariant, Hermitian operator $C_l$ also enjoys rotational symmetry and therefore is in the  subspace $\mathcal{C}$, defined in Eq.(\ref{AppdefC}).  The presence of both rotational and permutational symmetries means that $C_l$ is block-diagonal with respect to sectors with different angular momenta, and can be written as a linear combination of projectors to these sectors, that is  
\be
C_l=\frac{1}{(l/2)!} \sum_{i_1\neq i_2\neq\cdots \neq i_{l}} R_{i_1,i_2}\cdots\cdot R_{i_{l-1},i_{l}}=\sum_{j} c_{l}(j)\ \Pi_j\  \ \   \ \   \ \   \ \  :  l=2, 4 \cdots, 2\lfloor \frac{n}{2}\rfloor\ ,
\ee
 where eigenvalues 
 \be
 c_{l}(j)=\frac{\Tr(C_l\Pi_j)}{\Tr(\Pi_j)}\ 
 \ee
  are real.  Indeed, operators $C_l: l=0,2, \cdots , 2\lfloor \frac{n}{2}\rfloor$ form an orthogonal basis for $\mathcal{C}$. In particular, 
 \be\label{OrthC}
  \Tr(C_l C_{l'})=\sum_{j} \Tr(\Pi_j)c_{l}(j) c_{l'}(j)  =\Tr(C^2_l)\  \delta_{l,l'}  \ .
\ee
This can be seen using the fact that  for operator $R_{rs}$ the partial trace over qubit $r$ (and, $s$) is zero, that is $\Tr_{r}(R_{rs})=0$. It follows that the Hilbert-Schmidt inner product of  $R_{i_1,i_2}\cdots\cdot R_{i_{l-1},i_{l}}$ and  $R_{k_1,k_2}\cdots\cdot R_{k_{l'-1},k_{l'}}$ can be non-zero only if they act on exactly the same set of qubits, that is $\{i_1,i_2, \cdots, i_{l}\}=\{k_1,k_2, \cdots, k_{l'}\}$. Then, assuming  $i_1\neq \cdots \neq i_l$, and $k_1\neq \cdots \neq k_{l'}$, this means  the inner product is non-zero only if $l=l'$, which in turn implies Eq.(\ref{OrthC})  (Indeed, one can show that if they both act on exactly the same set of qubits, then the inner product of  $R_{i_1,i_2}\cdots\cdots R_{i_{l-1},i_{l}}$ and  $R_{k_1,k_2}\cdots R_{k_{l'-1},k_{l'}}$ is always a positive number).

Eq.(\ref{OrthC}) implies 
\begin{align}
\Pi_j&=\sum_{l=0}^{2\lfloor{\frac{n}{2}}\rfloor} \frac{\Tr(\Pi_j C_l)}{\Tr(C_l^2)} C_l  \ \ \ \ \ \ : \ j=j_{\min}, j_{\min}+1. \cdots, j_{\max}\ .
\end{align}
\begin{remark}\label{rem0}
For a system with $n$ qubits, we have defined  operators  $C_l$  for even integers  $l=0, 2, \cdots, 2\lfloor n/2\rfloor$.  In the following, it is sometimes convenient to extend this definition to arbitrary even integers $l\ge 0$, by defining $C_l=0$ for $l>2\lfloor n/2\rfloor$. 
\end{remark}

Suppose we partition a system with $n$ qubits to two subsystems $A$ and $B$, with $n_A$ and $n_B$ qubits, where   $n_A+n_B=n$.  Let $C_l^{(AB)}$ be the $C_l$ operator on the total system,   
and $C_l^{(A)}$ and $C_l^{(B)}$ be the $C_l$ operators defined on the subsystems $A$ and $B$, respectively.  Then, using the fact that  for operator $R_{rs}$ the partial trace over qubit $r$ (and, $s$) is zero, and for single-qubit identity operator $I$, $\Tr(I)=2$,  one can  check that
\be\label{partial0}
\Tr_A(C^{(AB)}_l)= 2^{n_A} \ C^{(B)}_l\ \ \ \ , \  \text{and}\ \ \ \ \   \Tr_B(C^{(AB)}_l)= 2^{n_B}\  C^{(A)}_l\ .
\ee
We use this property later in Appendix \ref{App:l-body} to study the properties of $l$-body phases.

In the following, we first derive an explicit formula for the eigenvalues  $c_l(j)$ (See Eq.(\ref{eqcl})). Then, we show that this function is a polynomial of degree $l/2$ of $j(j+1)$.   Equivalently, we show that operator $C_l$ can be expressed as a polynomial of degree $l/2$ of $J^2$ (See Sec.\ref{Sec:pol}).

 \subsection{Determining eigenvalues $c_l(j)$ of operator $C_l$  (proof of Eq.\ref{clj1})}
 
 Next, we show that these eigenvalues should be indeed integer. Let $|\xi\rangle=(|0\rangle|1\rangle-|1\rangle|0\rangle)/\sqrt{2}$ be the singlet state of a pair of qubits.  Note that for  $ j=j_\text{min}, \cdots, j_\text{max}$ it holds that 
\be
\Pi_j \Big(|1\rangle^{\otimes (2j)}\otimes |\xi\rangle^{\otimes (n/2-j)}\Big)=|1\rangle^{\otimes (2j)}\otimes |\xi\rangle^{\otimes (n/2-j)}\  , 
\ee
that is, the state $|\eta_j\rangle=|1\rangle^{\otimes (2j)}\otimes |\xi\rangle^{\otimes (n/2-j)}$ lives in the subspace with angular momentum $j$, which implies
\be
c_{l}(j) =\langle \eta_j| C_l|\eta_j\rangle\ = \frac{1}{(l/2)!} \sum_{i_1\neq i_2\neq\cdots \neq i_{l}} \langle \eta_j|(R_{i_1,i_2}\cdots\cdot R_{i_{l-1},i_{l}}) |\eta_j\rangle\ .
\ee
Since each operator
$R_{i_1,i_2}\cdots\cdot R_{i_{l-1},i_{l}}$ appears exactly $2^{l/2}(l/2)!$ times in this summation, then to show that $c_{l}(j) $ is an integer, it suffices to show that for any choice of $l$ distinct integers  $i_1, i_2\cdots , i_{l}\in\{1,\cdots, n\} $, the expectation value
\begin{align}
2^{l/2}\times \langle \eta_j| (R_{i_1,i_2}\cdots R_{i_{l-1},i_{l}}) |\eta_j\rangle&= 2^{l/2}\times \Big(\langle1|^{\otimes (2j)}\otimes \langle \xi|^{\otimes (n/2-j)}\Big) \Big(R_{i_1,i_2}\cdots R_{i_{l-1},i_{l}}\Big) \Big(|1\rangle^{\otimes (2j)}\otimes |\xi\rangle^{\otimes (n/2-j)}\Big)\\ &=\Big(\langle1|^{\otimes (2j)}\otimes \langle \xi|^{\otimes (n/2-j)}\Big) \Big((2R_{i_1,i_2})\cdots (2R_{i_{l-1},i_{l}})\Big) \Big(|1\rangle^{\otimes (2j)}\otimes |\xi\rangle^{\otimes (n/2-j)}\Big)
\end{align}
is an integer. This follows immediately from the fact that
$2 R=\vec{\sigma}\cdot\vec{\sigma}$ satisfies
\begin{align}
\langle11| (2 R)|11\rangle&=1\ \ ,  \ \ \ \ \ 
\langle \xi| (2 R)|\xi\rangle=-3\ ,\\
\langle \xi|(I_A\otimes R_{BC})|\xi\rangle_{AB}&=0\ , \\
\langle \xi|(R_{AD}\otimes R_{BC})|\xi\rangle_{AB}&=-2 R_{CD}\ .
\end{align}
We conclude that coefficients $\{c_{l}(j) \}$ are all integer. 

Although the above method can also be used to calculate these integer coefficients,  here we use a different method. Using the fact that $C_l=\sum_j c_{l}(j) \ \Pi_j\ $, 
and $\Tr(\Pi_j)=m(n,j)\times (2j+1)$ 
we find
\be\label{clj}
c_{l}(j) =\frac{\Tr(\Pi_j C_l )}{\Tr(\Pi_j)}=\frac{\Tr(\Pi_j C_l) }{m(n,j)\times (2j+1)}\ .
\ee
Furthermore, 
\be\label{art2}
\Tr(\Pi_j C_l)= \frac{1}{(l/2)!}  \sum_{i_1\neq i_2\neq\cdots \neq i_{l}} \Tr(\Pi_j R_{i_1,i_2}\cdots\cdot R_{i_{l-1},i_{l}})= \frac{1}{(l/2)!} \times \frac{n!}{(n-l)!}\Tr(\Pi_j R_{1, 2}\cdots R_{{l-1},{l}})\ ,
\ee
where we have used the fact that  $\Pi_j$ is permutationally-invariant and $i_1,\cdots , i_{l}$ takes $n!/(n-l)!$ distinct values in this summation. Then, we use the fact that for a pair of qubits 
\be
\vec{\sigma}\cdot \vec{\sigma}=(I\otimes I)-4|\xi\rangle\langle\xi|\ ,
\ee
where $I$ denotes the single-qubit identity. This implies
\be
Q_l\equiv R_{1,2}\cdots R_{l-1,l}=  2^{-l/2}(\mathbb{I}-4|\xi\rangle\langle\xi|_{1,2})\cdots (\mathbb{I}-4|\xi\rangle\langle\xi|_{l-1,l})\ ,
\ee
where $\mathbb{I}=I^{\otimes n}$ is the identity operator on the Hilbert space of $n$ qubits, and $|\xi\rangle\langle\xi|_{i,j}$ denotes the projector to the singlet on qubits $i$ and $j$, tensor product with the identity operators on the rest of qubits. 

Then, using the binomial expansion, 
we find
\begin{align}\label{tar}
&\Tr(\Pi_j R_{1,2}\cdots R_{l-1,l})=2^{-l/2} \sum_{r=0}^{l/2} {{l/2}\choose{r}}  (-4)^{r}\Tr(\Pi_j \big[|\xi\rangle\langle\xi|^{\otimes r}\otimes I^{\otimes (n-2r)}\big])\ ,
\end{align}
where we again have used the fact that $\Pi_j$ is permutationally-invariant.  To calculate $
\Tr(\Pi_j \big[|\xi\rangle\langle\xi|^{\otimes r}\otimes I^{\otimes (n-2r)}\big]) $ we use the fact that 
for any state $|\phi\rangle$ of $n-2r$ qubits in the sector with angular momentum $j$, the $n$-qubit state $|\xi\rangle^{\otimes r}\otimes |\phi\rangle$ is in the sector with angular momentum $j$ of $n$ qubits. This implies
\be
\Tr(\Pi_j \big[|\xi\rangle\langle\xi|^{\otimes r}\otimes I^{\otimes (n-2r)}\big])=\Tr(\tilde{\Pi}_j)=m(n-2r,j)\times(2j+1)\ ,
\ee
where $\tilde{\Pi}_j$ is the projector to the subspace with angular momentum $j$ of $2r$ qubits. Putting this into Eq.(\ref{tar}) we conclude that
\be\label{jfd}
\Tr(\Pi_j Q_l)=\Tr(\Pi_j R_{1,2}\cdots R_{l-1,l})=2^{-l/2}
(2j+1)\sum_{r=0}^{l/2} {{l/2}\choose{r}}  (-4)^{r} \times m(n-2r,j)\ .
\ee
Combining this with Eq.(\ref{art2}) and Eq.(\ref{clj}) we conclude that the integer $c_{l}(j) $ is equal to 
\bes\label{eqcl}
\begin{align}
c_{l}(j) &=\frac{\Tr(\Pi_j C_l) }{m(n,j)\times (2j+1)} \\ &=\frac{1}{m(n,j)\times (2j+1)} \times \frac{1}{(l/2)!} \times \frac{n!}{(n-l)!}\Tr(\Pi_j R_{1, 2}\cdots R_{{l-1},{l}})\\ &=\frac{n!}{(n-l)! \times 2^{l/2}(l/2)!}  \sum_{r=0}^{l/2} (-4)^{r}  {{l/2}\choose{r}}   \frac{m(n-2r,j)}{m(n,j)}\ .
\end{align}
\ees
Expanding the binomials we find the formula
\begin{align}\label{formula:app}
c_{l}(j) &=\frac{l!}{2^{l/2}(l/2)!}{{n}\choose{l}}  \sum_{r=0}^{l/2} (-4)^{r}  {{l/2}\choose{r}}   \frac{m(n-2r,j)}{m(n,j)}\\ &=  \frac{1}{2^{l/2} (n - l)!} \sum_{r = 0}^{l / 2} \frac{(-4)^r r! (n - 2 r)! }{(l / 2 - r)!} \binom{\frac{n}{2}+j}{r} \binom{\frac{n}{2}-j}{r} \frac{\frac{n}{2} + j + 1}{\frac{n}{2} + j + 1 -r}\ .
 \end{align}
Eq.(\ref{jfd}) determines the inner product of $Q_s$ with $\Pi_j$. For completeness, here we also determine the inner product of $Q_s$ with $C_l$:
\begin{align}
 \Tr(Q_s C_l)=\Tr(R_{12} \cdots R_{s-1,s} C_{l} )&=  \frac{(n-s)!}{n!} \sum_{i_1\neq \cdots \neq i_s}\Tr(R_{i_1i_2}\cdots R_{i_{s-1}i_{s}} C_{l} ) 
 \\ &= \frac{(n-s)!(s/2)! }{n!} \Tr( C_s C_{l} )\\ &= \delta_{l,s} \frac{(n-s)!(s/2)!}{n!}  \Tr(C^2_{l}) \ .
\end{align}
We conclude that $Q_s$ has a positive overlap with $C_s$, whereas its overlap with $\{C_l: l\neq s \}$ is zero.

\subsection{$c_l(j)$ is a polynomial of degree $l/2$ of $j(j+1)$}\label{Sec:pol}

In the following, we present  a simple recursive relation for operator $\{C_l\}$. This relation, in particular, implies that $C_l$ can be expressed as a polynomial of degree $l/2$ of $J^2$.   This, in turn, implies that $c_l(j)$ is a polynomial of degree $l/2$ of $j(j+1)$. 

First, note that all operators $\{C_l\}$ commute with each other and are linear combinations of $\{\Pi_j\}$ operators. This means that the product of any pair of operators in this set  can be written as a linear combination of $\{\Pi_j\}$ operators, and hence a linear combination of  $\{C_l\}$.  In other words, operators $\{C_l\}$ form a commutative algebra.  This, in particular, implies that for any $l=0,\cdots, 2\lfloor n/2\rfloor$ it holds that 
\be
C_2 C_l= C_l C_2 =\sum_{l=0}^{2\lfloor{\frac{n}{2}}\rfloor} \alpha_{l}\  C_l \ ,
\ee
where $\alpha_l$ are real coefficients. Since $C_2$ and $C_l$  are sums of 2-local and $l$-local terms respectively, $C_2 C_l$ can be written as a sum of $(l+2)$-local operators. Furthermore, since $C_l$ is orthogonal to all $k$-local operators with $k<l$,  $C_2 C_l$  is orthogonal to all  $k$-local operators with $k<l-2$. We conclude that for $l$ in the interval $2, \cdots, 2\lfloor n/2\rfloor-2$, it holds that 
\be\label{alphas}
C_2 C_l= C_l C_2 = \alpha_{l-2}\ C_{l-2}+ \alpha_l\ C_l  + \alpha_{l+2} \ C_{l+2}\ 
\ee
for real numbers $\alpha_{l-2}, \alpha_{l}$, and  $\alpha_{l+2}$, which can be determined via  combinatorial arguments (See Sec.\ref{sec:rec}). It can be easily seen that for all $l$ in the interval $2, \cdots, 2\lfloor n/2\rfloor-2$, coefficient   $\alpha_{l+2}$ is non-zero (and positive). 
Therefore,  the above equation can be rewritten  as 
\be
C_{l+2}= \frac{1}{\alpha_{l+2} } C_2 C_{l} - \frac{\alpha_{l-2}}{\alpha_{l+2} }  C_{l-2}- \frac{\alpha_{l}}{\alpha_{l+2} }  C_{l}\ .
\ee
  
Next, note that 
\be
J^2=J_x^2+J_y^2+J_z^2=\frac{1}{4}(\sum_{i=1}^n \sigma^{(x)}_i)^2+\frac{1}{4}(\sum_{i=1}^n \sigma^{(y)}_i)^2+\frac{1}{4}(\sum_{i=1}^n \sigma^{(z)}_i)^2= \frac{3n}{4}+\sum_{i<j} R_{ij}= \frac{3n}{4}+\frac{1}{2} C_2\ ,
\ee
which implies
\be
C_2=2 J^2-\frac{3n}{2}\ .
\ee
Combining the above equations, we arrive at the recursive relation
\be
C_{l+2}=\frac{2}{\alpha_{l+2} }\  J^2 C_{l} - [\frac{\alpha_{l}+3n/2}{\alpha_{l+2} }]  \ C_{l} - \frac{\alpha_{l-2}}{\alpha_{l+2} }\  C_{l-2}\ ,
\ee
with the initial condition $C_0=\mathbb{I}$ and $C_2=2 J^2-\frac{3n}{2}\mathbb{I}$. It follows that $C_l$
 is a polynomial of degree $l/2$ in $J^2$.  In terms of the corresponding eigenvalues we obtain the recursive equation
\be
c_{l+2}(j)=\frac{2 j(j+1)}{\alpha_{l+2} } \ c_{l}(j)- [\frac{\alpha_{l}+\frac{3n}{2}}{\alpha_{l+2} }] \   c_{l}(j) - \frac{\alpha_{l-2}}{\alpha_{l+2} } \ c_{l-2}(j)\ .
\ee
 with the initial condition $c_0(j)=1$ and $c_2(j)=2 j(j+1)-\frac{3n}{2}$. Then,  again $c_l(j)$ is a polynomial of degree $l/2$ in $j(j+1)$.

  The following table lists the polynomials  $2^{l/2}(l/2)!\times c_l(j)$ for $l=0,\cdots, 10$.  \\

\begin{table*}[htb]
  \centering
  \renewcommand{\arraystretch}{1.2}
  \setlength{\tabcolsep}{4pt}
  \begin{tabular}{|c|l|}
    \TopRule
   {$l$} & \multicolumn{1}{c|}{$2^{l/2}(l/2)!\times c_l(j)$ expressed in terms of  $a=j(j+1)$} \\ \TopRule
    $0$ & $1$ \\ \MidRule
    $2$ & $4a - 3n$ \\ \MidRule
    $4$ & $16 a^2 - 8 (5n - 6) a + 15n(n-2)$ \\ \MidRule
    $6$ & $64 a^3 - 16 (21 n - 52) a^2 + 12 (35 n^2 - 154 n + 120)a - 105n (n-2) (n-4)$ \\ \MidRule
    $8$ & $256 a^4 - 256 (9 n - 34) a^3 + 288 (21 n^2 - 146 n + 216) a^2- 144 (35 n^3 - 336 n^2 + 892 n -560) a + 945 n (n-2) (n-4) (n-6)$ \\ \MidRule
    \multirow{2}{*}{$10$} & $1024 a^5 - 1280 (11 n - 56) a^4 + 128 (495 n^2 - 4730 n + 10224) a^3- 288 (385 n^3 - 5170 n^2 + 20680 n -23488) a^2$ \\
    & \multicolumn{1}{r|}{$+ 180 (385 n^4 - 6468 n^3 + 35948 n^2 - 73744 n + 40320)a-10395 n (n-2) (n-4) (n-6) (n-8)$} \\ \BotRule
  \end{tabular}
  \caption{Polynomial $2^{l/2}(l/2)!\times c_l(j)$ for $n$ qubits expressed as a function of $a=j(j+1)$}
  \label{tab:cl}
\end{table*}

\subsection{A recursive relation for polynomials $c_l(j)$}\label{sec:rec}
In the above argument we did not need to explicitly determine operator $C_2 C_l$, or, equivalently,  coefficients $\alpha_{l-2}, \alpha_l , \alpha_{l+2}$ in Eq.(\ref{alphas}). Nevertheless, for completeness sake we calculate them in the following. This calculation yields  a recursive formula for polynomials $c_l(j)$ (It should be noted that this calculation is not used in the rest of the paper).

We calculate
\begin{equation}\label{eq:alphas}
    C_2 C_l = \frac{1}{(l / 2)!} \sum_{i \neq j} \sum_{k_1 \neq \dots \neq k_l} R_{i, j} R_{k_1, k_2} \dots R_{k_{l - 1}, k_l}
\end{equation}
using the relations, with $i \neq j \neq k$,
\begin{align}
    R_{i, j}^2 & = \frac{3}{4} - R_{i, j}\ , \label{eq:Rsq} \\
    R_{i, j} R_{j, k} + R_{i, j} R_{i, k} & = \frac{1}{2} (R_{i, k} + R_{j, k})\ , \label{eq:Rtwo} \\
    R_{i, j} R_{i, k} R_{j, l} + R_{i, j} R_{j, k} R_{i, l} & = \frac{1}{2} R_{l, k} + \frac{1}{2} (R_{i, k} R_{j, l} + R_{i, l} R_{j, k} - 2 R_{i, j} R_{l, k})\ . \label{eq:Rthr}
\end{align}

The terms $R_{i, j} R_{k_1, k_2} \dots R_{k_{l - 1}, k_l}$ in the right-hand side of Eq.(\ref{eq:alphas}) can be grouped according to the relationships between $i \neq j$ and $\{k_m\}$: (1) $\{i, j\}$ and $\{k_m\}$ are disjoint; (2) $\{i, j\}$ and $\{k_m\}$ share a single element; (3) $i, j \in \{k_m\}$ but $R_{i, j}$ does not appear in $R_{k_1, k_2} \dots R_{k_{l - 1}, k_l}$; (4) $i, j \in \{k_m\}$ and $R_{i, j}$ does appear in $R_{k_1, k_2} \dots R_{k_{l - 1}, k_l}$. We approach each of these in kind.

The first terms, with $i \neq j \neq k_1 \neq \dots \neq k_l$, are grouped into the sum
\begin{equation}
    A_1 \equiv \frac{1}{(l / 2)!} \sum_{i \neq j \neq k_1 \neq \dots \neq k_l} R_{i, j} R_{k_1, k_2} \dots R_{k_{l - 1}, k_l} = \frac{(l / 2 + 1)!}{(l / 2)!} C_{l + 2} = (l / 2 + 1) C_{l + 2}.
\end{equation}

We write the second type of terms as a sum over $i = k_1 \neq j \neq k_2 \neq \dots \neq k_l$ and $i \neq j = k_1 \neq \dots \neq k_l$. These show up with multiplicity $2 (l / 2)$ since we are arranging for $k_1$ to be equal to either $i$ or $j$, hence
\begin{equation}
\begin{split}
    A_2 & \equiv \frac{2 (l / 2)}{(l / 2)!} \biggl[\sum_{i \neq j = k_1 \neq \dots \neq k_l} + \sum_{i = k_1 \neq j \neq k_2 \neq \dots \neq k_l}\biggr] R_{i, j} R_{k_1, k_2} \dots R_{k_{l - 1}, k_l} \\
    & = \frac{l}{(l / 2)!} \sum_{i \neq j \neq k_2 \neq \dots \neq k_l} (R_{i, j} R_{j, k_2} + R_{i, j} R_{i, k_2}) R_{k_3, k_4} \dots R_{k_{l - 1}, k_l} \\
    & = \frac{l}{(l / 2)!} \sum_{i \neq j \neq k_2 \neq \dots \neq k_l} \frac{R_{i, k_2} + R_{j, k_2}}{2} R_{k_3, k_4} \dots R_{k_{l - 1}, k_l} \\
    & = \frac{(n - l) l}{(l / 2)!} \sum_{i \neq k_2 \neq \dots \neq k_l} R_{i, k_2} R_{k_3, k_4} \dots R_{k_{l - 1}, k_l} \\
    & = (n - l) l C_l.
\end{split}
\end{equation}
In the third equality we use \cref{eq:Rtwo} and in the fourth we swap indices $i \leftrightarrow j$ for the second term and we sum over $j$.

The third terms of \cref{eq:alphas} we write as a sum over $i = k_1 \neq j = k_3 \neq \dots \neq k_l$ and $i = k_3 \neq j = k_1 \neq \dots \neq k_l$. Each has multiplicity $2^2 \binom{l / 2}{2}$, so
\begin{equation}
\begin{split}
    A_3 & \equiv \frac{4}{(l / 2)!} \binom{l / 2}{2} \biggl[\sum_{i = k_1 \neq j = k_3 \neq \dots \neq k_l} + \sum_{i = k_3 \neq j = k_1 \neq \dots \neq k_l}\biggr] R_{i, j} R_{k_1, k_2} \dots R_{k_{l - 1}, k_l} \\
    & = \frac{4}{(l / 2)!} \binom{l / 2}{2} \sum_{i \neq j \neq k_2 \neq k_4 \neq \dots \neq k_l} (R_{i, j} R_{i, k_2} R_{j, k_4} + R_{i, j} R_{j, k_2} R_{i, k_4}) R_{k_5, k_6} \dots R_{k_{l - 1}, k_l} \\
    & = \frac{4}{(l / 2)!} \binom{l / 2}{2} \sum_{i \neq j \neq k_2 \neq k_4 \neq \dots \neq k_l} \frac{1}{2} R_{k_2, k_4} R_{k_5, k_6} \dots R_{k_{l - 1}, k_l} \\
    & = \frac{4}{(l / 2)!} \binom{l / 2}{2} \binom{n - l + 2}{2} \sum_{k_2 \neq k_4 \neq \dots \neq k_l} R_{k_2, k_4} R_{k_5, k_6} \dots R_{k_{l - 1}, k_l} \\
    & = \frac{4}{l / 2} \binom{l / 2}{2} \binom{n - l + 2}{2} C_{l - 2} \\
    & = \binom{n - l + 2}{2} (l - 2) C_{l - 2}\ ,
\end{split}
\end{equation}
where in the third equality we used \cref{eq:Rthr} and the fact that the sums over the terms besides $R_{k_2, k_4} /2$ cancel.

Finally, with $i = k_1 \neq j = k_2 \neq \dots \neq k_l$, which has multiplicity $2 (l / 2)$,
\begin{equation}
\begin{split}
    A_4 & \equiv \frac{2 (l / 2)}{(l / 2)!} \sum_{i = k_1 \neq j = k_2 \neq \dots \neq k_l} R_{i, j} R_{k_1, k_2} \dots R_{k_{l - 1}, k_l} \\
    & = \frac{2 (l / 2)}{(l / 2)!} \sum_{i \neq j \neq k_3 \neq \dots \neq k_l} (3/4 - R_{i, j}) R_{k_3, k_4} \dots R_{k_{l - 1}, k_l} \\
    & = \frac{2 (l / 2)}{(l / 2)!} \biggl[\frac{3}{2} \binom{n - l + 2}{2} \sum_{k_3 \neq \dots \neq k_l} \bigl(R_{k_3, k_4} \dots R_{k_{l - 1}, k_l}\bigr) - (l / 2)! C_l\biggr] \\
    & = 3 \binom{n - l + 2}{2} C_{l - 2} - l C_l\ ,
\end{split}
\end{equation}
where in the second equality we used \cref{eq:Rsq}.

In total, we have
\begin{equation}
\begin{split}
    C_2 C_l & \equiv A_1 + A_2 + A_3 + A_4 \\
    & = (l / 2 + 1) C_{l + 2} + (n - l - 1) l C_l + \binom{n - l + 2}{2} (l + 1) C_{l - 2}\ ,
\end{split}
\end{equation}
which determines the coefficients in Eq.(\ref{alphas}), as 
\be
\alpha_{l-2}=(l + 1)\times \binom{n - l + 2}{2} \ , \ \ \  \ \ \alpha_l=(n - l - 1) l\ ,\   \ \ \ \ \alpha_{l+2}=l/2 + 1\ .
\ee 
With these coefficients, we can determine the recursion relation that defines the polynomials $c_l(j)$:
\begin{equation}
\begin{split}
    c_{l + 2}(j) = \frac{2 j (j + 1) - l (n - l - 1) - 3 n / 2}{l / 2 + 1}\times  c_l(j) - \binom{n - l + 2}{2} \frac{l + 1}{l / 2 + 1} \times  c_{l - 2}(j)\ .
\end{split}
\end{equation}
One can check that the polynomial $c_l(j)$ found in Eq.(\ref{formula:app}) indeed satisfies this recursive equation.

\newpage

\section{Full characterization of realizable rotationally-invariant Hamiltonians (Proof of theorem \ref{newThm})}\label{App:proofThem}

In this section we prove Theorem \ref{newThm}, which provides a full characterization of the family of Hamiltonians that can be realized using $k$-local rotationally-invariant unitaries:  for a system with $n$ qubits, the family of unitary evolutions $\exp({-i H t}): t\in\mathbb{R}$ generated by rotationally-invariant Hamiltonian $H$   
 can be implemented using $k$-local rotationally-invariant unitaries with $k\ge 2$, 
if, and only if
\begin{align}\label{con:app}
\Tr(H C_l)&=\sum_{j=j_\text{min}}^{n/2} c_l(j)\Tr(H\Pi_j)=0\ \  : \ \ \ l=2\lfloor{\frac{k}{2}}\rfloor+2,\cdots,  2\lfloor{\frac{n}{2}}\rfloor\ .
\end{align}

\noindent\textbf{Example:} As the simplest non-trivial example, consider  a system with $n=4$ qubits. Then, according to this theorem, the family of unitaries generated by rotationally-invariant   Hamiltonian $H$ can be implemented using the exchange interaction, up to a possible global phase, if and only if Eq.(\ref{con:app}) holds for $l=4$. Using Eq.(\ref{formula:app}), or, equivalently, using the polynomials in table \ref{tab:cl}, 
we can calculate $c_4(j)$  for a system with $n=4$ qubits. Then, we  obtain the necessary and sufficient condition 
\be
15 \Tr(H\Pi_{j=0})-5 \Tr(H\Pi_{j=1})+3 \Tr(H\Pi_{j=2})=0\ .
\ee

Let $\mathfrak{v}_{k}$ be the Lie algebra corresponding to the Lie group $\mathcal{V}_k$ generated by $k$-local rotationally-invariant unitaries. Equivalently, $\mathfrak{v}_{k}$ is the Lie algebra generated by $k$-local rotationally-invariant skew-Hermitian operators, i.e., 
\be\label{def:alg}
\mathfrak{v}_{k}\equiv \frak{alg}_{\mathbb{R}}\Big\{A \in \End((\mathbb{C}^2)^{\otimes n}):   A+A^\dag=0\ , A \text{ is $k$-local, }\ , [A, U^{\otimes n}]=0 : \forall U\in\text{SU}(2)  \Big\}\ ,
\ee
where $\frak{alg}_{\mathbb{R}}$ denotes the real Lie algebra generated by the operators, and 
$\End((\mathbb{C}^2)^{\otimes n})$
 denotes the space of linear operators on $(\mathbb{C}^2)^{\otimes n}$. 
 Then, in terms of Lie algebras,  theorem \ref{newThm} can be restated as 
\be
\mathfrak{v}_{k}=\Big\{A\in \mathfrak{v}_{n}: \Tr(A C_l)=0\ , l=2\lfloor{\frac{k}{2}}\rfloor+2,\cdots,  2\lfloor{\frac{n}{2}}\rfloor  \Big\}\ .
\ee

To establish this result, first 
we focus on the necessity of the constraints in Eq.(\ref{con:app}) on realizable Hamiltonians and then prove  the sufficiency of these constraints. Suppose Hamiltonian $H$ can be implemented using $k$-local rotationally invariant Hamiltonians, or more formally, suppose $i H \in\mathfrak{v}_k$ defined in Eq.(\ref{def:alg}). Then, clearly $H$ should  be rotationally invariant, i.e., $i H \in\mathfrak{v}_n$. To see that $H$ should also satisfy the constraints  in Eq.(\ref{con:app}), note that if $i H\in\mathfrak{v}_k$ then $i H$ can be written as a linear combination of the generators of $\mathfrak{v}_k$, namely, $k$-local skew-Hermitian operators $\{A_r\}$, and their nested commutators, as  
\be
i H=\sum_r \beta_r  A_r +\sum_{r_1, r_2} \beta_{r_1,r_2} [A_{r_1},  A_{r_2} ]+ \sum_{r_1, r_2, r_3} \beta_{r_1,r_2,r_3} [ [ A_{r_1},  A_{r_2} ], A_{r_3}]+\cdots \ ,
\ee
where $\beta_r, \beta_{r_1, r_2}, \cdots$ are  real coefficients and $\{A_r\}$ are $k$-local rotationally-invariant skew-Hermitian operators.  Consider the vector defined by $\Tr(H \Pi_j): j=j_\text{min},\cdots , j_\text{max}$ , called the charge vector of $H$ in \cite{marvian2022restrictions}. To determine this vector we note that 
\be
\Tr\big([A_{r_1},A_{r_2}] \Pi_j\big)= \Tr\big(A_{r_1}[A_{r_2}, \Pi_j]\big)=0\ ,
\ee
where the first equality follows from the cyclic property of trace, and the second equality is a consequence of the rotational symmetry of generators, which in particular means they commute with the projectors to sectors with different angular momenta.  It follows that
\be
 \Tr(H \Pi_j)= -i \sum_r \beta_r \Tr(A_r \Pi_j)\ \ :\ \ \ \ j=j_\text{min},\cdots , j_\text{max}\ . 
\ee
Recall that $\{C_l\}$ operators form another basis for the space of  
rotationally and permutationally invariant operators, denoted by $\mathcal{C}$, and in particular, each $C_l$ is a linear combination of projectors $\{\Pi_j\}$.  Therefore, the above equation can be equivalently restated as 
\be
\Tr(H C_l)=-i\sum_r \beta_r \Tr(A_r C_l)\ \ :\ \ \ \ l=0,\cdots , 2\lfloor\frac{n}{2}\rfloor\ .
\ee
Next,  recall that $C_l$ is orthogonal to $k$-local  operators with $k<l$. Since each generator $A_r$ is $k$-local, this means  $\Tr(C_l A_r)=0$ for $l>k$, We conclude that if $iH\in \mathfrak{v}_{k}$, then the constraints in Eq.(\ref{con:app}) hold. This completes the proof of the necessity of this condition, or, more precisely proves 
\be\label{subset}
\mathfrak{v}_{k}\subseteq \Big\{A\in \mathfrak{v}_{n}: \Tr(A C_l)=0\ , l=2\lfloor{\frac{k}{2}}\rfloor+2,\cdots,  2\lfloor{\frac{n}{2}}\rfloor  \Big\}\ .
\ee
Before presenting the argument for the sufficiency of this condition, we discuss a corollary of this result, which will be used in the next section. Consider the center of the Lie algebra $\mathfrak{v}_k$,  that is members of  $\mathfrak{v}_{k}$ commuting with the rest of this algebra, denoted by
\be
\mathfrak{z}_k \equiv \{A\in \mathfrak{v}_k: [A, F]=0 , \forall F\in \mathfrak{v}_k \}\ .
\ee
Using the above result we can show that
\begin{proposition}\label{prop:center}
The center of the Lie algebra $\mathfrak{v}_k$ is  $\mathfrak{z}_k=\text{span}_\mathbb{R}\{i C_l: l=0,\cdots, 2\lfloor k/2 \rfloor\}$. 
\end{proposition}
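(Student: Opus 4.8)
The plan is to prove the two inclusions separately, using three ingredients already available: the necessity half of Theorem~\ref{newThm}, i.e.\ Eq.~\eqref{subset}; the fact that $\{C_l : l=0,2,\dots,2\lfloor n/2\rfloor\}$ is an orthogonal basis of $\mathcal{C}$; and the Schur--Weyl characterization of $\mathcal{C}$ in Eq.~\eqref{AppdefC}.

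For $\text{span}_{\mathbb{R}}\{iC_l : l=0,\dots,2\lfloor k/2\rfloor\}\subseteq\mathfrak{z}_k$: since $l\le 2\lfloor k/2\rfloor$ is even and $\le k$, the operator $C_l$ is, by its definition \eqref{art}, a sum of $l$-local (hence $k$-local) rotationally-invariant Hermitian operators — grouping the terms $R_{i_1,i_2}\cdots R_{i_{l-1},i_l}$ by the $l$-element set $\{i_1,\dots,i_l\}$ produces Hermitian pieces, as already noted after Theorem~\ref{newThm} — so $iC_l\in\mathfrak{v}_k$. That $iC_l$ is central is immediate: $C_l\in\mathcal{C}\subseteq\text{span}_{\mathbb{C}}\{\Pi_j\}$, and every element of $\mathfrak{v}_n\supseteq\mathfrak{v}_k$ is block diagonal with respect to the angular-momentum decomposition, hence commutes with every $\Pi_j$ and with $C_l$. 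Since $\mathfrak{z}_k$ is a real vector space, the whole real span lies in $\mathfrak{z}_k$.

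For the reverse inclusion, let $A\in\mathfrak{z}_k$; the goal is to show $A\in i\mathcal{C}$, after which Eq.~\eqref{subset} finishes the argument. The key step is that, for $k\ge 2$, each $iR_{rs}$ is a $2$-local (hence $k$-local) rotationally-invariant skew-Hermitian operator, so $iR_{rs}\in\mathfrak{v}_k$ and therefore $[A,R_{rs}]=0$ for all $r<s$. Because the two-qubit SWAP obeys $\mathbf{P}_{(rs)}=\tfrac12 I+R_{rs}$, this gives $[A,\mathbf{P}(\sigma)]=0$ for every transposition, hence for every $\sigma\in\mathbb{S}_n$ since transpositions generate $\mathbb{S}_n$. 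As $A$ is also rotationally invariant ($A\in\mathfrak{v}_k\subseteq\mathfrak{v}_n$), Eq.~\eqref{AppdefC} forces $A=iH$ with $H\in\mathcal{C}$ Hermitian. Now invoke Eq.~\eqref{subset}: $\Tr(AC_l)=0$, equivalently $\Tr(HC_l)=0$, for $l=2\lfloor k/2\rfloor+2,\dots,2\lfloor n/2\rfloor$. Expanding $H$ in the orthogonal basis, $H=\sum_{l}\frac{\Tr(HC_l)}{\Tr(C_l^2)}C_l$, only the terms with $l\le 2\lfloor k/2\rfloor$ survive and the coefficients are real, so $A=iH\in\text{span}_{\mathbb{R}}\{iC_l : l\le 2\lfloor k/2\rfloor\}$. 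Combining the two inclusions gives the proposition.

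The only genuinely nonobvious point is the opening move of the reverse inclusion — recognizing that centrality in $\mathfrak{v}_k$ already forces commutation with every exchange interaction, and hence, because $R_{rs}$ differs from the transposition operator only by a multiple of the identity, full permutation invariance. Once that is in place, the remainder is routine bookkeeping with the basis $\{C_l\}$ and the portion of Theorem~\ref{newThm} proved above; no further combinatorics or representation theory is required.
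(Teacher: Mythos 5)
Your proof is correct and follows essentially the same route as the paper's: the forward inclusion via $C_l$ being a sum of $k$-local symmetric Hermitian terms that commute with everything in $\mathcal{C}$'s commutant, and the reverse inclusion by deducing permutation invariance of central elements from the $2$-local generators (the paper uses $i\mathbf{P}_{ab}\in\mathfrak{v}_k$ directly, you use $iR_{rs}$ and $\mathbf{P}_{rs}=R_{rs}+\tfrac12 I$, which is the same observation), then invoking Schur--Weyl and the necessity half of Theorem~\ref{newThm} with the orthogonality of $\{C_l\}$. The only omission is the trivial check of the cases $k=0,1$, where the argument $iR_{rs}\in\mathfrak{v}_k$ is unavailable; the paper dispatches these in one line and you should too.
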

\begin{proof}
For $k=0$ and $k=1$  the  result holds trivially. Hence, in the following we assume $k\ge 2$.  To prove the claim, first note that  any operator $i C_l$ with $l\le k$ is a linear combination of $k$-local rotationally-invariant skew-Hermitian operators and therefore is  inside $\mathfrak{v}_k$. Furthermore, 
because $C_l=\sum_j c_l(j)\Pi_j$  it commutes  with all rotationally-invariant operators, and hence commutes with all elements of $\mathfrak{v}_k$. Therefore, it is in its center, $\mathfrak{z}_k$, that is
\be\label{center101}
 \text{span}\{i C_l: l=0, 2,\cdots,  2\lfloor k/2\rfloor\}\subseteq \mathfrak{z}_k \ .
\ee
  To see that the center $\mathfrak{z}_k$ is spanned by operators $iC_l:  l\le k$, we note that the elements of the center are both permutationally and rotationally invariant. This follows from the fact that for $k\ge 2$, $\mathfrak{v}_k$ contains 
$i\textbf{P}_{ab}: 1\le a<b\le n$, where 
$\textbf{P}_{ab}$ is the single-qubit swap operator that exchanges the states of qubits $a$ and $b$.  Clearly, $\textbf{P}_{ab}$ is 2-local and rotationally-invariant, which means $i\textbf{P}_{ab}\in\mathfrak{v}_k$ for $k\ge 2$. This immediately implies that the center of $\mathfrak{v}_k$ denoted by $\mathfrak{z}_k$ only contains rotationally-invariant skew-Hermitian operators that are also  permutationally-invariant (recall that any permutation can be obtained by a sequence of single-qubit swaps. Therefore, invariance under swaps implies permutational invariance). As we have seen before, such operators can be written as linear span of $\{i\Pi_j\}$ operators, or equivalently, the linear span of $\{i C_l\}$ operators. That is $\mathfrak{z}_k \subset \text{span}\{i C_l: l=0, 2,\cdots,  2\lfloor n/2\rfloor\}$. Combining this with Eq.(\ref{subset}) and using the orthogonality relation $\Tr(C_l C_{l'})=\delta_{l, l'} \Tr(C^2_l)$ we conclude that 
$\mathfrak{z}_k \subseteq \text{span}\{i C_l: l=0, \cdots, 2\lfloor k/2\rfloor\}$. 
 This together with Eq.(\ref{center101})   proves that $\mathfrak{z}_k = \text{span}\{i C_l: l=0, \cdots, 2\lfloor k/2\rfloor\}$, as stated in the above proposition. 
\end{proof}
In the next section, we discuss more about the Lie algebra $\mathfrak{v}_k$ and determine its dimension.

Next, we show the sufficiency of conditions in Eq.(\ref{con:app}). That is, we show that for any rotationally-invariant Hermitian operator $H$ satisfying these conditions, the family of unitaries $\exp(-i H t): t\in\mathbb{R}$ can be implemented using $k$-local rotationally-invariant unitaries. This amounts to showing that
$i H \in\mathfrak{v}_k$. As we saw in the main paper conditions in Eq.(\ref{con:app}) imply that $H$ can be written as 
\be\label{App: dec-22}
H=H_0+\sum_{l=0}^{2\lfloor n/2\rfloor} \frac{\Tr(H C_l)}{\Tr(C^2_l)}\ C_l\ =H_0+\sum_{l=0}^{2\lfloor k/2\rfloor}  \frac{\Tr(H C_l)}{\Tr(C^2_l)}\ C_l\ ,
\ee
where $H_0$ satisfies the condition 
\be\label{app:cent}
\Tr(H_0 \Pi_j)=\Tr(H_0 C_l)=0: \ \ l=0, 2,\cdots, 2\lfloor n/2\rfloor ; j=j_\text{min},\cdots , j_\text{max}\ .
\ee
Since the term $i(\sum_{l=0}^{2\lfloor k/2\rfloor}  \frac{\Tr(H C_l)}{\Tr(C^2_l)}\ C_l)$ can be written as a sum of $k$-local rotationally-invariant permutationally-invariant operators, it lives inside the center of $\mathfrak{v}_k$. 

Next, we focus on $iH_0\in\mathfrak{v}_k$ for $k\ge 2$. Note that Eq.(\ref{app:cent}) means that $iH_0$ does not have any component in the center of $\mathfrak{v}_k$. 

\begin{theorem}\label{Thm1}
Suppose rotationally-invariant Hamiltonian $H_0$ satisfies  $\Tr(H_0\Pi_j)=0$ for all $j=j_\text{min},\cdots, j_\text{max}$. Then, $i H_0\in \mathfrak{v}_k$ for $k\ge 2$, and therefore   the family of unitaries $\exp(-i H_0 t): t\in\mathbb{R}$, generated by Hamiltonian $H_0$ can be realized using 2-local rotationally invariant unitaries.  Equivalently, any rotationally-invariant unitary  $Y$ whose decomposition  $Y\cong \bigoplus_j \mathbb{I}_{2j+1}\otimes y_j$, in the Schur basis,   satisfies the property $\text{det}(y_j)=1:  j=j_\text{min}, \cdots, j_\text{max}$ can be realized using 2-local rotationally-invariant unitaries, i.e., $Y\in\mathcal{V}_2$.  
\end{theorem}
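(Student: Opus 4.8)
The plan is to show that the Lie algebra $\mathfrak{v}_2$ (and hence $\mathfrak{v}_k$ for every $k \ge 2$) contains every skew-Hermitian rotationally-invariant operator $iH_0$ satisfying $\Tr(H_0 \Pi_j) = 0$ for all $j$. Since $\mathfrak{v}_2$ is generated by the exchange interactions $\{iR_{rs}\}$, and by Proposition \ref{prop:center} its center is spanned by $\{iC_0, iC_2\}$ (here $k = 2$, so only $l = 0, 2$), it suffices to show that the commutator ideal $[\mathfrak{v}_2, \mathfrak{v}_2]$ already equals the codimension-$(\lfloor n/2 \rfloor)$ subspace $\{A \in \mathfrak{v}_n : \Tr(A\Pi_j) = 0\ \forall j\}$ — equivalently, that $\mathfrak{v}_2$ has codimension exactly $\lfloor n/2 \rfloor - 1$ in $\mathfrak{v}_n$ and its center is as small as possible. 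The necessity direction (that $iH_0 \in \mathfrak{v}_k$ forces these trace conditions with the appropriate $l$-range) is already proved via Eq.(\ref{subset}); what remains is the reverse inclusion.

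First I would reduce to the algebraic claim: it is enough to show $[\mathfrak{v}_2, \mathfrak{v}_2] \supseteq \{A \in \mathfrak{v}_n : \Tr(A\Pi_j) = 0\ \forall j\}$, because any such $H_0$ is then realizable and, via Eq.(\ref{App: dec-22}), a general $H$ satisfying the theorem's conditions is $H_0$ plus a central piece (a combination of $C_l$, $l \le k$) that is manifestly $k$-local. The core of the argument is a structural analysis of the real Lie algebra generated by the transpositions acting on the multiplicity spaces $\mathcal{M}_j$. Under the Schur decomposition, $\mathfrak{v}_2$ acts block-diagonally, and on the block labeled by $j$ it acts as a subalgebra of $\mathfrak{u}(m(n,j))$ that contains the image of the symmetric-group representation $\mathbf{p}_j$. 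I would invoke the fact — which the paper attributes to Marin \cite{marin2007algebre}, but which can also be established directly — that the algebra generated by the transpositions $\{\mathbf{p}_j(\tau_{rs})\}$, together with its commutators, is large enough that the derived algebra $[\mathfrak{v}_2,\mathfrak{v}_2]$ contains $\mathfrak{su}(m(n,j))$ (or the full $\mathfrak{sl}$-type simple factor) in each block, with the only missing directions being the "trace" directions $i\,\mathbb{I}_{m(n,j)}$ on each block — precisely the span of $\{i\Pi_j\}$, i.e. $\mathcal{C}$. This gives $\mathfrak{v}_2 = [\mathfrak{v}_2,\mathfrak{v}_2] \oplus (\text{center})$ with center $\mathrm{span}\{iC_0, iC_2\}$ and derived algebra $= \{A \in \mathfrak{v}_n : \Tr(A\Pi_j)=0\}$, which is exactly the claimed statement.

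The main obstacle is the block-by-block irreducibility/simplicity input: one must rule out that the Lie algebra generated by the transpositions in a given multiplicity block is a proper subalgebra of $\mathfrak{su}(m(n,j))$ (e.g. an orthogonal or symplectic subalgebra, or a reducible one). This is where Marin's classification of the simple factors of the transposition Lie algebra does the heavy lifting; absent that, one would argue directly — noting that $\mathbf{p}_j$ is an irreducible representation of $\mathbb{S}_n$ by Schur-Weyl duality, so the associative algebra generated by the $\mathbf{p}_j(\tau_{rs})$ is all of $\mathrm{End}(\mathcal{M}_j)$, and then showing the Lie algebra is correspondingly large by exhibiting enough independent commutators (using that distinct blocks carry inequivalent $\mathbb{S}_n$-irreps, so the commutant argument forces the diagonal-block structure and the full simple factor in each block, modulo the scalar). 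The passage from "associative algebra is everything" to "Lie algebra contains $\mathfrak{su}$" is the delicate point and is exactly what the cited combinatorial/representation-theoretic result supplies; in the appendix I would present the self-contained version using nested commutators of the $R_{rs}$ to generate a spanning set modulo $\mathcal{C}$.
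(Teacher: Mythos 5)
Your reduction is sound: since $\mathfrak{v}_2$ is reductive and (by the necessity argument behind Eq.~(\ref{subset})) every commutator is orthogonal to all $\Pi_j$, proving the theorem is indeed equivalent to showing $[\mathfrak{v}_2,\mathfrak{v}_2]\supseteq\{A\in\mathfrak{v}_n:\Tr(A\Pi_j)=0\ \forall j\}$, and the rest of your bookkeeping (center $=\mathrm{span}\{iC_0,iC_2\}$, codimension count) is consistent with Proposition~\ref{prop:center} and Eq.~(\ref{dim}). If you simply import Marin's theorem, the argument closes, and the paper explicitly acknowledges this as a legitimate alternative route. The problem is your proposed self-contained substitute, which is where the entire content of the theorem lives and where your sketch has a genuine gap. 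The step from ``the associative algebra generated by $\{\mathbf{p}_j(\tau_{rs})\}$ is all of $\mathrm{End}(\mathcal{M}_j)$'' (true, by Schur--Weyl and the double commutant) to ``the Lie algebra generated by $\{i\mathbf{p}_j(\tau_{rs})\}$ contains $\mathfrak{su}(m(n,j))$'' is false in general: $\mathfrak{so}(m)$ and $\mathfrak{sp}(m)$ act irreducibly and generate the full matrix algebra associatively while being proper Lie subalgebras. Since every irrep of $\mathbb{S}_n$ is real, each block genuinely carries an invariant symmetric form, so the possibility that a block yields only an orthogonal subalgebra must be excluded by an actual computation, not by irreducibility. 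Likewise, inequivalence of the $\mathbb{S}_n$-irreps in different blocks rules out couplings of the associative algebras but not of the generated \emph{Lie} algebra: a single simple factor can sit diagonally across two blocks carrying inequivalent irreps (this really happens for transpositions under the sign-twist pairing $\lambda\leftrightarrow\lambda'$ in Marin's classification), so ``the commutant argument forces the full simple factor in each block, independently'' is not a proof. Your closing sentence defers exactly these two points to an unexhibited family of nested commutators; that is the theorem, not a lemma.

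For comparison, the paper's own proof avoids the structure theory of the transposition algebra entirely and instead inducts on $n$: the branching $\mathbb{C}^{m(n,j)}\cong\mathbb{C}^{m(n-1,j-\frac{1}{2})}\oplus\mathbb{C}^{m(n-1,j+\frac{1}{2})}$ lets the induction hypothesis realize arbitrary block-diagonal unitaries $v_{j-}\oplus v_{j+}$ using the first $n-1$ qubits, the interactions $R_{in}$ supply the off-block-diagonal directions, and together these give every unitary on each single multiplicity space; Lemma~\ref{lemma-main} (using the swaps $\P_{ab}$) then decouples the different angular-momentum sectors up to the phase constraints. That route is elementary and constructive but requires its own careful bookkeeping (also deferred in part to \cite{HLM_2021}); yours is shorter but only if Marin's theorem is taken as a black box, including the verification that for the two-row partitions appearing here it delivers independent full $\mathfrak{sl}$-type factors rather than orthogonal, symplectic, or diagonally coupled ones.
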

The equivalence of these two statements can be easily seen by noting that an operator $W$ is unitary and satisfies $\text{det}(W)=1$ if, and only if, there exists a traceless Hermitian operator $F$, such that $W=\exp(i F)$.    

It is useful to consider  other (equivalent) ways of expressing the above result.  Let $\mathcal{SV}_{k}\subset \mathcal{V}_{k}$ be the subgroup of $\mathcal{V}_{k}$ formed from rotationally-invariant unitaries $Y\cong \bigoplus_j \mathbb{I}_{2j+1}\otimes y_j$ that satisfy the additional constraint $\det(y_j)=1$ for all angular momenta $j$.    Then, the above theorem can be summarized as 
\be
\mathcal{SV}_{2}=\mathcal{SV}_{n}\ ,
\ee
or, equivalently, in terms of the corresponding Lie algebras
\be
\mathfrak{sv}_{2}=\mathfrak{sv}_{n}\ ,
\ee
where
\be
\mathfrak{sv}_{k}=\big\{A\in \mathfrak{v}_{k}: \Tr(A \Pi_j)=0\ , j=j_\text{min},\cdots,  j_\text{max}  \big\}\ .
\ee
Finally, note that any rotationally-invariant unitary $V\in \mathcal{V}_{n}$ has a unique decomposition as $V=[\sum_j \e^{\i \theta_j} \Pi_j] Y$, where $Y$ is in the group $\mathcal{SV}_{n}$ and $\theta_j\in[0,2\pi)$. Therefore, yet another way to phrase the above result is to say any rotationally-invariant unitary has a decomposition as   
\be
V=[\sum_j \e^{\i \theta_j} \Pi_j] Y\ :\ \ \ \ \   Y\in \mathcal{SV}_{2};\ \ \  \theta_j\in [0,2\pi)\ . 
\ee

   \subsection*{Proof of theorem \ref{Thm1}}
  
In the following, we present an overview of an elementary proof of theorem \ref{Thm1}, and present further technical details in \cite{HLM_2021}.  As stated before, we note that this result can also be shown using the more advanced Lie-algebraic results of Marin in \cite{marin2007algebre}.

 Recall the decomposition of the Hilbert space of $n$ qubits to irreps of SU(2):
 \be\label{decomp:SU}
(\mathbb{C}^2)^{\otimes n} \cong \bigoplus_{j=j_\text{min}}^{j_\text{max}} \mathcal{H}_j =\bigoplus_{j=j_\text{min}}^{j_\text{max}} \mathbb{C}^{2j+1} \otimes  \mathbb{C}^{m(n,j)} \ .
\ee
Relative to this decomposition, a general SU(2)-invariant unitary $V$ can be written as
\be\label{dec:app:V}
 V\cong\bigoplus_j \mathbb{I}_{2j+1} \otimes v_j\ ,
\ee
 where $v_j$   is an arbitrary unitary on the multiplicity subsystem $\mathbb{C}^{m(n,j)}$.

  To prove the theorem,  we use induction over $n$, the number of qubits.  That is we assume that for $n-1$ qubits, any rotationally-invariant  unitary $V$ can be realized, up to relative phases between sectors with different angular momenta.    From the representation theory of $\SU(2)$ we know that combining states with angular momentum $j'$ of $n-1$ qubits with states of a single qubit we obtain states of $n$ qubits with angular momenta $j'\pm\frac{1}{2}$. It follows that for $j\neq 0, \frac{n}{2}$ the multiplicity subsystem $\mathbb{C}^{m(n, j)}$ in \cref{decomp:SU} decomposes as 
\be\label{block}
\mathbb{C}^{m(n, j)} \cong \mathbb{C}^{m(n-1, j-\frac{1}{2})}\oplus \mathbb{C}^{m(n-1, j+\frac{1}{2})}\ ,
\ee
where $\mathbb{C}^{m(n-1, j\pm\frac{1}{2})}$ is the multiplicity subsystem associated to angular momentum $j\pm 1/2$ for a system with $n-1$ qubits.

According to the induction hypothesis, for any pair of unitaries $v_{j_\pm}$ acting on $\complex^{m(n - 1, j \pm \frac{1}{2})}$, there is an $\SU(2)$-invariant unitary $V$ acting on $n$ qubits with the properties that (i) $V$ can be realized by 2-local $\SU(2)$-invariant unitaries on the selected $n-1$ qubits and acts trivially on the remaining qubit, and (ii) it acts as the unitaries $v_{j_\pm}$ (up to phases) on the corresponding multiplicity subsystems $\mathbb{C}^{m(n-1, j\pm \frac{1}{2})}$ of the $n - 1$ qubits. Therefore, on the multiplicity space of the angular momentum $j$ sector of the $n$ qubits, $V$ acts as $v_{j_-} \oplus v_{j_+}$, i.e., is block-diagonal with respect to the decomposition in \cref{block}. Note that being block-diagonal with respect to this particular decomposition is a property of any $\SU(2)$-invariant unitary acting on the same $n - 1$ qubits, due to the fact that any such unitary separately preserves the angular momenta of the $n - 1$ qubits and the remaining qubit. On the other hand, this angular momentum does not remain conserved under general 2-local rotationally-invariant Hamiltonians that act non-trivially on qubit $n$, such as $R_{in}$ for $i\le n-1$. That is, unitaries generated by such Hamiltonians will not be block-diagonal with respect to the decomposition in \cref{block}. As we show in \cite{HLM_2021}, combining such unitaries with unitaries that are block-diagonal with respect to the decomposition in \cref{block} we obtain any desired unitary $v_j$ on the multiplicity subsystem $\mathbb{C}^{m(n, j)}$. More precisely, for any $V\in \mathcal{V}_{n}$ and angular momentum $j$, there exists $Y\in\mathcal{V}_{2}$ such that $V\Pi_j=Y \Pi_j$. 

The next step of the proof is to show that for unitaries generated by 2-local unitaries, all the corresponding unitaries $\{v_j\}$ in Eq.(\ref{dec:app:V})  can be chosen to be independent of each other; i.e., except a constraint on their phases, they do not satisfy any additional constraints among themselves. To establish this, we prove the following result, which is of independent interest.
For any pair of distinct qubits $a, b\in\{1,\cdots, n\}$ let $\textbf{P}_{ab}$ be the unitary that swaps the states of $a$ and $b$ and leave the other qubits invariant.

\begin{lemma}\label{lemma-main}
  Let $\mathcal{Y}$ be a subgroup of the group of rotationally invariant unitaries $\mathcal{V}_{n}$ satisfying the following properties: (i) for all angular momenta $j\le n/2$ the projection of $\mathcal{V}_{n}$ and $\mathcal{Y}$ to the sector $j$ are equal in the following sense: for any $V\in \mathcal{V}_{n}$ there exists $Y\in \mathcal{Y}$ and a phase $\e^{\i\theta}$ such that $ V\Pi_j=\e^{\i\theta} Y\Pi_j $; and (ii) for any pair of qubits $a$ and $b$, $\mathcal{Y}$ contains the swap ${\P}_{ab}$, up to a possible global phase $\e^{\i\theta}$, i.e. $\e^{\i\theta}{\P}_{ab}\in \mathcal{Y}$. Then $\mathcal{SV}_{n}\subset \mathcal{Y}$, or equivalently, any rotationally-invariant unitary $V\in \mathcal{V}_{n}$ can be written as $V=(\sum_j \e^{\i\theta_j}\Pi_j) Y$ for $Y\in\mathcal{Y}$ and $\theta_j\in [0,2\pi)$.
\end{lemma}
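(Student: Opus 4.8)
The plan is to argue by induction on the number of angular-momentum sectors, building up the claim that $\mathcal{Y}$ contains all of $\mathcal{SV}_n$ sector by sector, while exploiting the swaps $\mathbf{P}_{ab}$ to ``untangle'' the relative phases that hypothesis~(i) leaves uncontrolled. First I would fix notation: write $V\cong\bigoplus_j \mathbb{I}_{2j+1}\otimes v_j$ for a general $V\in\mathcal{V}_n$, and recall that $\mathcal{SV}_n$ consists of those with $\det(v_j)=1$ for all $j$. The goal is to show that for any such $V$ there is $Y\in\mathcal{Y}$ with $V=(\sum_j \e^{\i\theta_j}\Pi_j)\,Y$; equivalently, $\mathcal{SV}_n\subseteq\mathcal{Y}$ once we fix the phases appropriately. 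Hypothesis~(i) already tells us that for each individual $j$ we can find $Y^{(j)}\in\mathcal{Y}$ matching $V$ on the $j$-th sector up to a scalar; the whole content is to combine these so the errors on the \emph{other} sectors can be corrected too.

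The key device is the following observation about products. Suppose $Y_1,Y_2\in\mathcal{Y}$ agree with $V$ (up to phases) on sectors $j_1$ and $j_2$ respectively. I want to produce an element of $\mathcal{Y}$ that agrees with $V$ up to phases on \emph{both} sectors simultaneously. The natural move is to consider $Y_1 Y_2^{-1}$, which acts trivially (up to a scalar) on sector $j_2$ but may be arbitrary on sector $j_1$; more useful is to work with a ``localized'' correction. Concretely, since $\mathcal{Y}$ contains all swaps $\mathbf{P}_{ab}$ (up to phase), it contains the full permutation group $\mathbb{S}_n$ up to phases, and by Schur--Weyl duality (Appendix~\ref{app:duality}) the permutation representation $\mathbf{p}_j$ is irreducible on each multiplicity space $\mathcal{M}_j$ and the various $\mathbf{p}_j$ are pairwise inequivalent. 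This inequivalence is the crucial lever: it means that within the group generated by $\mathcal{Y}$ one can, by taking suitable words in the generators, act independently on the different sectors. So the plan for the inductive step is: given that $\mathcal{Y}$ already contains (mod phases) every element of $\mathcal{SV}_n$ supported on sectors $j_{\min},\dots,j_{r}$, and given an element $Y_{r+1}\in\mathcal{Y}$ matching a target on sector $j_{r+1}$, I use the inequivalence of $\{\mathbf p_j\}$ to find an element of $\mathcal{Y}$ that equals $Y_{r+1}$ on sector $j_{r+1}$ and equals the identity on sectors $j_{\min},\dots,j_r$ up to determinant-one corrections that the induction hypothesis absorbs; then composing with the element provided by the induction hypothesis handles all $r+1$ sectors. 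Keeping track of the accumulated phases $\theta_j$ throughout, and at the end noting they can be collected into a single factor $\sum_j \e^{\i\theta_j}\Pi_j$, finishes the argument.

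The main obstacle, and where I expect the real work to live, is precisely this decoupling step: showing that the relative phases introduced when matching one sector do not obstruct matching the next. Hypothesis~(i) only gives a \emph{scalar} freedom per sector, not an honest element of $\mathcal{SV}_n$ supported on a single sector, so one cannot naively multiply sector-by-sector corrections. The resolution must come from combining (i) and (ii): the swaps let us generate, inside $\mathcal{Y}$, unitaries of the form $\bigoplus_j \e^{\i\phi_j}\mathbf{p}_j(\sigma)$ for $\sigma\in\mathbb{S}_n$, and by a dimension/representation-theoretic argument (e.g., that the $\mathbf p_j$, being inequivalent irreps, generate a Lie algebra whose center is exactly the span of $\{i\Pi_j\}$ — compare Proposition~\ref{prop:center} with $k=2$) one shows that commutators of such elements with the sector-matching unitaries from (i) sweep out all determinant-one unitaries on each $\mathcal{M}_j$ independently. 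A clean way to package this: show the group generated by $\mathcal{Y}$ contains $\prod_j \SU(\mathcal{M}_j)$, hence contains $\mathcal{SV}_n$, by checking its Lie algebra contains $\mathfrak{su}(\mathcal{M}_j)$ for each $j$ — which reduces to the irreducibility and pairwise inequivalence of the $\mathbf p_j$ plus hypothesis~(i). The bookkeeping of global phases and the verification that the leftover abelian part is exactly $\{\sum_j \e^{\i\theta_j}\Pi_j\}$ is then routine.
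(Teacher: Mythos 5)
First, a caveat about the comparison itself: the paper does not actually prove this lemma. It is stated, given an intuitive gloss ("the second assumption guarantees a certain level of independence between unitaries in different angular momentum sectors"), and its proof is deferred to the companion reference \cite{HLM_2021}. So there is no in-paper argument to check you against step by step. That said, your overall strategy is the right one and is consistent with the paper's stated intuition: hypothesis (i) makes the projection of $\mathcal{Y}$ onto each sector full (modulo phases), and hypothesis (ii) supplies, via Schur--Weyl duality, the pairwise inequivalent irreps $\mathbf{p}_j$ of $\mathbb{S}_n$ on the multiplicity spaces $\mathcal{M}_j$, which is exactly what must break any cross-sector coupling. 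Your final packaging --- show that the Lie algebra of $\mathcal{Y}$ contains $\bigoplus_j \mathfrak{su}(\mathcal{M}_j)$, hence that $\mathcal{Y}\supseteq\prod_j\SU(\mathcal{M}_j)=\mathcal{SV}_n$ --- is the natural route; your first, induction-on-sectors paragraph does not add anything and can be dropped in favor of it.

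The gap is that the decisive step is asserted rather than carried out, and it is not routine. Concretely: (a) from (i) you get that $\mathfrak{g}=\mathrm{Lie}(\overline{\mathcal{Y}})$ surjects onto each simple factor $\mathfrak{su}(\mathcal{M}_j)$ modulo the center; what you then need is the structure theorem that a subalgebra of $\bigoplus_j\mathfrak{su}(\mathcal{M}_j)$ surjecting onto every factor is a direct sum, over a partition of the sectors, of diagonally embedded copies. This must be proved or cited --- it is precisely the point where "one cannot naively multiply sector-by-sector corrections" gets overcome, and your proposal only gestures at it. (b) To exclude a diagonal linking $j\neq j'$ you must show that its invariance under conjugation by $\big(\mathbf{p}_j(\sigma),\mathbf{p}_{j'}(\sigma)\big)$ for all $\sigma$ forces $\mathbf{p}_{j'}$ to be equivalent to $\mathbf{p}_j$ up to an outer automorphism of $\mathfrak{su}(m)$ (complex conjugation) and a one-dimensional character twist ($\otimes\,\mathrm{sgn}$), and then rule out all of these using the two-row Young-diagram structure of the Schur--Weyl irreps. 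A dimension count is genuinely insufficient here: for $n=6$ one has $m(6,0)=m(6,2)=5$, so $\SU(\mathcal{M}_0)\cong\SU(\mathcal{M}_2)$ as groups and only the inequivalence of the \emph{representations} breaks the tie; the sign-twisted case is excluded because $\mathbf{p}_j\otimes\mathrm{sgn}$ corresponds to the transposed diagram, which is not a two-row diagram for the relevant $n$. Finally, the Lie-algebraic route only yields $\mathcal{SV}_n\subseteq\overline{\mathcal{Y}}$; to conclude $\mathcal{SV}_n\subseteq\mathcal{Y}$ itself you need $\mathcal{Y}$ closed (true in the intended application $\mathcal{Y}=\mathcal{V}_2$, but worth flagging since the lemma assumes only a subgroup). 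Until (a) and (b) are written out, the proposal is a correct plan rather than a proof.
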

Intuitively, the first assumption means that inside each angular momentum sector, the unitaries in $\mathcal{Y}$ are not restricted, i.e., can be any rotationally invariant unitary, whereas the second assumption guarantees a certain level of independence between unitaries in different angular momentum sectors. Combining this lemma with the above argument, one can prove \cref{Thm1} (See \cite{HLM_2021} for further details). \\

\newpage

\section{The dimension of the Lie group generated by \texorpdfstring{$k$}{k}-local \texorpdfstring{$\SU(2)$}{SU(2)}-invariant unitaries (proof of Eq. (\ref{dim})) }\label{App:dim}

As we saw in theorem \ref{newThm},  a general rotationally-invariant Hamiltonian $H$  can be implemented using $k$-local rotationally-invariant unitaries,  i.e.,  $i H\in\mathfrak{v}_k$, if and only if 
\be
\Tr(H C_l)=0 \ : \ \ \  \ l=2\lfloor k/2\rfloor+2,\cdots, 2\lfloor n/2\rfloor \ ,
\ee
which means $H$ can be decomposed as
\be
H=H_0+\sum_{l=0}^{2\lfloor k/2\rfloor} \frac{\Tr(H C_l)}{\Tr(C^2_l)}\ C_l\ \ ,
\ee
where $H_0$ can be any rotationally-invariant Hermitian operator satisfying the condition 
\be
\Tr(H_0 \Pi_j)=\Tr(H_0 C_l)=0: \ \ l=0, 2,\cdots, 2\lfloor n/2\rfloor\ \ ; \ \ j=j_\text{min},\cdots , j_\text{max}\ .
\ee 
In terms of the corresponding  Lie algebras, this result can be restated as following: 
 \be\label{eq:Liealg}
\mathfrak{v}_{k}=\mathfrak{sv}_{k} \oplus \mathfrak{z}_{k}\ ,
\ee
where
\be
\mathfrak{v}_{k}\equiv \frak{alg}_{\mathbb{R}}\Big\{A: A+A^\dag=0\ , A \text{ is $k$-local, }\ , [A, U^{\otimes n}]=0 : \forall U\in\text{SU}(2)  \Big\}\ ,
\ee
the sub-algebra $\mathfrak{z}_k$ is 
\be
\mathfrak{z}_k=\text{span}_\mathbb{R}\{i C_l: l=0,\cdots, 2\lfloor k/2 \rfloor\}\ ,
\ee
which according to proposition \ref{prop:center} is the center of $\mathfrak{v}_{k}$, and 
  \be
\mathfrak{sv}_{k}\equiv \big\{A\in \mathfrak{v}_{k}: \Tr(A \Pi_j)=0\ , j=j_\text{min},\cdots j_\text{max}  \big\}\ ,
\ee
which according to theorem \ref{Thm1} satisfies 
\be
\mathfrak{sv}_{k}=\mathfrak{sv}_{2}=\mathfrak{sv}_{n}\ ,
\ee
for $k\ge 2$. Note that $\mathfrak{sv}_{k}$ and $\mathfrak{z}_{k}$ are orthogonal with respect to the  Hilbert Schmidt inner product (or, equivalently, with respect to the Killing form).

 The latter equation together with the decomposition in Eq.(\ref{eq:Liealg})  imply
\be
\dim(\mathfrak{v}_{k})=\dim(\mathfrak{z}_{k})+\dim(\mathfrak{sv}_{k})= \dim(\mathfrak{z}_{k})+\dim(\mathfrak{sv}_{n})=\dim(\mathfrak{z}_{k})+\dim(\mathfrak{v}_{n})-\dim(\mathfrak{z}_{n}) \ .
\ee
Note that all the above Lie algebras are 
vector spaces over the field of real numbers $\mathbb{R}$, and $\dim(\cdot)$ denotes the dimension of the vector space.  
 
Since operators $i C_l: l=0,\cdots, 2\lfloor k/2 \rfloor$ are linearly independent, then  
\be\label{dim:center}
\dim(\mathfrak{z}_{k})=\lfloor \frac{k}{2}\rfloor+1\ .
\ee
Then, we arrive at the identity 
\be
\dim(\mathfrak{v}_{k})=\dim(\mathfrak{v}_{n})+\lfloor \frac{k}{2}\rfloor-\lfloor \frac{n}{2}\rfloor\ ,
\ee
or, equivalently, in terms of the corresponding Lie groups
\be\label{dif13}
\dim(\mathcal{V}_{k})=\dim(\mathcal{V}_{n})+\lfloor \frac{k}{2}\rfloor-\lfloor \frac{n}{2}\rfloor\ .
\ee
It is worth noting that Eq.(\ref{dim:center}) that determines the dimension of the center of $\mathfrak{v}_k$ can also be obtained from a general result of \cite{marvian2022restrictions}. According to this result, for a connected Lie group $G$, such as SU(2), the dimension of the center of the Lie group generated by $k$-local $G$-invariant skew-Hermitian operators is  equal to $|\text{Irreps}_G(k)|$, that is the number of inequivalent irreps of $G$ appearing in the  representation of $G$ on $k$ subsystems. In the case of SU(2) symmetry with spin-half systems, the inequivalent irreps correspond to different angular momentum values  $j=0, 1,\cdots , k/2$ for even $k$ and 
$j=1/2, 3/2,\cdots , k/2$ for odd $k$.  Therefore, the number of inequivalent irreps  is  $\lfloor \frac{k}{2}\rfloor+1$, which is in agreement with  Eq.(\ref{dim:center}).

In the following, we calculate the dimension of $\mathfrak{v}_{n}$,  the Lie algebra of rotationally-invariant skew-Hermitian operators defined on $n$ qubits,  and prove 
\begin{equation}
  \dim(\mathfrak{v}_{n}) = \frac{1}{n+1}\binom{2n}{n}\ ,
\end{equation} 
which together with \cref{dif13} implies  \cref{dim}.  First, note that
\begin{align}
  \dim_\mathbb{R}(\mathfrak{v}_{n})&=\dim_\mathbb{R}\Big(\Big\{A\in \End((\mathbb{C}^2)^{\otimes n}):\ A+A^\dag=0, [A, U^{\otimes n}]=0,\ \forall U\in\SU(2) \Big\}\Big)\\ &=\dim_\mathbb{C}\Big(\Big\{A\in \End((\mathbb{C}^2)^{\otimes n}):\ [A, U^{\otimes n}]=0,\ \forall U\in\SU(2) \Big\}\Big) \ , 
\end{align}
where $\dim_\mathbb{C}$ and $\dim_\mathbb{R}$ denotes dimension as vector space over the field of complex and real numbers, respectively. 
Note that the vector space in the second line is the space of all rotationally-invariant operators on $(\mathbb{C}^2)^{\otimes n}$, and the equality follows from the fact that a general rotationally-invariant operator can be decomposed as the sum of a skew-Hermitian and a Hermitian rotationally-invariant operators. It is worth noting that the space of all rotationally-invariant operators on $(\mathbb{C}^2)^{\otimes n}$ can be interpreted as a complex Lie algebra, which is indeed the complexification of the real Lie algebra $\mathfrak{v}_{n}$, i.e., $\mathfrak{v}_{n}^\mathbb{C}\equiv \mathfrak{v}_{n} + \i \mathfrak{v}_{n}$. Using this notation, the above identity can be rewritten as
\be
\dim_\mathbb{R}(\mathfrak{v}_{n}) = \dim_\complex( \mathfrak{v}_{n}^\complex)\ .
\ee
Next, we calculate the dimension of the space of rotationally-invariant operators on $n$ qubits, i.e., $\big\{A\in \End((\mathbb{C}^2)^{\otimes n}):\ [A, U^{\otimes n}]=0,\ \forall U\in\SU(2) \big\}$, as a complex vector space.  Using the fact that $\SU(2)$ is self-dual, we can easily show that as representations of $\SU(2)$
\begin{equation}
  \End((\mathbb{C}^2)^{\otimes n}) \cong (\mathbb{C}^2)^{\otimes n} \otimes (\mathbb{C}^2)^{\otimes n\ast} \cong (\mathbb{C}^2)^{\otimes 2n} \cong \bigoplus_{j=0}^n \mathbb{C}^{2j+1}\otimes \mathbb{C}^{m(2n, j)}\ ,
\end{equation}
where by $(\complex^2)^{\otimes n\ast}$ we mean a vector space carrying a representation of $\SU(2)$ equivalent to $(U^\ast)^{\otimes n}$, where $U^\ast$ is the complex conjugate of $U$ in the $\{|0\rangle,|1\rangle\}$ basis.
To understand this identity, it is useful to consider the natural isomorphism between $\End((\mathbb{C}^2)^{\otimes n}) $ and $(\mathbb{C}^2)^{\otimes 2 n}$ defined by 
\be
v=\sum_{r_1\cdots r_n, s_1\cdots s_n=0}^1 v_{r_1\cdots r_n, s_1\cdots s_n} |r_1 \cdots r_n\> \< s_1 \cdots s_n| \mapsto \mathrm{vec}(v) = \sum_{r_1\cdots r_n, s_1\cdots s_n=0}^1 v_{r_1\cdots r_n, s_1\cdots s_n} \ |r_1 \cdots r_n\>|s_1 \cdots s_n\>\ ,
\ee
where we have used the qubit orthonormal basis $\{|0\rangle,|1\rangle\}$. Under the action of $\SU(2)$, $v \mapsto U^{\otimes n} v U^{\otimes n \dagger}$, and so 
\be
\operatorname{vec}(v) \mapsto \big[U^{\otimes n} \otimes U^{\ast \otimes n}\big] \operatorname{vec}(v) = \big[U^{\otimes n} \otimes (Y U Y)^{\otimes n}\big] \operatorname{vec}(v)\ ,
\ee
where we have used the fact that $\SU(2)$ is self-dual, and in particular $U^\ast=YUY$, where $Y=i(|1\rangle\langle 0|-|0\rangle\langle 1|)$ is the Pauli-y unitary matrix.  This establishes an isomorphism between the linear space of $\SU(2)$-invariant operators on $(\mathbb{C}^2)^{\otimes n}$ and vectors in sector $j=0$ of $\mathbb{C}^{\otimes 2n}$, which, in particular, implies, 
\be
\dim_\mathbb{R}(\mathfrak{v}_{n})=\dim_\mathbb{C}\Big\{ A\in\End((\mathbb{C}^2)^{\otimes n}): [A, U^{\otimes n}]=0, \forall U\in\SU(2) \Big\}= \dim_\complex( \mathfrak{v}_{n}^\complex)=m(2n,0)=\frac{1}{n+1}\binom{2n}{n}\ ,
\ee
where $m(2n,0)$ is the multiplicity of angular momentum $j=0$ for a system with $2n$ qubits, and to get the last equality we have used \cref{eq:hook_length}.  Together with \cref{dif13}, this proves \cref{dim}.

\newpage

\section{Average energy for a fixed angular momentum (Proof of Eq.(\ref{avg}))}\label{App:avg}

For any angular momentum $j=j_\text{min},\cdots, j_\text{max}$ the average energy for a fixed angular momentum is defined as the expectation value of Hamiltonian $H$ for the maximally-mixed state over the sector with angular momentum $j$, i.e., state $\Pi_j/\Tr(\Pi_j)$, as  
\be
E_j\equiv \frac{\Tr(H\Pi_j)}{\Tr(\Pi_j)}\ .
\ee
Using the relation between $\{\Pi_j\}$ basis and $\{C_l\}$ basis, namely,
\be
\Pi_j=\sum_{l=0}^{2\lfloor{\frac{n}{2}}\rfloor} \Tr(\Pi_j C_l) \frac{C_l}{\Tr(C_l^2)}= \Tr(\Pi_j)  \sum_{l=0}^{2\lfloor{\frac{n}{2}}\rfloor} c_l(j)  \frac{C_l}{\Tr(C_l^2)}\ .
\ee
This can be rewritten as 
\be\label{energy3}
E_j\equiv \frac{\Tr(H\Pi_j)}{\Tr(\Pi_j)}=\sum_{l=0}^{2\lfloor{\frac{n}{2}}\rfloor}  c_l(j) \times \frac{\Tr(H C_l)}{\Tr(C_l^2)}=\sum_{l=0}^{2\lfloor{\frac{n}{2}}\rfloor}  c_l(j) \times q_l \ ,
\ee
where $q_l={\Tr(H C_l)}/{\Tr(C_l^2)}$ and the summation is over even integer $l$. 

Now recall that $C_l$ is orthogonal to all $k$-local operators with $k<l$.  Therefore, if $H$ can be written as a sum of $k$-local operators, then $\Tr(H C_l)=0$ for $l>k$. It follows that Eq.(\ref{energy3}) can be written as
\be\label{energy47}
E_j\equiv \frac{\Tr(H\Pi_j)}{\Tr(\Pi_j)}=\sum_{l=0}^{2\lfloor{\frac{k}{2}}\rfloor}   c_l(j)   \times \frac{\Tr(H C_l)}{\Tr(C_l^2)}=\sum_{l=0}^{2\lfloor{\frac{k}{2}}\rfloor}  c_l(j) \times q_l\ .
\ee
Note that to derive this equation we did not make any assumptions about symmetries of $H$. However, in the presence of symmetries a stronger result holds: Suppose the family of unitaries $\exp(-i H t): t\in\mathbb{R}$ can be implemented using $k$-local rotationally invariant unitaries, which means $iH$ is in $\mathfrak{v}_k$, the Lie algebra of generated by $k$-local rotationally-invariant skew-Hermitian operators. Then, again $H$ is orthogonal to $C_l$ for $l>k$ and Eq.(\ref{energy47}) holds. 

According to Eq.(\ref{const3}) in theorem \ref{newThm},  for any such Hamiltonian $H$, we also have 
\be\label{energy4}
\sum_{j=j_\text{min}}^{n/2} c_l(j) \Tr(\Pi_j)E_j= \sum_{j=j_\text{min}}^{n/2} c_l(j)\Tr(H\Pi_j)=\Tr(H C_l)= 0\ :\ \ \ \ l=2\lfloor{\frac{k}{2}}\rfloor+2,\cdots,  2\lfloor{\frac{n}{2}}\rfloor\ .
\ee
It turns out that the above equations  are indeed equivalent. That is Eq.(\ref{energy4}) implies Eq.(\ref{energy47}) holds for some coefficients $q_l: 0,\cdots,  2\lfloor{\frac{k}{2}}\rfloor$. And  any function in the form $E_j=\sum_{l=0}^{2\lfloor{\frac{k}{2}}\rfloor}  c_l(j) \times q_l$ satisfies the constraints in Eq.(\ref{energy4}).  This equivalence can be explained using the orthogonality relations
\be
\sum_{j=j_\text{min}}^{n/2} \Tr(\Pi_j) c_l(j) c_{l'}(j) = \Tr(C_l C_{l'})=\delta_{l,l'} \times \Tr(C_l^2)\ .
\ee
Multiplying both sides of Eq.(\ref{energy47}) in $c_{l'}(j) \Tr(\Pi_j)$ and summing over $j$ we obtain Eq.(\ref{energy4}).  Conversely, the above orthogonality relation implies that functions $c_l: l=0, \cdots ,   2\lfloor\frac{n}{2}\rfloor$ form a basis for the space of functions defined on $\{j_\text{min}, j_\text{min}+1,\cdots, n/2\}$, which has dimension $\lfloor{{n}/{2}}\rfloor+1$. Furthermore, Eq.(\ref{energy4}) means that function $E_j$ is orthogonal to all $c_l: l=2\lfloor\frac{k}{2}\rfloor+2, \cdots ,   2\lfloor\frac{n}{2}\rfloor$. It follows that this function can be a written as a linear combination of functions $c_l: l=0, \cdots, 2\lfloor\frac{k}{2}\rfloor$, as stated in Eq.(\ref{energy47}).

\newpage

\section{Time-reversal symmetry and restrictions on realizable Hamiltonians}\label{App:TimeReversal}

In this section we show that locality does not restrict realizable Hamiltonians with odd parity under time-reversal symmetry.  We also explain why only even integer $l$ appear   in the basis $\{C_l: l=0, 2, \cdots , 2\lfloor\frac{n}{2}\rfloor \}$, that defines the notion of $l$-body phases.

Consider the  basis for the space of $n$-qubit operators defined by the $n$-fold tensor products of Pauli operators and the single-qubit identity operator, i.e.,  is the basis defined by 
\be
\mathcal{B}_{\text{Pauli}}=\{\sigma^{(x)},\sigma^{(y)}, \sigma^{(z)}, I\}^{\otimes n}\ ,
\ee
 where $I$ is the single-qubit identity operator.  Note that this is an orthogonal basis with respect to the Hilbert-Schmidt inner product.   The Pauli weight of any such tensor product is the number of Pauli operators that appear in the  product.

Recall that $J_r=\frac{1}{2}\sum_{i=1}^n \sigma^{(r)}_i$. This implies that operators $J_x^2$, $J_y^2$, and $J_z^2$, can all be written as a linear combination of elements of this basis with Pauli weights 0 and 2. For instance, 
\be
J_z^2=\frac{1}{4}\sum_{i,j=1}^n \sigma_i^{(z)} \sigma_j^{(z)}\ ,
\ee
where $\sigma_i^{(z)}$ denotes Pauli operator $z$ on qubit $i$ tensor product with the identity operators on the rest of qubits.
It follows that 
\be
J^2=J_x^2+J_y^2+J_z^2=\sum_j j(j+1) \Pi_j\ 
\ee
can also be written as a linear combination of elements of $\mathcal{B}_{\text{Pauli}}$ with Pauli weights 0 and 2. Since  projectors $\Pi_j$ can be written as a polynomial of $J^2$, we conclude that projectors $\Pi_j$ can be written as a linear combination of basis elements in   
$\{\sigma^{(x)},\sigma^{(y)}, \sigma^{(z)}, I\}^{\otimes n}$ with even Pauli weights. 
Furthermore, since different elements of this basis are orthogonal we conclude that 
\begin{proposition}
Suppose operator $A$ can be written as a linear combination of basis elements $\{\sigma^{(x)},\sigma^{(y)}, \sigma^{(z)}, I\}^{\otimes n}$ with odd Pauli weights. Then, $\Tr(A \Pi_j)=0$ for all $j=j_\text{min},\cdots, j_\text{max}$. Using theorem \ref{newThm}, we conclude that any rotationally-invariant Hamiltonian $A$ with this property can be realized using the exchange interaction alone.  
\end{proposition}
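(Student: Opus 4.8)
The plan is to reduce the proposition to two facts already in hand: the Hilbert--Schmidt orthogonality of the Pauli basis $\mathcal{B}_{\text{Pauli}}$, and Theorem~\ref{newThm}. The only nontrivial ingredient is the claim that each projector $\Pi_j$ is a linear combination of Pauli strings of \emph{even} weight. Granting this, let $A$ be a linear combination of Pauli strings of odd weight. An odd-weight Pauli string is a different element of $\mathcal{B}_{\text{Pauli}}$ from any even-weight one, hence Hilbert--Schmidt orthogonal to it, so $\Tr(A\Pi_j)=0$ for every $j$, which is the first assertion. If moreover $A$ is rotationally invariant, then since $C_l=\sum_j c_l(j)\Pi_j$ we get $\Tr(A C_l)=\sum_j c_l(j)\Tr(A\Pi_j)=0$ for every even $l$, in particular for $l=4,\dots,2\lfloor n/2\rfloor$; by the $k=2$ case of Theorem~\ref{newThm} this is exactly the condition that $\e^{-\i A t}$ be realizable by $2$-local rotationally-invariant unitaries, equivalently (as noted below Eq.~\eqref{const3}) by the exchange interaction up to a global phase.

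So the work is concentrated in the even-weight claim for $\Pi_j$, and I would establish it directly through the basis $\{C_l\}$ rather than through powers of $J^2$. By Appendix~\ref{App}, $\{C_l:l=0,2,\dots,2\lfloor n/2\rfloor\}$ is an orthogonal basis of $\mathcal{C}$, so $\Pi_j=\sum_l \frac{\Tr(\Pi_j C_l)}{\Tr(C_l^2)}\,C_l$ is a linear combination over \emph{even} indices $l$. By the definition \eqref{defCl}, $C_l$ is a sum of operators $R_{i_1,i_2}\cdots R_{i_{l-1},i_l}$ with the $l$ indices all distinct; each factor $R_{ab}=\tfrac12\sum_v\sigma^{(v)}_a\sigma^{(v)}_b$ is a sum of weight-$2$ Pauli strings supported on $\{a,b\}$, and because the $l/2$ pairs are pairwise disjoint these factors act on disjoint qubits, so their product is a sum of Pauli strings of weight exactly $l$. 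Hence each $C_l$, and therefore each $\Pi_j$, lies in the span of even-weight Pauli strings. (This also reproves the paper's earlier remark, since $J^2=\tfrac{3n}{4}C_0+\tfrac12 C_2$ and, by the recursion of Appendix~\ref{App}, every power of $J^2$ is a combination of the $C_l$.)

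I do not anticipate a serious obstacle; the single point needing care is the last step above. One must resist the false ``parity of Pauli weight is multiplicative'' intuition: multiplying two-body terms that share a qubit can produce odd-weight strings, as in $(\vec{\sigma}_1\cdot\vec{\sigma}_2)(\vec{\sigma}_2\cdot\vec{\sigma}_3)$, which has a weight-$3$ piece. The argument for $C_l$ works precisely because the $l/2$ exchange factors act on \emph{disjoint} pairs, so no collision $\sigma^{(v)}\sigma^{(v)}=I$ can occur between different factors and the weight adds exactly to $l$; the same disjointness is what makes the route via $\{C_l\}$ cleaner than expanding powers of $J^2$, where overlapping terms do occur. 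Finally, it is worth recording the interpretation that motivates the section: each angular-momentum component $\sigma^{(v)}_i$ is odd under time reversal, so a Pauli string of weight $w$ carries the sign $(-1)^w$ under it; thus ``odd Pauli weight'' coincides with ``odd under time reversal'', and the proposition says that rotationally-invariant Hamiltonians of this type are orthogonal to every $C_l$ and so escape all the locality constraints of Theorem~\ref{newThm}.
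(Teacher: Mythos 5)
Your proof is correct, and its skeleton---show that each $\Pi_j$ lies in the span of even-weight Pauli strings, then use Hilbert--Schmidt orthogonality of the Pauli basis together with Theorem~\ref{newThm} at $k=2$---is the same as the paper's. The difference lies in how the even-weight property of $\Pi_j$ is established. The paper notes that $J^2$ is a combination of weight-$0$ and weight-$2$ strings and that each $\Pi_j$ is a polynomial in $J^2$; you instead expand $\Pi_j$ in the orthogonal basis $\{C_l\}$ and observe that each $C_l$ is, by its very definition, a sum of products of exchange operators acting on pairwise \emph{disjoint} pairs, so every term is a Pauli string of weight exactly $l$, which is even. Your route is the tighter of the two: as you correctly point out, weight parity is not preserved under products of overlapping operators (e.g.\ $R_{12}R_{23}$ has a weight-$3$ component), so the paper's ``polynomial of $J^2$'' step implicitly relies on an extra cancellation --- the odd-weight pieces of $R_{i_1 i_2}R_{i_2 i_3}$ and $R_{i_2 i_3}R_{i_1 i_2}$ cancel in the symmetric expansion of $(J^2)^m$, equivalently powers of a single operator that is even under the weight-parity grading stay even --- which the paper does not spell out, while your disjoint-support argument needs no such care. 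The remaining steps (orthogonality gives $\Tr(A\Pi_j)=0$; for rotationally-invariant Hermitian $A$ this gives $\Tr(AC_l)=0$ for all even $l$, hence realizability by the exchange interaction, with no global-phase caveat since $\Tr(A\Pi_j)=0$ for every $j$ already places $\i A$ in $\mathfrak{sv}_2$) match the paper. Your closing time-reversal remark is consistent with the paper's own discussion for Hermitian operators, whose Pauli coefficients are real.
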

Note that  in the basis $\{C_l: l=0, 2, \cdots , 2\lfloor\frac{n}{2}\rfloor \}$, each operator $C_l$ can be written as a linear combination of elements of 
$\mathcal{B}_{\text{Pauli}}=\{\sigma^{(x)},\sigma^{(y)}, \sigma^{(z)}, I\}^{\otimes n}$ with even $l$, and such operators span $\mathcal{C}=\text{span}\{\Pi_j\}$.  Recall from proposition \ref{prop:center} that operators $\{C_l: l=0, 2, \cdots , 2\lfloor\frac{k}{2}\rfloor \}$ span 
the center of  $\mathfrak{v}_k$, the Lie algebra generated by $k$-local rotationally-invariant skew-Hermitian operators.   Therefore, these propositions together imply that a skew-Hermitian operator $A$ satisfying the above condition does not have any component in the center of $\mathfrak{v}_k$. 
 
 In the paper we saw that for systems with odd number of qubits $n$, universality can be achieved with $(n-1)$-local symmetric unitaries, whereas  this is not possible for systems with even number of qubits. The above result explains this observation:
If $n$ is odd, then operators that can be written as a linear combination of $n$-fold tensor product of Pauli operators, i.e., $\{\sigma^{(x)},\sigma^{(y)}, \sigma^{(z)}\}^{\otimes n}$, are orthogonal to all $\{C_l\}$ operators, or equivalently, to all  $\{\Pi_j\}$ operators. Therefore, such Hamiltonians can all be realized by the exchange interaction alone. This means that we can achieve universality with $(n-1)$-local rotationally-invariant unitaries.

Another interesting implication of the above observation is that locality does not constrain Hamiltonians with odd parity under time-reversal symmetry.   Let $\Theta$ be the anti-unitary time-reversal operator satisfying $\Theta i \Theta^{-1}=-i$ and 
\be
\Theta J_r \Theta^{-1}=-J_r  \ .
\ee
 This operator transforms any state to a state with the opposite angular momentum.  
 Clearly,  the total squared angular momentum operator remains invariant under the time reversal symmetry, which means its eigen-projectors $\{\Pi_j\}$  also remain invariant, that is 
 \be
 \Theta \Pi_j \Theta^{-1}=\Pi_j\ .
 \ee
  It follows that all Hermitian operators in $\mathcal{C}=\text{span}\{\Pi_j\}$
have this symmetry.   
 
  Let $H$ be a Hamiltonian with odd parity under the time reversal symmetry, such that $\Theta H \Theta^{-1}=-H$.  For such Hamiltonians $\Tr(H\Pi_j)=0$ for all angular momentum $j$, which in turn implies  $\Tr(H C_l)=0$ for all 
 $l=0, 2,\cdots ,  2 \lfloor{{n}/{2}}\rfloor$.  This means that this  family of Hamiltonian can be realized using the exchange interaction. In terms of the Lie algebras this means that,  all Hamiltonians with odd parity under time reversal symmetry belong to the  Lie algebra generated by 
this 2-local rotationally-invariant Hamiltonians. This provides another way to understand why $(n-1)$-local rotationally-invariant unitaries are universal for all symmetric unitaries when $n$ is odd.

\newpage

\section{l-body phases and their properties}\label{App:l-body}

In the section we discuss more about 
$l$-body phases and prove their properties listed in the paper.  These arguments extend  similar results about U(1) symmetry  presented in \cite{marvian2022restrictions}.

We start by adapting a general result of \cite{marvian2022restrictions} to the case of SU(2) symmetry.  Let $V$ be a rotationally-invariant unitary transformation on $n$ qubits, generated under rotationally-invariant Hamiltonian $H(t)$, from time $t=0$ to $T$. This unitary is given by the time-ordered integral  
\be\label{Sch}
V=\mathcal{T}\{\exp(-i\int_0^T H(t) dt)\}=\lim_{M\rightarrow \infty} \prod_{r=1}^M\exp(-i\frac{T}{M} H(\frac{r T}{M})) \ .
\ee
Let  $V_j$ be the component of $V$ in $\mathcal{H}_j$, the subspace with angular momentum $j$, such that $V\cong \bigoplus_j V_j$.  Consider an observable $C=\sum_j c(j) \Pi_j$ with integer  eigenvalues $c(j)$.  Then, 
\be\label{det}
\Phi\equiv\sum_j c(j)\ \theta_j=\sum_j c(j)\ \text{arg}(\text{det}(V_j))=   -\int_0^T dt \ \Tr(H(t) C)  \ \ \ \ \ \ \text{: mod} \ 2\pi\ ,
\ee
where we have defined  
\be\label{ary1}
\theta_j\equiv \text{arg}(\text{det}(V_j))=   -\int_0^T dt \ \Tr(H(t) \Pi_j)  \ \ \ \ \ \ \text{: mod} \ 2\pi\ .
\ee
Now  choosing operator $C$ to be
\be\label{hghgh1}
C_l\equiv \frac{1}{(l/2)!} \hspace{1mm}\sum_{i_1\neq \cdots \neq i_{l}} \hspace{-3mm} R_{i_1,i_2}\cdots\cdot R_{i_{l-1},i_{l}}  =\sum_{j=j_\text{min}}^{n/2} c_{l}(j) \ \Pi_j\ ,
\ee 
  we obtain
\be\label{def-App}
\Phi_l\equiv    \sum_{j=j_\text{min}}^{j_\text{max}} c_l(j)\ \theta_j= \sum_{j=j_\text{min}}^{j_\text{max}} c_l(j)\ \text{arg}(\text{det}(V_j))= -\int_0^T dt \ \Tr(H(t) C_l)\ \ \ \ \ \ \text{: mod} \ 2\pi\ .
\ee

The above equation defines $\Phi_l$, up to an integer multiple of $2\pi$. In the following, we often assume $\Phi_l\in(-\pi, \pi]$.  

The last equality in Eq.(\ref{def-App}) holds for any Hamiltonian $H(t)$ that realizes unitary $V$  under the Schr\"{o}dinger equation, i.e., satisfies Eq.(\ref{Sch}). In this sense the $l$-body phases of rotationally-invariant unitaries are path-independent.   
Note that the validity of this equality relies on the fact that eigenvalues  $\{c_l(j)\}$ of operator $C_l$ are integers. 

\begin{remark}\label{rem}
For a system with $n$ qubits, we have defined $l$-body phases $\Phi_l$ and operators $C_l$   for even integers  $l=0, 2, \cdots, 2\lfloor n/2\rfloor$.  As we mentioned in remark \ref{rem0} for the case of $\{C_l\}$ operators, it is sometimes useful to extend these definitions to arbitrary even integer $l\ge 0$, by defining $C_l=0$ and $\Phi_l=0$ for $l>2\lfloor n/2\rfloor$. 
\end{remark}

\subsection{Invariance under global phase transformations}

Next, we study how $l$-body phases transform under global phase transformation $V\rightarrow V'=e^{i\alpha} V$, for  $\alpha\in[0,2\pi)$.  Under this transformation $V_j$ transforms to $V'_j=e^{i\alpha} V_j$, which means $\theta_j=\text{arg}(\text{det}(V_j))$,  transforms to $\theta'_j=\theta_j+\Tr(\Pi_j) \alpha\ \ \  :\ \text{mod } 2\pi$. Then, $\Phi_l$ transforms to 
\be
\Phi'_l=\sum_j c_l(j)\  \theta_j+\sum_j c_l(j)  \Tr(\Pi_j) \alpha=\Phi_l+\Tr(C_l) \alpha=\Phi_l+2^n  \alpha\ \delta_{l,0} \ \ :\ \text{mod } 2\pi \ ,
\ee
where we have used the orthogonality relations $\Tr(C_lC_{l'})=\delta_{l,l'}\Tr(C_l^2)$, and $C_0=\mathbb{I}$. This means that $\Phi_0$ transforms to $\Phi'_0=\Phi_0+2^n  \alpha \ :\ \text{mod } 2\pi$, whereas for $l>0$,  $\Phi_l$ remains invariant. It follows that for $l>0$,  $l$-body phases do not depend on the global phase of the unitary $V$, and therefore they are physically observable. On the other hand, since $\Phi_0$ depends non-trivially on the global phase, it is not physically observable. Similarly, phases $\{\theta_j\}$ are not physically observable.

\subsection{Relation between two coordinate systems}
As mentioned before, the transformation $\{\theta_j\}\rightarrow \{\Phi_l\}$ can be interpreted as a change of the coordinate system on the $(\lfloor{{n}/{2}}\rfloor+1)-$torus defined by phases $\theta_j =\text{arg}(\text{det}(V_j))\ \text{: mod} \ 2\pi$ for $j=j_\text{min},\cdots, j_\text{max}$. In the following, we further study this relation and show how $\{\theta_j\}$ can be recovered from $l$-body phases $\{\Phi_l\}$.

Recall that $\{\Pi_j\}$ and $\{C_l\}$ are both bases for $\mathcal{C}$, the space of rotationally and permutationally-invariant operators.   Using the orthogonality relation
\be
\Tr(C_l C_{l'})=\delta_{l,l'} \Tr(C_l^2)\ ,
\ee
we find that
\be\label{ary2}
\Pi_j=\sum_{l}   \frac{\Tr(C_{l} \Pi_j)}{\Tr(C_l^2)} C_l  =\sum_{l}   \Tr(D_{l} \Pi_j)\ C_l\ ,
\ee
where we have defined 
\be
D_l\equiv  \frac{C_l}{\Tr(C_l^2)}= \frac{1}{(l/2)!\times \Tr(C_l^2)} \hspace{1mm}\sum_{i_1\neq \cdots \neq i_{l}} \hspace{-3mm} R_{i_1,i_2}\cdots\cdot R_{i_{l-1},i_{l}}  =\frac{1}{\Tr(C_l^2)}\sum_{j=j_\text{min}}^{n/2} c_{l}(j) \ \Pi_j\ .
\ee
Then, Eq.(\ref{ary1}) implies
\be\label{tnt}
\theta_j\equiv \text{arg}(\text{det}(V_j))
= -\sum_{l}   \Tr(D_{l} \Pi_j)\int_0^T dt \ \Tr(H(t) C_l)= \sum_{l}   \Tr(D_{l} \Pi_j) \times ( \Phi_l+2\pi r_l)  \ \ \ \ \ \ \text{: mod} \ 2\pi\ ,
\ee
where $\{r_l\}$ is an unspecified set of integers, and we have used Eq.(\ref{def-App}) in the last step. Therefore, up to these unspecified integers, $\{\theta_j\}$ are specified by $l$-body phases. It is worth noting that since eigenvalues of $C_l$, i.e. $c_l(j): j=j_\text{min},\cdots j_\text{max}$ are all integers, then 
\be
\Tr(D_l\Pi_j)= \frac{\Tr(\Pi_j)c_l(j)}{\Tr(C_l^2)}=\frac{ \Tr(\Pi_j) c_l(j)}{\sum_{j'} \Tr(\Pi_{j'}) c^2_l(j')}\le 1\ 
\ee
is generally less than one, which means  $\{\theta_j\}$ are not uniquely specified by $\{\Phi_l\}$. Furthermore,   because  $\Tr(D_l\Pi_j)$ is a rational number, for any given values of $\{\Phi_l\}$, there are only finitely  
many $\{\theta_j\}$  satisfying  the above equation.

\subsection{$l$-body phases of compositions of symmetric unitaries}

It turns out that $l$-body phases transform nicely under composition of symmetric unitaries. For instance, it is straightforward to see that for each even integer $l=0, 2, \cdots $, the $l$-body phase $\Phi_l$ defines a homomorphism from the group 
 of symmetric unitaries $\mathcal{V}_n$ to the group U(1) (i.e., it is a 1D representation). In particular, if $\Phi_l^{(1)}$ and $\Phi_l^{(2)}$ are the $l$-body phases of $V_1$ and $V_2$, and $\Phi_l$ is the $l$-body phase of $V_2 V_1 $, then   
  \be
\Phi_l=\Phi^{(2)}_l+\Phi^{(1)}_l\ \ \ \ \ \ : \ \ \text{mod } 2\pi\ .
\ee
For later applications, in the following  we highlight another useful property of $l$-body phases under compositions of unitaries.  While for a system with $n$ qubits, we originally defined 
$C_l$ and $\Phi_l$ for even integers $l=0, 2, \cdots,  2\lfloor n/2\rfloor $, in this proposition it is convenient to assume they are defined for all even integers $l\ge 0$, and their value is zero for $l> 2\lfloor n/2\rfloor$  (See remark \ref{rem}).

\begin{proposition}\label{propnew}
Let $V_A$ and $V_B$ be rotationally-invariant unitaries on systems $A$ and $B$ with $n_A$ and $n_B$ qubits, and  $\Phi^{(A)}_l$ and   $\Phi^{(B)}_l$ be their corresponding $l$-body phases, respectively.  Let  $\Phi^{(AB)}_l$ be the $l$-body phase of  unitary $V_A\otimes V_B$  on $n_A+n_B$ qubits. Then, for all even integer $l\ge0$, it holds that 
\be
\Phi^{(AB)}_l=2^{n_B}\times \Phi^{(A)}_l+2^{n_A}\times \Phi^{(B)}_l\ \ \ \ \ \ : \ \ \text{mod } 2\pi\ .
\ee
\end{proposition}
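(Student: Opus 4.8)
The plan is to reduce the statement about $l$-body phases of $V_A \otimes V_B$ to the partial-trace identity \eqref{partial0} for the operators $C_l$, combined with the integral formula \eqref{def-App} defining $\Phi_l$. First I would fix rotationally-invariant Hamiltonians $H_A(t)$ and $H_B(t)$ on systems $A$ and $B$ that generate $V_A$ and $V_B$ respectively under the Schrödinger equation on the time interval $[0,T]$. Then $V_A \otimes V_B$ is generated by the Hamiltonian $H_{AB}(t) = H_A(t) \otimes I_B + I_A \otimes H_B(t)$ acting on the $n_A + n_B$ qubits, which is manifestly rotationally-invariant. Since $\Phi_l$ is path-independent (it depends only on the unitary realized, not the generating Hamiltonian), I may compute $\Phi^{(AB)}_l$ using this particular $H_{AB}(t)$ via \eqref{def-App}.

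The key computation is then the Hilbert–Schmidt inner product $\Tr(H_{AB}(t)\, C^{(AB)}_l)$, where $C^{(AB)}_l$ is the $C_l$ operator on the full $n_A+n_B$ qubit system. Expanding,
\begin{align*}
\Tr\big(H_{AB}(t)\, C^{(AB)}_l\big) &= \Tr\big((H_A(t)\otimes I_B)\, C^{(AB)}_l\big) + \Tr\big((I_A \otimes H_B(t))\, C^{(AB)}_l\big)\\
&= \Tr_A\big(H_A(t)\, \Tr_B(C^{(AB)}_l)\big) + \Tr_B\big(H_B(t)\, \Tr_A(C^{(AB)}_l)\big)\\
&= 2^{n_B}\,\Tr\big(H_A(t)\, C^{(A)}_l\big) + 2^{n_A}\,\Tr\big(H_B(t)\, C^{(B)}_l\big)\ ,
\end{align*}
where the last step uses \eqref{partial0}. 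Integrating over $t$ from $0$ to $T$ and using \eqref{def-App} on each system individually then yields $\Phi^{(AB)}_l = 2^{n_B}\Phi^{(A)}_l + 2^{n_A}\Phi^{(B)}_l \pmod{2\pi}$, as claimed. One should also address the edge cases built into the convention: if $l > 2\lfloor (n_A+n_B)/2\rfloor$ then $\Phi^{(AB)}_l = 0$ by definition (Remark \ref{rem}), and one checks this is consistent — indeed if $l$ exceeds twice the floor of half of $n_A+n_B$ it also exceeds $2\lfloor n_A/2\rfloor$ and $2\lfloor n_B/2\rfloor$ unless $n_A$ and $n_B$ are both odd and $l = n_A + n_B - 1$; in that exceptional case $C^{(A)}_l = C^{(B)}_l = 0$ (since $l > 2\lfloor n_A/2\rfloor = n_A - 1$), so the right-hand side vanishes too.

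I expect the main obstacle to be bookkeeping around the conventions rather than any conceptual difficulty: one must be careful that $C^{(AB)}_l$ is built from exchange interactions $R_{rs}$ ranging over \emph{all} $l$-tuples of the $n_A+n_B$ qubits — including "mixed" terms with some indices in $A$ and some in $B$ — and verify that \eqref{partial0} correctly accounts for these when the partial trace is taken. The cleanest route is to invoke \eqref{partial0} as a black box (it is already established in Appendix \ref{App}), so that the mixed terms are handled automatically: $\Tr_B$ kills any $R_{rs}$ with at least one index in $B$ because $\Tr_r(R_{rs}) = 0$, leaving only the purely-$A$ product $C^{(A)}_l$ times the factor $2^{n_B}$ from tracing out the identity on the $B$ qubits not touched. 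A secondary subtlety is making sure the mod-$2\pi$ statement is well-posed: since all $c_l(j)$ are integers, each $\Phi$ is well-defined modulo $2\pi$, and the linear relation with integer coefficients $2^{n_A}, 2^{n_B}$ is therefore meaningful mod $2\pi$ — this is worth a one-line remark but presents no real difficulty.
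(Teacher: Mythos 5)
Your proposal is correct and follows essentially the same route as the paper's proof: realize $V_A\otimes V_B$ by the non-interacting Hamiltonian $H_A(t)\otimes I_B+I_A\otimes H_B(t)$, apply the path-independent integral formula for $\Phi_l$, and reduce to the partial-trace identity $\Tr_B(C^{(AB)}_l)=2^{n_B}C^{(A)}_l$ (and its counterpart). The only quibble is that your ``exceptional case'' in the convention check cannot actually occur (if $n_A,n_B$ are both odd then $2\lfloor (n_A+n_B)/2\rfloor=n_A+n_B$, so $l>n_A+n_B$ already forces $l>2\lfloor n_A/2\rfloor$ and $l>2\lfloor n_B/2\rfloor$), but this is harmless since you reach the right conclusion there anyway.
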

\begin{proof}
Let  $C^{(AB)}_{l}$  be the $C_l$ operator in Eq.(\ref{hghgh1}) defined for the joint system with $n=n_A+n_B$ qubits, and similarly $C_l^{(A)}$ and $C_l^{(B)}$ be the $C_l$ operator defined for systems $A$ and $B$, respectively.
 Then, as we saw in Eq.(\ref{partial0}),  
\be\label{partial2}
\Tr_A(C^{(AB)}_l)= 2^{n_A} \ C^{(B)}_l\ \ \ \ , \  \text{and}\ \ \ \ \   \Tr_B(C^{(AB)}_l)= 2^{n_B}\  C^{(A)}_l\ .
\ee
Suppose unitary $V_A$ is realized by a symmetric Hamiltonian $H_A(t)$ in the time interval $t\in[0, T]$. Without loss of generality, we can also assume  $V_B$ is realized by a symmetric Hamiltonian $H_B(t)$ in the same time interval. Then, unitary $V_A\otimes V_B$  is realized under Hamiltonian 
\be
H_{AB}=H_A \otimes \mathbb{I}_B + \mathbb{I}_A \otimes   H_B\ 
\ee
 in the same time interval, where $ \mathbb{I}_A$ and $ \mathbb{I}_B$ are the identity operators on systems $A$ and $B$, respectively.  This means that for all even integers  $l\ge 0$, it holds that
\bes
\begin{align}
\Phi^{(AB)}_l&= -\int_0^T dt \ \Tr(H_{AB}(t) C^{(AB)}_l)
\\ &= -\int_0^T dt \ \Tr\Big([H_A(t) \otimes \mathbb{I}_B + \mathbb{I}_A \otimes   H_B(t)] C^{(AB)}_l\Big)\\ &= -\int_0^T dt \ \Tr\big([H_A(t) \otimes \mathbb{I}_B ] C^{(AB)}_l\big) -\int_0^T dt \ \Tr\big([\mathbb{I}_A \otimes  H_B(t) ] C^{(AB)}_l\big) \\ &= -\int_0^T dt \ \Tr\big(H_A(t) \Tr_B(C^{(AB)}_l)\big) -\int_0^T dt \ \Tr\big(H_B(t) \Tr_A(C^{(AB)}_l)\big)\\ &=   -2^{n_B}\times \int_0^T dt \ \Tr\big(H_A(t) C^{(A)}_l\big) -2^{n_A}\times \int_0^T dt \ \Tr\big(H_B(t) C^{(B)}_l)\big) \\ &=2^{n_B}\times \Phi^{(A)}_l+2^{n_A}\times \Phi^{(B)}_l 
\ \ \ \ \ \ \ \ \ \ \ \ \ \ \ \ \ \ \ \ \ \ \ \ \ \ \ \ \ \ \ \ \ \ \ \  \text{: mod} \ 2\pi\ ,
\end{align}
\ees
where to get the fifth line we have used Eq.(\ref{partial2}). This proves the proposition. \end{proof}

\subsection{$l$-body phases and constraints on realizable unitaries}\label{App:const}

Recall that by definition any unitary  $V\in \mathcal{V}_k$  is realizable using $k$-local rotationally-invariant unitaries, or, equivalently,  using Hamiltonians that can be decomposed as a sum of $k$-local rotationally-invariant terms. Since operator $C_l$ is orthogonal to all $k$-local operators with $k<l$, using Eq.(\ref{def-App}) we conclude that if  unitary $V$ is realizable using $k$-local rotationally-invariant unitaries, then its  $l$-body phase $\Phi_l=0$, for all $l>k$. In the following we establish a converse statement to this result.

Let $G_0$ be the finite Abelian group 
\be\label{DefG0}
G_0\equiv \Big\langle\exp (i 2\pi D_l), \exp(i2\pi \frac{\Pi_j}{\Tr(\Pi_j)}) \ : l=0,\cdots,\lfloor{\frac{n}{2}}\rfloor ; j=j_\text{min},\cdots , n/2 \Big\rangle\ , 
\ee
 i.e., the group generated by
 unitaries $\{\exp (i 2\pi D_l)\}$ and unitaries $\{\exp(i\frac{2\pi}{\Tr(\Pi_j)} \Pi_j) \}$. The fact that these unitaries all commute with each other and the eigenvalues of $\{D_l\}$ and $\{\Pi_j\}$ are rational numbers implies that this group is finite. 
 Note that all elements of this group are in the form $\sum_j \exp(i \alpha_j) \Pi_j $, for phases $\alpha_j\in[0,2\pi)$.  
  Recall that $\mathcal{SV}_{k}\subset \mathcal{V}_{k}$ is the subgroup of rotationally-invariant unitaries $Y\cong \bigoplus_j \mathbb{I}_{2j+1}\otimes y_j$ that satisfies the additional constraint $\det(y_j)=1$ for all angular momenta $j$.  Then, according to theorem \ref{Thm1},  $\mathcal{SV}_{2}=\mathcal{SV}_{n}$.   Using this result, we show that 

\begin{theorem}\label{Thm:Phi}
Any $n$-qubit rotationally-invariant unitary $V$ with $l$-body phases $\{\Phi_l\}$ has a decomposition as
\be
V=  \tilde{V}  V_0 \prod_l \exp(i \Phi_l D_l)\ ,   
\ee 
where $V_0$ is in the finite group $G_0$ in Eq.(\ref{DefG0})  and $\tilde{V}\in\mathcal{SV}_2$. In particular, $\tilde{V}$ can be realized using the exchange interaction. 
\end{theorem}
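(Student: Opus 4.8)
\textbf{The plan} is to first strip the $l$-body phases off $V$ with the explicit central factor $W\equiv\prod_l\exp(i\Phi_l D_l)$ (product over even $l$) and then invoke \cref{Thm1}. First I would check that $W$ itself has $l$-body phases exactly $\{\Phi_l\}$: since $D_l=\sum_j\frac{c_l(j)}{\Tr(C_l^2)}\Pi_j$ is a combination of the projectors $\{\Pi_j\}$, the unitary $W$ acts on $\mathcal{H}_j$ as the scalar $\exp\!\big(i\sum_l\Phi_l\tfrac{c_l(j)}{\Tr(C_l^2)}\big)$, so $\arg\det(W_j)=\sum_l\Phi_l\Tr(D_l\Pi_j)\bmod2\pi$, and the orthogonality relation $\sum_j\Tr(\Pi_j)c_l(j)c_{l'}(j)=\delta_{l,l'}\Tr(C_l^2)$ gives that the $l'$-body phase of $W$ equals $\sum_l\Phi_l\big(\sum_j c_{l'}(j)\Tr(D_l\Pi_j)\big)=\Phi_{l'}$. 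Since each $\Phi_l$ is a homomorphism $\mathcal{V}_n\to\U(1)$ (established above) and $W\in\mathcal{V}_n$, the unitary $V'\equiv VW^{-1}$ is rotationally invariant with \emph{all} $l$-body phases $\equiv0\bmod2\pi$. It then remains to show that any such $V'$ factors as $\tilde V V_0$ with $\tilde V\in\mathcal{SV}_2$ and $V_0\in G_0$; because $W$ is central (being a combination of the $\{\Pi_j\}$), the claimed identity $V=\tilde V V_0\prod_l\exp(i\Phi_l D_l)$ follows at once.

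\textbf{Next} I would handle a $V'$ with all $l$-body phases vanishing. Write $V'\cong\bigoplus_j\mathbb{I}_{2j+1}\otimes v'_j$ and fix representatives $\theta'_j\equiv\arg\det(V'_j)$. Vanishing of the $l$-body phases says precisely that $a_l\equiv\tfrac1{2\pi}\sum_j c_l(j)\theta'_j$ is an \emph{integer} for every even $l$ --- this, together with the integrality of the $c_l(j)$, is the only place the hypothesis enters. Then $V_0^{(1)}\equiv\prod_l\exp(i2\pi a_l D_l)$ lies in $G_0$, being a product of integer powers of the generators $\exp(i2\pi D_l)$, and by the coordinate relation \reef{tnt} --- equivalently, because the integer matrix $(c_l(j))_{l,j}$ and the rational matrix $(\Tr(D_l\Pi_j))_{j,l}$ are mutually inverse --- it satisfies $\arg\det\big((V_0^{(1)})_j\big)=2\pi\sum_l a_l\Tr(D_l\Pi_j)=\theta'_j$ for every $j$. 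Hence $V'(V_0^{(1)})^{-1}$ has trivial determinant in each angular-momentum sector, which only forces $\det\!\big(v'_j(v_0^{(1)})_j^{-1}\big)$ to be a $(2j+1)$-th root of unity, say $e^{2\pi i k_j/(2j+1)}$ with $k_j\in\mathbb{Z}$. Since $\exp(i2\pi\Pi_j/\Tr(\Pi_j))$ acts trivially outside $\mathcal{H}_j$ and has determinant $e^{2\pi i/(2j+1)}$ on the multiplicity space $\mathcal{M}_j$, multiplying by $\prod_j\exp(-i2\pi k_j\Pi_j/\Tr(\Pi_j))\in G_0$ produces a unitary $\tilde V$ with determinant $1$ on every $\mathcal{M}_j$, i.e.\ $\tilde V\in\mathcal{SV}_n$. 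By \cref{Thm1}, $\mathcal{SV}_n=\mathcal{SV}_2$, so $\tilde V$ is realizable with the exchange interaction, and collecting the two $G_0$-factors (which commute, being combinations of the $\{\Pi_j\}$) into a single $V_0\in G_0$ gives $V'=\tilde V V_0$.

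\textbf{The main obstacle} is not the hard structural input --- $\mathcal{SV}_2=\mathcal{SV}_n$ is already available through \cref{Thm1} --- but the careful bookkeeping of the $\bmod\,2\pi$ ambiguities pervading the definitions of $\theta_j$ and $\Phi_l$: one must confirm that the coefficients $a_l$ come out genuinely integral (exactly the vanishing-phase hypothesis, via the orthogonality relation), and that the residual within-sector determinant phases $e^{2\pi i k_j/(2j+1)}$ are precisely those the finite generators $\exp(i2\pi\Pi_j/\Tr(\Pi_j))$ of $G_0$ can cancel, so that after removing $W$, $V_0^{(1)}$, and the root-of-unity corrections nothing is left outside $\mathcal{SV}_2$. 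It is worth noting that the same argument yields the general $k$-local statement: for $l\le k$ the operator $C_l$ (hence $D_l$) is a sum of $l$-local rotationally-invariant terms, so $\exp(i\Phi_l D_l)\in\mathcal{V}_k$, and the identical decomposition then shows that any $V$ with $\Phi_l=0$ for all $l>k$ lies in $\mathcal{V}_k$ up to the finite group $G_0$.
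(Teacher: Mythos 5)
Your proposal is correct and follows essentially the same route as the paper's proof: both arguments rest on the mutually-inverse coordinate matrices $(c_l(j))$ and $(\Tr(D_l\Pi_j))$ (i.e.\ the orthogonality relation $\sum_j \Tr(\Pi_j)c_l(j)c_{l'}(j)=\delta_{l,l'}\Tr(C_l^2)$), the integrality of the $c_l(j)$, the residual $(2j+1)$-th roots of unity arising from $\det(\mathbb{I}_{2j+1}\otimes w_j)=(\det w_j)^{2j+1}$ being absorbed by the generators $\exp(i2\pi\Pi_j/\Tr(\Pi_j))$ of $G_0$, and finally Theorem~\ref{Thm1} ($\mathcal{SV}_2=\mathcal{SV}_n$). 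The only difference is organizational --- you strip off $\prod_l\exp(i\Phi_l D_l)$ first and then recognize the leftover phase as integer powers of $\exp(i2\pi D_l)$, whereas the paper first builds $U$ from the $\theta_j$ and decomposes it afterward --- which is a cosmetic reshuffling of the same steps.
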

It is worth noting that the three unitaries $\tilde{V}$, $V_0$ and $ \prod_l \exp(i \Phi_l D_l)$ in the above decomposition all commute with each other other. Since operator $D_l$ is a sum of $l$-local rotationally-invariant operators,  the unitary $\exp(i \Phi_l D_l)$
can be realized by $l$-local rotationally-invariant unitaries. We conclude that

\begin{corollary}
Suppose for $l>k$, the $l$-body phases of rotationally-invariant unitary $V$ are zero. Then, up to a unitary in the fixed  finite group $G_0$   in Eq.(\ref{DefG0}),  $V$ can be realized using $k$-local rotationally-invariant unitaries.
\end{corollary}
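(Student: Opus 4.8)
The plan is to obtain the corollary immediately from Theorem~\ref{Thm:Phi}, which has already absorbed all the real work. That theorem supplies, for any rotationally-invariant unitary $V$ with $l$-body phases $\{\Phi_l\}$, a factorization
\[
V = \tilde{V}\, V_0 \prod_l \exp(i\Phi_l D_l),
\]
where $\tilde{V}\in\mathcal{SV}_2$, the unitary $V_0$ lies in the fixed finite group $G_0$ of Eq.~(\ref{DefG0}), and the three factors pairwise commute. All that is left is to show that, under the hypothesis $\Phi_l=0$ for $l>k$ (and with the standing assumption $k\ge2$), the product $\tilde{V}\prod_l\exp(i\Phi_l D_l)$ belongs to $\mathcal{V}_k$; the corollary then reads off as $V = V_0 W$ with $W\in\mathcal{V}_k$.

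First I would truncate the product: since $\Phi_l=0$ whenever $l>k$, the only nontrivial factors have $l\le k$, so $\prod_l\exp(i\Phi_l D_l)=\prod_{l\le k}\exp(i\Phi_l D_l)$. Second, for each even $l\le k$ I would invoke the local structure of the $C_l$: since $D_l=C_l/\Tr(C_l^2)$ and $C_l$ is by definition a sum of $l$-local, hence $k$-local, rotationally-invariant Hermitian operators, the operator $i\Phi_l D_l$ lies in $\mathfrak{v}_k$, so $\exp(i\Phi_l D_l)\in\mathcal{V}_k$. Third, I would dispatch $\tilde{V}$ using Theorem~\ref{Thm1}, i.e.\ $\mathcal{SV}_2=\mathcal{SV}_n$: this says $\tilde{V}$ is realizable by $2$-local rotationally-invariant unitaries, so for $k\ge2$ we have $\tilde{V}\in\mathcal{V}_2\subseteq\mathcal{V}_k$. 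Finally, using that all three factors commute, I would multiply through by $V_0^{-1}$ to get $V_0^{-1}V = \tilde{V}\prod_{l\le k}\exp(i\Phi_l D_l)\in\mathcal{V}_k$, i.e.\ $V=V_0 W$ with $V_0\in G_0$ and $W\in\mathcal{V}_k$, as claimed.

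There is no substantive obstacle inside the corollary itself; the entire weight of the argument sits upstream, in Theorem~\ref{Thm:Phi} and hence in Lemma~\ref{lemma-main} and Theorem~\ref{Thm1}. The only care needed is bookkeeping: one must check that each surviving exponential $\exp(i\Phi_l D_l)$ genuinely has $l\le k$ so that the $l$-locality of $C_l$ translates into membership in $\mathcal{V}_k$ (this is fine because for even $l$ the condition $l\le k$ coincides with $l\le2\lfloor k/2\rfloor$), and one should record that the hypothesis $k\ge2$ is exactly what lets the $\tilde{V}\in\mathcal{SV}_2$ factor be realized within $\mathcal{V}_k$. It is also worth noting explicitly that $G_0$ in Eq.~(\ref{DefG0}) is a single group, defined independently of $V$, so the phrase ``up to a unitary in a fixed finite group'' is literal rather than merely schematic.
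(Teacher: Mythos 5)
Your proof is correct and follows essentially the same route as the paper: the corollary is read off directly from Theorem~\ref{Thm:Phi} by truncating the product to $l\le k$, noting that each $D_l$ with $l\le k$ is a sum of $l$-local rotationally-invariant terms so $\exp(i\Phi_l D_l)\in\mathcal{V}_k$, and absorbing $\tilde{V}\in\mathcal{SV}_2\subseteq\mathcal{V}_k$. Your explicit bookkeeping (the commutativity of the factors, the role of $k\ge 2$, and the fact that $G_0$ is fixed independently of $V$) matches what the paper leaves implicit.
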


\begin{proof} (Theorem \ref{Thm:Phi})
We show that there exists a unitary $V_0\in G_0$ such that 
\be
 \tilde{V}=V^\dag_0  V  \prod_l \exp(-i \Phi_l D_l)\ ,
\ee
is in $\mathcal{SV}_2$.  Let 
\be
\tilde{V} \cong \bigoplus_j \tilde{V}_j \cong \bigoplus_j  (\mathbb{I}_{2j+1}\otimes \tilde{v}_j) \ 
\ee
be the decomposition of $\tilde{V}$ in the Schur basis. By virtue of theorem \ref{Thm1}, $\mathcal{SV}_2=\mathcal{SV}_n$ and therefore  to show $\tilde{V} \in\mathcal{SV}_2$ it suffices to show that 
\be
\text{det}(\tilde{v}_j)=1\ :\ \ \  j=j_\text{min},\cdots , n/2\ . 
\ee

Define
\begin{align}
U&\equiv \sum_j \exp\big(-i \frac{\theta_j}{\Tr(\Pi_j)} \big) \Pi_j\ ,
\end{align}
where $\theta_j=\text{arg}(\text{det}(V_j)) : \text{mod} 2\pi$ (The following arguments hold for any choice of $\theta_j$ satisfying this equation).   Let 
\be
W\equiv UV=VU\ ,
\ee
 and consider the decomposition 
\be
W\cong \bigoplus_j W_j \cong \bigoplus_j  (\mathbb{I}_{2j+1}\otimes w_j)\ ,
\ee
where 
\be
W_j=\exp\big(-i \frac{\theta_j}{\Tr(\Pi_j)} \big) V_j \ .
\ee
Then, 
\be
\text{det}(W_j)=\text{det}(V_j)\times \exp\big(-i \theta_j \big)=1\ .
\ee
Since $W_j=\mathbb{I}_{2j+1}\otimes w_j$, we have $\text{det}(W_j)=(\text{det}(w_j))^{2j+1}$. This means  
\be
\text{det}(w_j)=\exp(i \frac{s 2\pi}{2j+1})\ ,
\ee
where $s\in \{0,\cdots, 2j\}$. Recall that $w_j$ is a unitary transformation on the $m(n,j)$-dimensional multiplicity subsystem  of angular momentum $j$.

In conclusion, if 
$\tilde{V}\cong \bigoplus_j  (\mathbb{I}_{2j+1}\otimes \tilde{v}_j)$ is defined such that  
\be
\tilde{v}_j\equiv \exp(-i \frac{s2\pi}{(2j+1) m(n,j) }) w_j= \exp(-i \frac{s2\pi}{\Tr(\Pi_j)}) w_j\ ,
\ee
then it satisfies  the desired property
\be
\text{det}(\tilde{v}_j)=\exp(-i \frac{s2\pi}{2j+1})\times  \text{det}(w_j)=1\ ,
\ee
which implies $\tilde{V}$ belongs to $\mathcal{SV}_n=\mathcal{SV}_2$.   Therefore, in the following we define 
\begin{align}\label{ere13}
\tilde{V}\cong \bigoplus_j  (\mathbb{I}_{2j+1}\otimes \tilde{v}_j) &= \Big( \sum_j \exp(-i  \frac{s2\pi}{\Tr(\Pi_j)} ) \Pi_j\Big)  W\\ &=\Big( \sum_j \exp(-i  \frac{s2\pi}{\Tr(\Pi_j)} ) \Pi_j\Big) U V \ .
\end{align}

Next, we find $U=\sum_j \exp\big(-i \frac{\theta_j}{\Tr(\Pi_j)} \big) \Pi_j$.  Recall that  $\theta_j$ can be any  phase satisfying $\theta_j=\text{arg}(\text{det}(V_j)) : \text{mod} 2\pi$. In particular, using Eq.(\ref{tnt}) we can choose  $\theta_j=\sum_{l}   \Tr(D_{l} \Pi_j) \times ( \Phi_l+2\pi r_l)$,  where $r_l$ is a set of unspecified integers.  This implies
\bes
\begin{align}
U=\sum_j \exp\big(-i \frac{\theta_j}{\Tr(\Pi_j)} \big) \Pi_j\ &=\sum_j   \exp\Big(-i \sum_{l}   \frac{\Tr(D_{l} \Pi_j) \times ( \Phi_l+2\pi r_l)}{\Tr(\Pi_j)}  \Big) \Pi_j  \\ &= \prod_l   \exp\Big(-i  ( \Phi_l+2\pi r_l)  \sum_j  \frac{\Tr(D_{l} \Pi_j) \Pi_j}{\Tr(\Pi_j)}   \Big)\\ &= \prod_l   \exp\Big(-i  ( \Phi_l+2\pi r_l)  D_l  \Big) \\ &= \prod_l   \exp(-i \Phi_l D_l)   \prod_l   \exp(-i 2\pi r_l D_l) \ .
\end{align}
\ees
Combining this with Eq.(\ref{ere13}),  we conclude that
\bes
\begin{align}
V&=\tilde{V}  U^\dag  \Big[ \sum_j \exp(i  \frac{s2\pi}{\Tr(\Pi_j)} ) \Pi_j\Big] \\ &= \tilde{V} \Big[\prod_l   \exp(i \Phi_l D_l)  \Big]   \Big[\prod_l  \exp(i 2\pi r_l D_l)\Big]  \Big[ \sum_j \exp(i  \frac{s2\pi}{\Tr(\Pi_j)} ) \Pi_j\Big]\\ &= \tilde{V} \Big[\prod_l   \exp(i \Phi_l D_l)  \Big] V_0 \ , 
\end{align}
\ees
where $V_0\equiv \Big[\prod_l  \exp(i 2\pi r_l D_l)\Big]  \Big[ \sum_j \exp(i  \frac{s2\pi}{\Tr(\Pi_j)} ) \Pi_j\Big]$ is in the finite group $G_0$, and $\tilde{V}\in\mathcal{SV}_2$. This completes the proof. 
\end{proof}

\newpage

\section{Multi-qubit swap Hamiltonian has overlap with all $C_l$ operators}\label{App:Example}

In this appendix we consider the multi-qubit swap Hamiltonian $S_{AB}$ and  show that it is impossible to implement $S_{AB}$ with $k$-local rotationally-invariant Hamiltonians with $k < n$. In the next section, we explicitly show that it is possible to implement $S_{AB}$ with only 2-local exchange interactions, provided that one can use ancillary qubits.

Let us divide a system with $n=2r$ qubits into two equal-sized subsystems $A$ and $B$. Then, the  multi-qubit swap operator $S_{AB}$ is defined as the swap operator satisfying $S_{AB}(|\eta\>_A|\gamma\>_B)=|\gamma\>_A|\eta\>_B$,
for any pair of $r$-qubit states states $|\eta\>$ and $|\gamma\>$.  For concreteness, we assume $A$ and $B$ are the set of odd and even qubits, respectively.  Then, the  multi-qubit swap operator $S_{AB}$ can be written as
\begin{align}
  S_{AB} = \P_{12}\cdots \P_{n-1,n}\ ,
\end{align}
where $\P_{ij}$ is the 2-qubit swap operator that only exchanges the states of qubits $i$ and $j$, and leaves other qubits unchanged.

In the following we show that
$\Tr(S_{AB} C_l) \neq 0$ for all  $l=0,2 ,\cdots, n=2r$. Together with  \cref{newThm}, this implies that Hamiltonian  $S_{AB}$ cannot be implemented using $k$-local rotationally-invariant Hamiltonians when $k<n$.

To make the result more general, let us consider the operator $\P_{12}\cdots \P_{m-1,m}$. Using the relation $\P_{ij} = R_{ij} + \frac{\mathbb{I}}{2}$, we obtain the expansion 
\begin{align}
  \P_{12}\cdots \P_{m-1,m} = \Big(R_{12} + \frac{\mathbb{I}}{2}\Big) \cdots \Big(R_{m-1,m} + \frac{\mathbb{I}}{2}\Big) = \sum_{s=0}^{m} 2^{(s-m)/2} \sum_{1 \leq i_1 < \cdots < i_{s/2} \leq l/2} R_{2i_1-1,2i_1} \cdots R_{2i_{s/2}-1,2i_{s/2}}\ ,
\end{align}
where the summation is over even integer $s$. Recall the definition
\be\nonumber
C_l\equiv \frac{1}{(l/2)!}  \sum_{i_1\neq i_2\neq\cdots \neq i_{l}} R_{i_1,i_2}\cdots\cdot R_{i_{l-1},i_{l}} \ \   \ \   \ \   \ \  :  l=2, 4 \cdots, 2\lfloor \frac{n}{2}\rfloor\ .
\ee
 This implies 
\begin{align}
   \Tr( \P_{12}\cdots \P_{m-1,m} C_{l}) &= \sum_{s=0}^m 2^{(s-m)/2} \sum_{1 \leq i_1 < \cdots < i_{s/2} \leq m/2} \Tr(R_{2i_1-1,2i_1} \cdots R_{2i_{s/2}-1,2i_{s/2}} C_l ) \nonumber\\
  &= \sum_{s} 2^{(s-m)/2} \binom{m/2}{s/2} \Tr(R_{12} \cdots R_{s-1,s} C_{l} ) \nonumber\\
  &= \sum_{s} 2^{(s-m)/2} \binom{m/2}{s/2} \frac{(n-s)!}{n!} \sum_{i_1\neq \cdots \neq i_s}\Tr(R_{i_1i_2}\cdots R_{i_{s-1}i_{s}} C_{l} ) \nonumber\\
  &= \sum_{s} 2^{(s-m)/2} \binom{m/2}{s/2} \frac{(n-s)!}{n!} (s/2)! \Tr( C_s C_{l} ) \nonumber\\
  &= 2^{(l-m)/2} \frac{(n-l)!(m/2)!}{n!(m/2-l/2)!} \Tr( C_l^2 )\ ,\label{eq:TrPC}
\end{align}
where the summations are over even integers  $s$, and have used the fact that $C_l$ is permutationally invariant to obtain the second and third lines. The fourth line follows from the definition of $C_s$ operator and to get the last line we have used the orthogonality relations $\Tr(C_lC_s)=\Tr(C_l^2) \delta_{l,s}$.  In summary, we have
\begin{align}
  \Tr(S_{AB} C_l) &= 2^{(l-n)/2} \frac{(n-l)!(n/2)!}{n!(n/2-l/2)!} \Tr(C_l^2) > 0.
\end{align}
for all $l=0,2 ,\cdots, n=2r$. This implies that Hamiltonian  $S_{AB}$ cannot be implemented using $k$-local rotationally-invariant Hamiltonians when $k<n$.

\subsection*{A new basis for the space of permutationally and rotationally invariant operators}
The above calculation motivates us to define a new basis for the space of  permutationally and rotationally invariant 
operators (This basis is not used in the rest of the paper). 
Recall that 
we have already found two other bases for this space,  namely $\{\Pi_j\}$ and $\{C_l\}$. For $m=2, 4 \cdots, 2\lfloor \frac{n}{2}\rfloor$ define
\begin{align}
  B_m &\equiv \frac{1}{(m/2)!} \sum_{i_1\neq \cdots \neq i_m} \P_{i_1i_2}\cdots \P_{i_{m-1}i_{m}} = \frac{1}{(m/2)!} \sum_{i_1\neq \cdots \neq i_m} \Big(R_{i_1i_2} + \frac{\mathbb{I}}{2}\Big) \cdots \Big(R_{i_{m-1}i_m} + \frac{\mathbb{I}}{2}\Big)\ .
\end{align}

This definition is analogous to the definition of $C_l$, except we have replaced  $R_{ij}$ with $\P_{ij}=R_{ij}+\mathbb{I}/2$. Next, we determine the relation between this basis and $\{C_l\}$ and $\Pi_j$ bases. Since  
\begin{align}
  B_m=\sum_{l} \frac{\Tr(B_m C_l)}{\Tr(C_l^2)} C_l\ ,
\end{align}
it suffices to calculate the ratio $\frac{\Tr(B_m C_l)}{\Tr(C_l^2)}$. Using the fact that $C_l$ is permutationally invariant, we have
\begin{align}
  \Tr(B_m C_l) = \frac{1}{(m/2)!} \sum_{i_1\neq \cdots \neq i_m} \Tr( \P_{i_1i_2}\cdots \P_{i_{m-1}i_{m}} C_l) = \frac{1}{(m/2)!} \frac{n!}{(n-m)!} \Tr( \P_{12}\cdots \P_{m-1,m} C_l)\ . 
\end{align}
We have already calculated this trace in \cref{eq:TrPC}, and therefore we have
\begin{align}
  \Tr(B_m C_l) = \frac{1}{(m/2)!} \frac{n!}{(n-m)!} 2^{(l-m)/2} \frac{(n-l)!(m/2)!}{n!(m/2-l/2)!} \Tr( C_l^2 ) = 2^{(l-m)/2} \frac{(n-l)!}{(n-m)!(m/2-l/2)!} \Tr( C_l^2 )\ . 
\end{align}
Therefore
\begin{align}
   B_m = \sum_{l=0}^m 2^{(l-m)/2} \frac{(n-l)!}{(n-m)!(m/2-l/2)!} C_l\ .
\end{align}
Using the relation $C_l=\sum_j c_l(j) \Pi_j$, we also obtain
\be
 B_m =\sum_j b_m(j) \Pi_j\ ,
\ee
where
\be
b_m(j) = \sum_{l=0}^m 2^{(l-m)/2} \frac{(n-l)!}{(n-m)!(m/2-l/2)!}  c_l(j)\ .
\ee

\newpage

\section{Achieving universality with the exchange interactions and two ancilla qubits}\label{App:univ}
In the main paper, we described  a recursive approach for implementing a general rotationally-invariant unitary using the exchange interaction and two ancilla qubits. Here, we present further details and a more formal statement of the result (See theorem \ref{Thm:ancilla} below). Then, in Sec.\ref{App:exp} we  present an independent proof of this result for a special family of rotationally-invariant Hamiltonians. The latter approach yields an explicit construction of the desired Hamiltonian that couples the system to ancillae, in terms of the exchange interactions and their nested commutators.  

\subsection{Implementing general rotationally-invariant Hamiltonians with the exchange interactions}  

In the following, we consider a system with $n$ qubits and a pair of ancilla qubits denoted by $a$ and $b$, with the total Hilbert space  $(\mathbb{C}^2)^{\otimes (n+2)}$.   We usually assume the ancilla qubits are initially prepared in state $|00\rangle_{ab}$. But, as we argued in the main paper, due to the rotational symmetry of the process,  the same construction works if qubits are in any (possibly mixed)  state whose support is restricted to the triplet subspace. Also, a slightly modified version of this scheme works if the ancilla qubits are initially in the singlet state $(|01\rangle-|10\rangle)/\sqrt{2}$.  

  We show that for any rotationally-invariant Hamiltonian $H$
 on $n$ qubits, there exists a corresponding  rotationally-invariant Hamiltonian $\widetilde{H}$ on $n+2$ qubits, whose effect on the main $n$-qubit system is equivalent to   Hamiltonian $H$, and, furthermore, it can be implemented using the exchange interaction (See theorem \ref{Thm:ancilla} for the formal statement). The latter property means that $i \widetilde{H}$ is in the real Lie algebra  generated by operators 
\be\label{gen1}
i \mathbb{I}\otimes I_{ab}\ ,\ i R_{rs}\ ,\   i R_{ab}\ ,\  i R_{ra}\ ,\ i R_{rb} : \ \ \  1 \le r< s \le n\ \  ,
\ee
that is
\be
i \widetilde{H}_l\in \widetilde{\mathfrak{v}}_2\equiv \mathfrak{alg}_\mathbb{R}\big\{i \mathbb{I}\otimes I_{ab},  i R_{rs},  i R_{ab}, i R_{ra} , i R_{rb} : \ \ \  1 \le r< s \le n\ \big\} \ ,
\ee
where $\mathbb{I}$ and   $I_{ab}$, respectively, denote the identity operators on $n$ qubits in the main system and   on the ancilla qubits $a$ and $b$.

 Note that the exchange interactions generate single-qubit swaps (transpositions): using the fact that
\be
\textbf{P}_{rs}=R_{rs}+\frac{\mathbb{I}}{{2}}\ ,
\ee
one can easily see 
\be
\exp(i \frac{\pi}{2}R_{rs} )=\exp(-i\frac{\pi}{4}) \exp(i \frac{\pi}{2}\textbf{P}_{rs})=\exp(i\frac{\pi}{4})\textbf{P}_{rs}\ .
\ee
Combinations of swaps generate all permutations and allow arbitrary reordering of qubits. Therefore, to establish the above result  direct exchange interactions between all pairs of qubits are not required. That is, the Lie algebra generated by operators in Eq.(\ref{gen1}) is equal to the Lie algebra generated by operators 
\be\label{gen3}
i \mathbb{I}\otimes I_{ab}\ ,  i R_{r, r+1}\ , i R_{na}\ ,  i R_{ab}\ : \ \ \  1 \le r < n\ ,
\ee
which corresponds to the nearest-neighbor interactions between $n+2$ qubits on a chain. 

To prove the existence of Hamiltonian $\widetilde{H}$ with the above desired properties,  we apply theorem \ref{Thm1} to the $(n+2)$-qubit system.  Let $\widetilde{\Pi}_j$ be the projector to the subspace with the total angular momentum $j$ of $n+2$ qubits, including the ancilla qubits $a$ and $b$.
More precisely, $\widetilde{\Pi}_j$ is the projector to the eigen-subspace of $\widetilde{J}^2$ with eigenvalue $j(j+1)$, where
\be
\widetilde{J}^2=\frac{1}{4}\sum_{r=x,y,z}\Big( \sigma^{(r)}_a+\sigma^{(r)}_b+\sum_{i=1}^n \sigma_i^{(r)}\Big)^2\ .
\ee

 Then, according to theorem \ref{Thm1}, any rotationally-invariant Hamiltonian $\widetilde{F}$ on    
$n+2$ qubits can be implemented using the exchange interactions, provided that it satisfies the condition
\be
\Tr(\widetilde{\Pi}_j \widetilde{F})=0 \ \ \ \ :\  j=j_\text{min},\cdots , \frac{n}{2}+1\ ,
\ee 
where $j_\text{min}=0 , 1/2$, for even and odd $n$, respectively.

We start by finding the Hamiltonian $\widetilde{Q}_l$ that corresponds to $n$-qubit Hamiltonian $Q_l=R_{12}\cdots R_{l-1,l}$.   For $l=2,\cdots, 2\lfloor n/2\rfloor$  the following operators are defined on  $(n+2)$ qubits, including  ancilla qubits $a$ and $b$:  
\begin{align}
\widetilde{Q}_l&\equiv \frac{1}{2^{\frac{l}{2}}} \Big[ 2 Q_2\otimes I_{ab}+\sum_{s=4}^{l} {2^{s/2}} (Q_{s}\otimes I_{ab}-Q_{s-2} \otimes R_{ab})  \Big] \\ &=(Q_l\otimes I_{ab}-Q_{l-2}\otimes  R_{ab})+ \frac{1}{2}(Q_{l-2}\otimes I_{ab}-Q_{l-4}\otimes  R_{a,b})+\cdots\cdots  + \frac{1}{2^{\frac{l}{2}-1}}Q_{2}\otimes I_{ab}\  , 
 \end{align}
where $I_{ab}$ denotes the identity operator on the 4D Hilbert space of the ancilla qubits.  We claim that Hamiltonian $\widetilde{Q}_l$ can be implemented using the exchange interaction, that is  $i \widetilde{Q}_l\in \widetilde{\mathfrak{v}}_2$. 
To see this note for each term 
$Q_{s}\otimes I_{ab}-Q_{s-2} \otimes R_{ab}$,  it holds that 
\be
\Tr(\widetilde{\Pi}_j [Q_{s}\otimes I_{ab}-Q_{s-2} \otimes R_{ab}])=0 \ \ \ \ :\  \ \  j=j_\text{min},\cdots , \frac{n}{2}+1\ .
\ee
This follows from the fact that $\widetilde{\Pi}_j$ is permutationally-invariant and 
operators $Q_{s}\otimes I_{ab}$ and $Q_{s-2} \otimes R_{ab}$ are equal to each other, up to a permutation that exchanges qubits $s-1$ and $s$ with qubits $a$ and $b$. Then, applying  theorem \ref{Thm1}, we conclude that $i\sum_{s=4}^{l} {2^{s/2}} (Q_{s}\otimes I_{ab}-Q_{s-2} \otimes R_{ab})$ is in the Lie algebra $\widetilde{\mathfrak{v}}_2 $, which in turn implies $i \widetilde{Q}_l\in \widetilde{\mathfrak{v}}_2$.

Next, we note that operator $\widetilde{Q}_l$ can be rewritten as 
\be\label{ert}
\widetilde{Q}_l=Q_l\otimes I_{ab}+ \frac{1}{2^{\frac{l}{2}-1}} \sum_{r=2}^{l-2} {2^{r/2}}\  Q_{r}\otimes (\frac{I_{ab}}{2}- R_{ab}) \ . 
\ee
This implies that for any state $|\psi\rangle\in(\mathbb{C}^2)^{\otimes n}$,  it holds that 
\be
\widetilde{Q}_l (|\psi\rangle\otimes |00\rangle_{ab})=Q_l|\psi\rangle \otimes |00\rangle_{ab}+ \frac{1}{2^{\frac{l}{2}-1}} \sum_{r=2}^{l-2} {2^{r/2}}\  Q_{r}|\psi\rangle\otimes(\frac{I_{ab}}{2}- R_{ab}) |00\rangle_{ab} =Q_l|\psi\rangle \otimes |00\rangle_{ab}\ ,
\ee
where we have use the fact that $\frac{I_{ab}}{2}- R_{ab}$  is proportional to the projector to the singlet subspace of ancilla qubits, which is orthogonal to $|00\rangle_{ab}$.  We conclude that, under Hamiltonian $\widetilde{Q}_l$, the evolution of the $n$-qubit system is described by Hamiltonian $Q_l$. That is, using 2 ancilla qubits we can implement Hamiltonian $Q_l$ via the exchange interaction alone.

Next, define $\mathcal{P}$ to be the linear map on the space of operators that projects any operator to the space of operators that are invariant under permutations of $n$ qubits in the system. In particular,
\be
\mathcal{P}(R_{i_1,i_2}\cdots R_{i_{l-1},i_l}R_{ab})=\frac{1}{n!}\sum_{\sigma\in\mathbb{S}_n}  R_{\sigma(i_1),\sigma(i_2)}\cdots R_{\sigma(i_{l-1}),\sigma(i_l)} R_{ab}\ ,
\ee 
where the summation is over $\mathbb{S}_n$, the group of permutations on $n$ objects. Note that this map acts trivially on the ancilla qubits $a$ and $b$. Using this notation we have
\begin{align}
C_l&\equiv \frac{1}{(l/2)!}  \sum_{i_1\neq i_2\neq\cdots \neq i_{l}} R_{i_1,i_2}\cdots\cdot R_{i_{l-1},i_{l}}\\ &=    \frac{1}{(l/2)!}\times \frac{1}{(n-l)!}   \sum_{\sigma\in\mathbb{S}_n}  R_{\sigma(1),\sigma(2)}\cdots \cdots  R_{\sigma({l-1}),\sigma(l)} \\ &= \frac{n!}{(n-l)!\times (l/2)!}  \mathcal{P}(Q_l)\ .
\end{align}
Then, for even integer $l=2,\cdots , 2\lfloor n/2\rfloor$ define 
\begin{align}
\widetilde{E}_l&\equiv \frac{n!}{(n-l)!\times (l/2)!} \mathcal{P}(\widetilde{Q}_l)\\ &=C_l\otimes I_{ab}+  2\sum_{r=2}^{l-2} 
\frac{{2^{r/2}}\times  (n-r)! \times (r/2)! }{2^{l/2} \times (n-l)!\times (l/2)!}\  C_{r}\otimes (\frac{I_{ab}}{2}- R_{ab}) \ , 
\end{align}
where we have used Eq.(\ref{ert}). 
Since $\widetilde{E}_l$ is a linear combination of $\widetilde{Q}_l$ and its permuted versions, similar to $\widetilde{Q}_l$ it can be realized using exchange interactions. That is, $i \widetilde{E}_l\in \widetilde{\mathfrak{v}}_2$. Furthermore, for all 
$|\psi\rangle\in(\mathbb{C}^2)^{\otimes n}$,  it holds that 
\be
\widetilde{E}_l (|\psi\rangle\otimes |00\rangle_{ab})=C_l|\psi\rangle \otimes |00\rangle_{ab}\ .
\ee
It follows that in this way we can implement all Hamiltonians  $C_l: l=2,\cdots, 2\lfloor n/2\rfloor$ using  exchange interactions and two ancillary qubits.  Finally, recall that any rotationally-invariant Hamiltonian $H$ can be decomposed as  
\be
H= H_0+\frac{\Tr(H)}{2^n}\mathbb{I} + \sum_{l=2}^{2\lfloor n/2\rfloor}\frac{\Tr(C_l H)}{\Tr(C_l^2)} C_l \ ,
\ee
where $H_0$ satisfies $\Tr(H_0\Pi_j)=0$ for all $j$. Again, by applying theorem  \ref{Thm1}, we find that $H_0$
 can be implemented using  exchange interactions alone.
  Finally, note that because $H_0 \otimes I_{ab}$ and $\widetilde{E}_l: 2,\cdots, 2\lfloor n/2\rfloor$ are all traceless, they are all orthogonal to the identity operator $\mathbb{I}\otimes I_{ab}$, and therefore, they all belong to the sub-algebra generated by   $\mathfrak{alg}_\mathbb{R}\{iR_{ab}, i R_{n, a}, i R_{r, r+1}: \ 1\le r< n\}$.

  In summary, we conclude that \begin{theorem}\label{Thm:ancilla}
For any  rotationally-invariant Hamiltonian $H$ on $n$ qubits, consider Hamiltonian 
\be
\widetilde{H}=(H-\frac{\Tr(H)}{2^n}\mathbb{I})\otimes I_{ab}+ \sum_{l=2}^{2\lfloor n/2\rfloor} \frac{\Tr(H C_l)}{\Tr(C_l^2)} (\widetilde{E}_l-C_l\otimes  I_{ab})\ , 
\ee
defined on $n$ qubits and two  ancilla qubits $a$ and $b$. Then, $\widetilde{H}$ can be realized using the exchange interaction alone, i.e., 
\be
i\widetilde{H}\in \mathfrak{alg}_\mathbb{R}\{iR_{ab}, i R_{n, a}, i R_{r, r+1}: \ 1\le r< n\}\ .
\ee
 Furthermore, for all $|\psi\rangle\in(\mathbb{C}^2)^{\otimes n}$, it satisfies 
\be
\widetilde{H}_l (|\psi\rangle\otimes |00\rangle_{ab})=(H-\frac{\Tr(H)}{2^n}\mathbb{I}) |\psi\rangle \otimes |00\rangle_{ab}\ .
\ee
\end{theorem}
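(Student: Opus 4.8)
The plan is to assemble ingredients already developed above: the orthogonal decomposition of an arbitrary symmetric $H$, the auxiliary operators $\widetilde{E}_l$ built from the $\widetilde{Q}_l$, and \cref{Thm1}.

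First I would put $\widetilde{H}$ into a transparent form. Using $\mathbb{I}=\sum_j\Pi_j$ and the decomposition $H=H_0+\frac{\Tr(H)}{2^n}\mathbb{I}+\sum_{l=2}^{2\lfloor n/2\rfloor}\frac{\Tr(HC_l)}{\Tr(C_l^2)}C_l$ with $\Tr(H_0\Pi_j)=0$ for all $j$, one has $H-\frac{\Tr(H)}{2^n}\mathbb{I}=H_0+\sum_l\frac{\Tr(HC_l)}{\Tr(C_l^2)}C_l$, so in the definition of $\widetilde{H}$ the $C_l\otimes I_{ab}$ terms cancel and
\[
\widetilde{H}=H_0\otimes I_{ab}+\sum_{l=2}^{2\lfloor n/2\rfloor}\frac{\Tr(HC_l)}{\Tr(C_l^2)}\,\widetilde{E}_l .
\]

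Second, for the realizability claim I would show each summand lies in $\widetilde{\mathfrak{v}}_2$. Each $\widetilde{Q}_l$ is, up to the manifestly $2$-local term $R_{12}\otimes I_{ab}$, a real combination of operators $Q_s\otimes I_{ab}-Q_{s-2}\otimes R_{ab}$; the two pieces of each such operator differ by the transposition exchanging $\{s-1,s\}$ with $\{a,b\}$, hence have identical overlap with every $\widetilde{\Pi}_j$, so the difference has vanishing charge vector and by \cref{Thm1} lies in $\widetilde{\mathfrak{v}}_2$; therefore $\widetilde{Q}_l\in\widetilde{\mathfrak{v}}_2$, and so is its permutation-average $\widetilde{E}_l=\frac{n!}{(n-l)!(l/2)!}\mathcal{P}(\widetilde{Q}_l)$. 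For $H_0\otimes I_{ab}$, the hypothesis $\Tr(H_0\Pi_j)=0$ together with \cref{Thm1} applied to the $n$-qubit block puts $iH_0$ in the $2$-local symmetric Lie algebra of those qubits, which embeds in $\widetilde{\mathfrak{v}}_2$ by acting trivially on $a,b$. Since $\widetilde{\mathfrak{v}}_2$ is a real vector space, $\widetilde{H}$ lies in it. To sharpen this to the stated generator set, I would note $H_0$ and every $\widetilde{E}_l$ are traceless ($\Tr H_0=\sum_j\Tr(H_0\Pi_j)=0$; $\Tr C_r=\Tr(C_rC_0)=0$ for $r\ge 2$), so $i\widetilde{H}$ is Hilbert--Schmidt orthogonal to the identity generator $i\mathbb{I}\otimes I_{ab}$ and hence lies in the subalgebra generated by the remaining exchange terms; and since $\exp(i\frac{\pi}{2}R_{rs})$ equals a transposition up to a phase, arbitrary reorderings of qubits are available, so that subalgebra is exactly $\mathfrak{alg}_\mathbb{R}\{iR_{ab},iR_{n,a},iR_{r,r+1}:1\le r<n\}$.

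Third, for the action on $|\psi\rangle\otimes|00\rangle_{ab}$ I would evaluate termwise: $H_0\otimes I_{ab}$ gives $H_0|\psi\rangle\otimes|00\rangle_{ab}$, and for each $\widetilde{E}_l$ the expansion $\widetilde{E}_l=C_l\otimes I_{ab}+2\sum_r(\cdots)C_r\otimes(\tfrac{I_{ab}}{2}-R_{ab})$ together with the fact that $\tfrac{I_{ab}}{2}-R_{ab}$ is proportional to the singlet projector on $\{a,b\}$ and thus annihilates $|00\rangle_{ab}$ gives $\widetilde{E}_l(|\psi\rangle\otimes|00\rangle_{ab})=C_l|\psi\rangle\otimes|00\rangle_{ab}$. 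Summing, $\widetilde{H}(|\psi\rangle\otimes|00\rangle_{ab})=\big(H_0+\sum_l\frac{\Tr(HC_l)}{\Tr(C_l^2)}C_l\big)|\psi\rangle\otimes|00\rangle_{ab}=\big(H-\frac{\Tr(H)}{2^n}\mathbb{I}\big)|\psi\rangle\otimes|00\rangle_{ab}$. The genuinely hard input is \cref{Thm1} itself --- proved through \cref{lemma-main} (or through Marin's classification of the Lie algebra of transpositions) --- whose proof is the technical heart; within the present argument the only delicate points are the cancellation bookkeeping that produces the closed form of $\widetilde{H}$ and the tracelessness check that lets the identity generator be discarded.
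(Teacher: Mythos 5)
Your proposal is correct and follows essentially the same route as the paper's own proof: the same cancellation reducing $\widetilde{H}$ to $H_0\otimes I_{ab}+\sum_l \frac{\Tr(HC_l)}{\Tr(C_l^2)}\widetilde{E}_l$, the same permutation/charge-vector argument via \cref{Thm1} for realizability of the $\widetilde{Q}_l$ (hence $\widetilde{E}_l$) and of $H_0$, the same tracelessness observation to restrict to the nearest-neighbor exchange generators, and the same singlet-projector computation for the action on $|\psi\rangle\otimes|00\rangle_{ab}$.
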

Note that the latter property means that for all time $t\in\mathbb{R}$, it holds that
\be
\e^{-i t \widetilde{H}} (|\psi\rangle\otimes  |00\rangle_{ab})=\e^{i\Tr(H)t/2^n} (\e^{-i t {H}} |\psi\rangle)\otimes  |00\rangle_{ab}\ ,
\ee
where $\e^{i\Tr(H)t/2^n}$ is a global phase.

\subsection{Explicit implementation of a family of rotationally-invariant Hamiltonians including the multi-qubit swap Hamiltonian}\label{App:exp}
 The above result is not constructive, i.e., it does not provide a direct decomposition of Hamiltonian $\widetilde{H}$ into the exchange interactions and its nested commutators. In the following, we address this issue for a special class of rotationally-invariant Hamiltonians, namely  those that can 
be written as a linear combination of monomials in the form $R_{12} R_{34},\cdots$, and their permuted versions.  More precisely, consider any Hamiltonian $H$ in the form  
\be\label{tra}
H=\sum_{l=2}^{2\lfloor n/2\rfloor}\sum_{i_1 \neq i_2 \neq \cdots \neq i_{l}} h_l(i_1,\cdots, i_l)\   R_{i_1,i_2}\cdots\cdot R_{i_{l-1},i_{l}} \ ,
\ee
where the summation is over even integer $l$'s and $\{h_l\}$ are arbitrary real functions. Multi-qubit swap Hamiltonian  and operators $\{C_l\}$ are examples of Hamiltonians that can be written in the above form, up to a shift by a multiple of the identity operator.

In the following, we present a recursive approach for decomposing Hamiltonian $Q_l-Q_{l-2}R_{a,b}$ to the exchange interactions and their nested commutators. Then, using this Hamiltonian and its permuted version we can realize any Hamiltonian in the form of Eq.(\ref{tra}). The argument is similar to what we have seen in the previous section: First,  using the fact that  
\be
 \big(Q_l-Q_{l-2}R_{a,b} \big) |\psi\rangle  \otimes  |00\rangle_{ab}=(Q_l-\frac{1}{2}Q_{l-2})   |\psi\rangle \otimes  |00\rangle_{ab}\  ,
\ee
we find
\be
e^{-i t (Q_l-Q_{l-2}R_{a,b})} |\psi\rangle \otimes |00\rangle_{ab}=(e^{-i t (Q_l-Q_{l-2}/2)})|\psi\rangle  \otimes  |00\rangle_{ab} \ ,
\ee
where $|\psi\rangle$ is an arbitrary  state of $n$ qubits in the system.  Therefore, in this way we can implement Hamiltonian $Q_l-Q_{l-2}/2$. To obtain Hamiltonian $Q_l$, we use the expansion 
\be\label{art56}
Q_l=(Q_l-\frac{1}{2}Q_{l-2})+  \frac{1}{2}(Q_{l-2}-\frac{1}{2} Q_{l-4})+\cdots \cdots +\frac{1}{2^{\frac{l}{2}-2}}(Q_4-\frac{1}{2}Q_2)+\frac{1}{2^{\frac{l}{2}-1}} Q_2\ ,
\ee
where $Q_2=R_{12}$.  We conclude that, if one can implement Hamiltonian $Q_l-Q_{l-2}R_{a,b}$ for general even integer $l$, then one can implement general Hamiltonian $H$ in  Eq.(\ref{tra}). 

\subsection*{Explicit implementation of $Q_l-Q_{l-2}R_{a,b}$}
Finally,  we directly  show that $Q_l-Q_{l-2}R_{a,b}$ is in the Lie algebra generated by the exchange interactions.  First, we establish this for $l=4$.   One can check the following indentities 
\begin{align}
  \big[ [[[R_{12}, R_{23}], R_{34}], R_{45}], R_{51}\big] &= 2 (-R_{12} R_{34} - R_{14} R_{23} + R_{23} R_{45} + R_{25} R_{34})\ , \\
  \big[ [R_{12}, R_{23}], R_{34} \big] &= 2 (R_{14} R_{23} - R_{13} R_{24})\ .
\end{align}
These equations imply 
\begin{align}
  4(R_{12} R_{34} - R_{23} R_{45}) &= - \big[ [[[R_{12}, R_{23}], R_{34}], R_{45}], R_{51}\big] - \big[ [R_{13}, R_{23}], R_{24} \big] + \big[ [R_{24}, R_{34}], R_{35} \big]\ .
\end{align}
Then, we arrive at the interesting identity
\begin{align}\label{eq:shiftR2-22}
  4(R_{12} R_{34} - R_{34} R_{56})  &= 4(R_{12} R_{34} - R_{23} R_{45}) + 4(R_{23} R_{45} - R_{34} R_{56}) \nonumber\\
  &= -\big[ [[[R_{12}, R_{23}], R_{34}], R_{45}], R_{51}\big] - \big[ [[[R_{23}, R_{34}], R_{45}], R_{56}], R_{62}\big]  - \big[ [R_{13}, R_{23}], R_{24} \big] + \big[ [R_{35}, R_{45}], R_{46} \big] \ .
\end{align}
Therefore, if we  we choose qubits $5$ and $6$ to be ancilla qubits $a$ and $b$, we obtain
\begin{align}\label{eq:shiftR2-22-relabel}
  4(Q_4 - R_{ab} R_{34}) &= -\big[ [[[R_{12}, R_{23}], R_{34}], R_{4a}], R_{a1}\big] - \big[ [[[R_{23}, R_{34}], R_{4a}], R_{ab}], R_{b2}\big]  - \big[ [R_{13}, R_{32}], R_{24} \big] + \big[ [R_{3a}, R_{a4}], R_{4b} \big] \ .
\end{align}
Note that up to a permutation this is equal to $Q_4-R_{12}R_{ab}=Q_4-Q_2 R_{ab}$.

Next, we use this identity recursively to obtain $Q_l-Q_{l-2} R_{ab}$.  Because of the form of Eq.(\ref{eq:shiftR2-22-relabel}), to simplify the notation it is more convenient  to construct  a  permuted version of $Q_l-Q_{l-2}R_{a,b}$, namely $Q_l - R_{ab} R_{34} T_l$, where
\begin{align}
  T_l \equiv R_{56} \cdots R_{l-1, l}\ .
\end{align}
Note that operator $Q_l - R_{ab} R_{34} T_l$ can be transformed to  $Q_l-Q_{l-2}R_{ab}$ by exchanging qubits $1 \leftrightarrow l-1$ and $2 \leftrightarrow l$.

For a general even integer $l$ suppose we have found the decomposition of $Q_l - R_{ab} R_{34} T_l$
 in terms of the nested commutators of the exchange interactions. We show that by properly inserting these terms in \cref{eq:shiftR2-22-relabel} we can obtain  $Q_{l+2} - R_{ab} R_{34} T_{l+2}$.  To see this suppose  in the right-hand side of \cref{eq:shiftR2-22-relabel},  the first commutator in each nested commutator is changed from $[R_{rs}, R_{st}]$ to
\begin{align}
  [R_{rs}, R_{st}] \ \longrightarrow \ [T_{l+2} R_{rs} - T_{l+2} R_{uv}, R_{st}] = [T_{l+2} R_{rs}, R_{st}]\ ,
\end{align}
where $r,s,t \in \set{1,2,3,4,a}$ as in \cref{eq:shiftR2-22-relabel} and $u,v \in \set{1,2,3,4,a,b}\setminus\set{r,s,t}$. The equality holds because $[T_{l+2} R_{uv}, R_{st}] = 0$. Note that, up to a permutation of qubits, $T_{l+2} R_{rs} - T_{l+2} R_{uv}$ is in the form $Q_l - R_{ab} R_{34} T_l$ and therefore by assumption we know its decomposition in terms of the nested commutators of the exchange interactions. 
 With these replacements Eq.(\ref{eq:shiftR2-22-relabel}) transforms to 
\begin{align}
  &\quad -\big[ [[[T_{l+2} R_{12}, R_{23}], R_{34}], R_{4a}], R_{a1}\big] - \big[ [[[T_{l+2} R_{23}, R_{34}], R_{4a}], R_{ab}], R_{b2}\big] - \big[ [T_{l+2} R_{13}, R_{32}], R_{24} \big] + \big[ [T_{l+2} R_{3a}, R_{a4}], R_{4b} \big] \nonumber\\
  &= T_{l+2} \big(-\big[ [[[R_{12}, R_{23}], R_{34}], R_{4a}], R_{a1}\big] - \big[ [[[R_{23}, R_{34}], R_{4a}], R_{ab}], R_{b2}\big]  - \big[ [R_{13}, R_{32}], R_{24} \big] + \big[ [R_{3a}, R_{a4}], R_{4b} \big] \big) \nonumber\\
  &= 4 T_{l+2} (Q_4 - R_{ab} R_{34}) = 4 (Q_{l+2} - R_{ab} R_{34} T_{l+2})\ .
\end{align}
In the first equality, we can factor out $T_{l+2}$ because it commutes with all the other terms. The second equality just follows from \cref{eq:shiftR2-22-relabel}.

In summary, assuming we have found the decomposition of $Q_l - R_{ab} R_{34} T_l$ in terms of the commutators of the exchange interactions, then we have also an explicit construction for operator $Q_{l+2} - R_{ab} R_{34} T_{l+2}$. Combining this with \cref{eq:shiftR2-22-relabel}, which corresponds to the special case of $l=4$, we obtain a general construction of 
\begin{align}
  Q_l - R_{ab} R_{34} T_l = (R_{12} - R_{ab}) R_{34} \cdots R_{l-1,l}.
\end{align}
 for arbitrary even $l$. 
 As mentioned before, up to a permutation of qubits, this is equivalent to $Q_l-Q_{l-2}R_{ab}$.

Combining this with the argument in Eq.(\ref{art56}), we find a recursive approach for  constructing a Hamiltonian $\widetilde{H}$ that realizes an arbitrary Hamiltonian $H$ in the form of Eq.(\ref{tra}),  such that $\widetilde{H}(|\psi\rangle|00\rangle_{ab})=(H|\psi\rangle)|00\rangle_{ab}$. \\
  
\newpage

\section{Insufficiency of a single ancilla qubit for universality}\label{App:single}
Finally, we show that two ancilla qubits are necessary to achieve universality. That is, generic rotationally-invariant unitaries cannot be implemented with local rotationally-invariant unitaries, even if one can use a single ancilla qubit.

Suppose using a single ancillary qubit $a$ in the initial state $|\eta\rangle$  we can implement rotationally-invariant unitary $V$, such that for any arbitrary initial state $|\psi\rangle\in(\mathbb{C}^2)^{\otimes n}$, it holds that
\be
\widetilde{V} (|\psi\rangle\otimes |\eta\rangle) =({V} |\psi\rangle)\otimes |\eta'\rangle \ , 
\ee 
where $|\eta'\rangle$ is the final state of the ancilla and $\widetilde{V}$ is itself rotationally-invariant, i.e., satisfies
\be
\big[\widetilde{V} , U^{\otimes (n+1)}\big]=0\ ,
 \ee
for all single-qubit unitary $U$. Note that, since by assumption the final state of the qubits in the main system is pure, the final state of the ancilla should also be pure. Furthermore, linearity of $\widetilde{V}$ implies that $ |\eta'\rangle$ cannot depend on $|\psi\rangle$. The fact that $V$ is rotationally-invariant means that the total angular momentum of $n$ qubits in the system is conserved, which in  turn implies that the angular momentum of the ancilla is also conversed. This means $|\eta'\rangle$ is equal $|\eta\rangle$, up to a global phase. Since we can always absorb this phase in the definition of $\widetilde{V}$, we can assume  $|\eta'\rangle=|\eta\rangle$. 

In summary, the most general case where one uses an ancilla qubit to perform a rotationally-invariant unitary $V$ on $n$ qubits can be formulated as 
 \be\label{lastapp}
\widetilde{V} (|\psi\rangle\otimes |\eta\rangle) =({V} |\psi\rangle)\otimes |\eta\rangle \ . 
\ee 
 Next, note that for any single-qubit unitary $U$, it holds that
  \begin{align}
\widetilde{V} \big(|\psi\rangle\otimes U |\eta\rangle\big) &=\widetilde{V} \Big(U^{\otimes n}(U^{\dag})^{\otimes n}|\psi\rangle\otimes U |\eta\rangle\Big)\\ &=U^{\otimes (n+1)}\widetilde{V} \Big((U^{\dag})^{\otimes n}|\psi\rangle\otimes |\eta\rangle\Big)\\ &=U^{\otimes (n+1)}\Big({V} (U^{\dag})^{\otimes n}|\psi\rangle\otimes |\eta\rangle\Big)\\ &={V}|\psi\rangle\otimes U |\eta\rangle\ , 
\end{align}
where the second and fourth lines follow from the rotational symmetry of $\widetilde{V} $ and $V$, and the third line follows from the assumption that Eq.(\ref{lastapp}) holds for all states  $|\psi\rangle\in(\mathbb{C}^2)^{\otimes n}$.  States in the form $\{|\psi\rangle\otimes U|\eta\rangle\}$, for different $|\psi\rangle\in(\mathbb{C}^2)^{\otimes n}$ and single-qubit unitaries $U$, span the total Hilbert space $(\mathbb{C}^2)^{\otimes (n+1)}$ corresponding to $n+1$ qubits. Since the above identity holds for all such states, we conclude that $ \widetilde{V}$ acts as the identity operator on the ancilla qubits, that is 
 \be
 \widetilde{V}=V\otimes I_a \ , 
 \ee
 where $I_a$   is the identity operator on the Hilbert space of the ancilla qubit.
 Therefore,  in principle, the unitary $\widetilde{V}$ can be realized without any interactions between the system and ancilla.   However, this does not necessarily mean that the ancilla is useless.  Indeed, 
it is conceivable that the ancilla has interacted with the system, but at the end it has become decoupled from the system again. 
  In general, even for  Hamiltonians $\widetilde{H}$ that involve interactions between ancilla qubit $a$ and the main system, at certain moments of time $t$ the overall unitary  
transformation $\exp(-i t \widetilde{H})$ can be in the form $V\otimes I_a$.  Now, here is the key point:  a priori,  it is not clear that unitaries that can be realized in this way can also be obtained  using realizable Hamiltonians that do not contain system-ancilla interactions, i.e., those in the form $\widetilde{H}=H\otimes I_a$ with $i H\in \mathfrak{v}_2$. This suggests that, even though $ \widetilde{V}=V\otimes I_a$, the presence of ancilla still  might be useful and indeed might be sufficient to achieve  universality.

To address this question we use the notion of  $l$-body phases. Let $\widetilde{\Phi}_l$ be the the $l$-body phase of unitary $\widetilde{V}=V\otimes I_a$. 
 From the the results of Sec.\ref{App:const}, we know that if   $\widetilde{V}=V\otimes I_a$ is realizable with $k$-local rotationally-invariant unitaries, then for all even integers  $l>k$,   $\widetilde{\Phi}_l=0 \ : \text{mod } 2\pi $. On the other hand, using proposition \ref{propnew}, we know that the $l$-body phase of 
$\widetilde{V}=V\otimes I_a$ is related to the $l$-body phase of $V$, via 
\be
\widetilde{\Phi}_l= 2\Phi_l \ \  \ \ \ : \text{mod } 2\pi\ . 
\ee
This means that for generic symmetric unitary $V$ all $l$-body phases of $\widetilde{V}$ are non-zero for $l=0,\cdots, 2\lfloor n/2\rfloor $, and therefore it cannot be realized using $k$-local symmetric unitaries with $k<2\lfloor n/2\rfloor $. In fact, 
comparing this with theorem \ref{Thm:Phi} that characterizes symmetric unitaries in terms of their $l$-body phases, we find a stronger result: Roughly speaking, adding the ancilla to a system with $n$ qubits does not increase the dimension of the Lie algebra of realizable unitaries on the $n$ qubits in the main system. More precisely,   the dimension of the Lie algebra associated to unitaries $\widetilde{V}=V\otimes I_a$, where $\widetilde{V}$ is realizable with $k$-local symmetric unitaries on $n+1$ qubits, is equal to the dimension of $\mathfrak{v}_k$, the Lie algebra associated to unitaries $V$ that are realizable with $k$-local symmetric unitaries on $n$ qubits.

 \end{document}